\documentclass[11pt]{article}
\usepackage[margin=1in]{geometry}

\usepackage{graphicx} 

\usepackage{amsmath}
\usepackage{amssymb, amsthm, amsfonts, graphicx, color, subcaption, enumerate, bm, enumitem, array, mathtools} 

\title{Energy-Efficient Aggregation and Minimum-Degree \\ Spanning Trees in Radio Networks}

\author{Yi-Jun Chang\footnote{National University of Singapore. ORCID: 0000-0002-0109-2432. Email: cyijun@nus.edu.sg}  \and Yang Ze Guan\footnote{National University of Singapore. ORCID: 0009-0008-0426-5926. Email: guanyangze@gmail.com}}

\date{}

\usepackage[style=trad-alpha,natbib=true,maxcitenames=4]{biblatex}
\usepackage{algpseudocode,algorithm}
\usepackage[colorinlistoftodos,prependcaption,textsize=tiny]{todonotes}

\makeatletter
\AddToHook{cmd/appendix/before}{\def\cref@section@alias{appendix}\def\cref@subsection@alias{appendix}}
\makeatother

\newcommand{\LOCAL}{\mathsf{LOCAL}}
\newcommand{\CONGEST}{\mathsf{CONGEST}}
\newcommand{\ID}{\operatorname{ID}}
\newcommand\polylog{\operatorname{polylog}}
\newcommand\poly{\operatorname{poly}}

\newcommand{\msg}{\mathsf{msg}}

\newcommand{\Vcal}{\mathcal{V}}

\newcommand{\OPT}{\mathsf{OPT}}

\newcommand{\Radio}{\mathsf{RADIO}}

\newcommand{\col}{\mathsf{Color}}
\newcommand{\upcast}{\mathsf{Up}\text{-}\mathsf{cast}}
\newcommand{\downcast}{\mathsf{Down}\text{-}\mathsf{cast}}
\newcommand{\merge}{\mathsf{Merge}}
\newcommand{\sr}{\mathsf{Local}\text{-}\mathsf{broadcast}}

\newcommand{\amc}{\mathsf{Across}\text{-}\mathsf{matching}\text{-}\mathsf{comm}}
\newcommand{\acount}{\mathsf{Approx}\text{-}\mathsf{count}}
\newcommand{\ltest}{\mathsf{Loneliness}\text{-}\mathsf{test}}

\newcommand{\matching}{\mathsf{match}}

 \newcommand{\Gactive}{G_{\mathsf{active}}}

\newcommand{\congestion}{\mathsf{congestion}}
\newcommand{\dilation}{\mathsf{dilation}}
\newcommand{\energy}{\mathsf{energy}}

\renewcommand{\emptyset}{\varnothing}

\renewcommand{\Pr}{\mathop{\textnormal{Pr}\/}}
\newcommand{\E}{\mathop{\mathbb{E}\/}}

\renewcommand{\S}{\mathcal{S}}
\newcommand{\R}{\mathcal{R}}

\renewcommand\E{\mathbb{E}}

\usepackage{appendix}
\usepackage{hyperref}  

\usepackage{cleveref}

\usepackage{thm-restate}

\newtheorem{theorem}{Theorem}

\newtheorem{lemma}{Lemma}[section]
\newtheorem{claim}[lemma]{Claim}

\newtheorem{observation}[lemma]{Observation}

\crefname{appendix}{appendix}{appendices}
\Crefname{appendix}{Appendix}{Appendices}

\begin{document}

\maketitle

\begin{abstract}
We study the aggregation problem in synchronous multi-hop radio networks with
$O(\log n)$-bit messages and no collision detection.
Each node initially holds a value, and the goal is to compute a global aggregate
such as the sum of all values.
Aggregation tasks arise naturally in wireless sensor networks, where nodes are
often battery-powered and radio activity is the dominant source of energy
consumption.
Accordingly, our main objective is to minimize the \emph{energy complexity},
defined as the maximum number of rounds in which any node is awake.

Our main result is a randomized distributed algorithm that, with high probability,
constructs and executes an aggregation schedule in $O(n \polylog n)$ rounds and
using $O(\Delta^\ast \polylog n)$ energy, where $\Delta^\ast$ is the minimum
possible maximum degree of a spanning tree of the network graph.
This guarantee is nearly optimal: for any aggregation schedule and any graph,
there exists a node that must be awake for at least $\Delta^\ast$ rounds.

As a by-product, the algorithm also computes a spanning tree whose maximum degree
is within an $O(\log n)$ factor of $\Delta^\ast$, with the same round and energy
guarantees.
For every tree edge, both endpoints learn that the edge belongs to the tree.
\end{abstract}

 \thispagestyle{empty}
 \newpage
\thispagestyle{empty}
\tableofcontents
\newpage
\pagenumbering{arabic}

\section{Introduction}

\paragraph{Radio network model.}
We consider a synchronous multi-hop radio network model~\cite{chlamtac1985broadcasting,chlamtac1987tree}, where the network is modeled as an undirected graph
$G=(V,E)$ with $|V|=n$.
Time is divided into discrete rounds.
In each round, every node may either \emph{transmit}, \emph{listen}, or \emph{sleep}.
A node that sleeps consumes no energy in that round, and a node cannot transmit and listen simultaneously.

When a node $u$ transmits a message $m$ in a given round, the message $m$ is received by all neighbors
$v \in N(u)$ that are listening in that round, provided that no other neighbor of $v$
transmits simultaneously.
We emphasize that a node cannot send different messages to different neighbors.
We assume that the network does not support collision detection: if two or more neighbors of a listening node transmit in the same round, the node receives nothing and cannot distinguish a collision from silence.

Let $\Radio[b]$ denote the radio network model with $b$-bit messages, where $\Radio[O(\log n)]$ is often known as the $\Radio$-$\CONGEST$ model and $\Radio[\infty]$ is often known as the $\Radio$-$\LOCAL$ model.
All our algorithms work in the $\Radio$-$\CONGEST$ model.

There are two main complexity measures of a distributed algorithm in the radio network model.
The \emph{round complexity} of an algorithm is the number of communication rounds needed by the algorithm.
The \emph{energy complexity} of an algorithm is the maximum energy cost of a node, where the energy cost of a node is defined as the number of rounds in which it is awake (i.e., either transmitting or listening).

\paragraph{Energy-aware distributed computing.}
Energy efficiency is a fundamental concern in wireless sensor networks and has received extensive attention in networking and systems research. Many commercial sensor devices support multiple operating modes, typically including active and sleep states, with energy consumption in sleep mode being significantly lower~\cite{taniguchi2012energy}. For example, smart meters and distributed grid sensors can remain in a low-power state when energy consumption patterns are stable, and wake up only to report significant deviations or at predefined reporting intervals.

A substantial body of applied research has focused on exploiting such sleep modes to enhance energy efficiency. To prolong network lifetime, a common strategy is to dynamically schedule sensors’ active and sleep periods. Numerous energy-aware scheduling mechanisms have been proposed to satisfy application-specific requirements while reducing overall energy usage. In cluster-based networks, for instance, cluster heads are often selected to minimize total energy consumption, and this role is periodically rotated among nodes to balance energy expenditure. A comprehensive survey can be found in~\cite{wang2006survey}.

It is noteworthy that listening to the communication channel costs energy even if no message is heard: This is known as idle listening, which has been identified as a major reason for energy loss~\cite{akyildiz2007survey}.

\paragraph{Aggregation task.}
Each node initially holds a value, and the goal of the aggregation task is to compute a single global value that summarises all node values.
We focus on \emph{decomposable aggregation functions}, which capture many natural operations such as count, sum, minimum, and maximum.
Formally, an aggregation function $f$ maps a multiset of values to a single output.
We say that $f$ is \emph{decomposable} if there exists an associative operator $\diamond$ such that, for any two disjoint multisets $X$ and $Y$,
\[
f(X \uplus Y) = f(X) \diamond f(Y),
\]
where $\uplus$ denotes multiset union.
Throughout the paper, we assume that all input values, intermediate aggregates, and final outputs can be represented within a single message of $O(\log n)$ bits.

With a slight abuse of notation, we write $f(v)$ to denote the result of applying $f$ to the value held by node $v \in V$, and $f(S)$ to denote the result of applying $f$ to the multiset of values held by the nodes in $S \subseteq V$.

\paragraph{Aggregation schedules.}
A common way to perform aggregation is to select a spanning tree $T$ rooted at a node $r$ and carry out a convergecast along $T$.
Each node waits until it has received partial aggregates from its children, combines them with its own value using the operator $\diamond$, and forwards the resulting aggregate to its parent.
At the end of this process, the root $r$ obtains the correct aggregate value $f(V)$.

More generally, an \emph{aggregation schedule} specifies, for each round and each node, whether the node transmits a message, listens for messages, or remains idle, as well as the message it sends when it is active. At the end of the process, some node obtains the correct aggregate value $f(V)$.
We require that the pattern of sending, listening, and idling across rounds is fixed in advance and does not depend on the aggregation function $f$ or on the input values.
Furthermore, whenever a node sends a message, that message should be formed by combining some of the messages it has received from its neighbors, optionally incorporating its own value, using the associative operator~$\diamond$.

\paragraph{Energy-optimal schedules.}
When aggregation is carried out along a spanning tree $T$, a key factor determining the energy cost is the maximum degree of $T$.
Intuitively, a node of high degree must receive messages from many neighbors and therefore needs to remain awake in many rounds.
Let
\[
\Delta^\ast = \min\{ \Delta(T) \mid T \text{ is a spanning tree of } G \},
\]
where $\Delta(T)$ denotes the maximum degree of $T$.
In other words, $\Delta^\ast$ is the maximum degree of a \emph{minimum-degree spanning tree} of $G$.

Fix a spanning tree $T$ with $\Delta(T)=\Delta^\ast$.
There is a simple aggregation schedule based on $T$ that achieves energy $O(\Delta^\ast)$ using $O(n)$ rounds.
For instance, we can assign each tree edge a dedicated time slot according to a post-order traversal and let each node transmit exactly once to its parent after it has collected all aggregates from its children.
Since each edge of $T$ is used once, the total number of rounds is $n-1$, and each node is awake for $\deg_T(v) \leq \Delta^\ast$ rounds.

Thus, there exists an aggregation schedule whose energy matches the smallest possible maximum degree among all spanning trees of $G$.
We now show that $\Delta^\ast$ is not merely achievable by schedules based on minimum-degree spanning trees, but is in fact a \emph{universal lower bound} that applies to \emph{every} aggregation schedule.
This bound is universal in the sense that it holds for every graph topology, rather than only for carefully constructed worst-case graph families.

\begin{observation}[Universal energy lower bound for aggregation]
\label{thm:universal-lb}
For any aggregation schedule, there exists a node that is awake for at least $\Delta^\ast$ rounds.
\end{observation}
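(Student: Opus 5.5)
Given an aggregation schedule, I will extract from its successful message receptions a spanning tree $T$ of $G$ in which every node $v$ has $\deg_T(v)$ at most the number of rounds in which $v$ is awake. Then $\Delta^\ast \le \Delta(T) \le \max_v(\text{awake rounds of } v)$, which is the claim.

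\textbf{Information flow.} Let $r$ be the node that outputs $f(V)$. Because the communication pattern is fixed in advance and independent of the inputs, each successful reception is a fixed triple (round $t$, sender $u$, receiver $v$), with $u$ transmitting, $v$ listening, and no other neighbor of $v$ transmitting in round $t$. I take a reception of $v$ to mean a round in which $v$ listens and exactly one neighbor transmits. Say that $w$'s value \emph{reaches} $r$ if there is a time-respecting chain of receptions $w=x_0\to x_1\to\dots\to x_k=r$ in rounds $t_1<t_2<\dots<t_k$.

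\textbf{Correctness forces reachability.} I first argue that every $w$ reaches $r$. The key point is that messages are formed only by applying $\diamond$ to received messages and the sender's own value. Hence by induction on rounds, the content held by any node at any time is a function of the values of only those nodes whose values have reached it. If some $w$ never reaches $r$, choose $f$ to be, say, the sum. Changing $w$'s input then changes $f(V)$ without changing anything $r$ computes, so $r$ cannot be correct.

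\textbf{Building the tree.} For each node $w\ne r$, define $\text{first}(w)$ to be the earliest round in which $w$ transmits and that transmission is received by some node $p$ that itself lies on a time-respecting chain to $r$ starting after that round. A cleaner way to get the same structure is to build backwards from $r$. Process the receptions in decreasing order of round. Maintain a set $S$, initially $\{r\}$, of nodes whose information at the current time reaches $r$. When processing a reception $u\to v$ in round $t$ with $v\in S$ and $u\notin S$, add $u$ to $S$ and make $v$ the parent of $u$. By the reachability claim, every node is eventually added, so this yields a spanning tree $T$ rooted at $r$.

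\textbf{Degree charging.} Each tree edge is charged to a distinct awake round at each endpoint. For the child $u$, the edge $(u,v)$ is charged to a round in which $u$ is awake (transmitting), and $u$ gets exactly one parent. For the parent $v$, the edge is charged to a round in which $v$ listens. Distinct children of $v$ correspond to distinct rounds, because in any round $v$ receives from at most one neighbor, since a collision means nothing is received. The child-edge rounds of $v$ are listening rounds, while its own parent edge is charged to a transmitting round, so all these rounds are distinct. Therefore $\deg_T(v)$ is at most the number of rounds in which $v$ is awake.

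\textbf{Main obstacle.} I expect the hardest step to be making the reachability argument rigorous, in particular justifying that a fixed pattern together with $\diamond$-only message formation prevents information from leaking by other means. One candidate leak is the absence of a message. This is ruled out because the pattern is fixed in advance and there is no collision detection. If needed, I would formalize this by saying that each node's state is determined by the multiset of input values it has transitively received, and then choose inputs that differ only at $w$.
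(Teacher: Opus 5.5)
Your proof is correct and follows the paper's argument. Both form the graph of successful receptions and use correctness to show every node reaches the output node $r$. Both then take a spanning tree oriented toward $r$ and charge each child of $v$ to a distinct listening round (at most one reception per round) and $v$'s parent edge to a transmitting round.

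The only difference is how the tree is extracted. You use a backward-in-time sweep along time-respecting chains, whereas the paper takes any in-arborescence of the static reception digraph $G^\rightarrow$; that suffices because the charging never uses time order. Your explicit justification of reachability (choose $f$ to be the sum and induct on rounds) usefully expands the paper's one-line claim.
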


\begin{proof}
Consider an arbitrary aggregation schedule and construct a directed graph $G^\rightarrow$ on the node set $V$ as follows.
For every successful message transmission from a node $u$ to a node $v$ at any time during the execution, add a directed edge $u \rightarrow v$ to $G^\rightarrow$.
By construction, the energy cost incurred by a node due to listening is at least its indegree in $G^\rightarrow$.

At the end of the schedule, some node $r$ has obtained the correct aggregate value $f(V)$.
It follows that every node can reach $r$ in $G^\rightarrow$, since otherwise the initial value of some node could not influence the final aggregate at $r$.
Therefore, there exists a directed spanning tree of $G^\rightarrow$ rooted at $r$ in which all edges are oriented toward the root. 

Let $v$ be a node of maximum degree in $T$, so that $\deg_T(v) = \Delta(T)$. If $v \neq r$, then it must listen in at least $\deg_T(v)-1$ rounds to receive messages from its children and transmit at least once to its parent;
if $v = r$, then it must listen in at least $\deg_T(v)$ rounds to receive messages from all its children.
In either case, $v$ is awake for at least $\Delta(T)$ rounds.

Finally, since $\Delta^\ast$ is the minimum possible maximum degree over all spanning trees of $G$, we have $\Delta(T)\ge \Delta^\ast$.
Therefore, some node is awake for at least $\Delta^\ast$ rounds, completing the proof.
\end{proof}

The tree-based aggregation schedule, together with \Cref{thm:universal-lb}, shows that a universally energy-optimal aggregation schedule exists.
However, such a schedule relies on the availability of a minimum-degree spanning tree, or the ability to compute one efficiently.
It remains unclear how to compute such a tree, or even a good approximation to it, in the radio network model.
Moreover, even given a minimum-degree spanning tree, it is not obvious how to transform it into an aggregation schedule that is itself energy-efficient.

The goal of this paper is to bridge this gap by designing an efficient distributed algorithm that simultaneously constructs a nearly optimal aggregation schedule and an $O(\log n)$-approximation to a minimum-degree spanning tree.

\subsection{Our Contribution}

Our main result is a nearly universally energy-optimal algorithm for aggregation schedule in radio networks with $O(\log n)$-bit messages. Throughout the paper, we say that an event occurs \emph{with high probability} if it occurs with probability $1 - 1/\poly(n)$, where the exponent in $\poly(n)$ can be arbitrarily high.

\begin{theorem}[Energy-efficient aggregation]\label{thm-main1}
There exists a randomized algorithm that computes an aggregation schedule such that, with high probability, both the construction of the schedule and its execution take $O(n \polylog n)$ rounds and using $O(\Delta^\ast \polylog n)$ energy in the $\Radio$-$\CONGEST$ model.
\end{theorem}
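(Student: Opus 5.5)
We will build the schedule by a distributed version of the Fürer--Raghavachari local-improvement idea, implemented as a Borůvka-style fragment merging procedure. The energy accounting is charged against a \emph{dual witness} for $\Delta^\ast$. The witness we use is the standard one: if removing a set $S$ of nodes leaves at least $k$ components, then every spanning tree has a node in $S$ of degree at least $(k+|S|-1)/|S|$. Our target is a spanning tree $T$ with $\Delta(T)=O(\Delta^\ast\log n)$, computed with energy $O(\Delta^\ast \polylog n)$. Once $T$ is known, with both endpoints of every tree edge aware of the edge, we execute aggregation as a convergecast. Rather than the sequential post-order schedule, whose length is $n-1$ but which is not locally computable, we first compute a heavy-light or rake-and-compress decomposition of $T$. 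This yields $O(\log n)$ phases, each of $O(n)$ rounds. In each round a node wakes only for slots assigned to its incident tree edges, which gives energy $O(\deg_T(v)\polylog n)$ and round complexity $O(n\polylog n)$.

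The construction proceeds in $O(\log n)$ merging phases. We maintain a forest of fragments, each with a leader and an internal tree whose degree is bounded by a current threshold $d$. The threshold $d$ is guessed by doubling, $d=1,2,4,\dots$, so each guess costs only a $\log n$ factor.

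In a phase, every fragment looks for outgoing edges whose endpoints both have tree degree at most $d$, and merges along such edges. To keep degrees bounded, we arrange the merges so that the fragment graph forms stars or matchings: fragments flip coins, and tails attach to heads through a single edge. Each node then acquires $O(1)$ new tree edges per phase, and the number of fragments shrinks geometrically. If no low-degree outgoing edge exists, every fragment boundary is blocked by nodes of degree $\ge d$. In that case the witness argument, applied to the set of saturated nodes, certifies $\Delta^\ast=\Omega(d/\log n)$, and we double $d$. A node detects low-degree neighbours in other fragments through $\sr$-style randomized transmissions with decay over $O(\log n)$ probabilities. Each attempt costs a node $O(\polylog n)$ energy, and only nodes of degree at most $d$ participate.

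Intra-fragment coordination (electing an edge, broadcasting the decision) is done by $\upcast$ and $\downcast$ along the fragment tree. These use the same decomposition-based schedules, so a node's energy per operation is $O(\deg(v)\polylog n)=O(d\polylog n)$. Summing over $O(\log^2 n)$ phase and threshold combinations gives $O(\Delta^\ast\polylog n)$ energy. Each phase takes $O(n\polylog n)$ rounds, because fragment sizes sum to $n$ and convergecasts in different fragments run in parallel.

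The main obstacle is the blocking case: proving rigorously that failure to find a low-degree merge edge certifies $\Delta^\ast \ge \Omega(d/\log n)$. The difficulty is that in Fürer--Raghavachari the improvement step is a global path swap, not just a merge. Our first attempt will be to use only merges, with the guarantee that tree degrees stay within a constant factor of $d$. We then argue that if many fragments remain but all crossing edges touch nodes of degree $\Theta(d)$, those saturated nodes number at most $2n/d$. Yet they separate the graph into more components than they could possibly connect with degree below $d/O(\log n)$, and the $\log n$ loss comes from the number of phases. If this counting fails, we will fall back to a distributed swap step, re-rooting a fragment to free a saturated node's edge. This step is where most of the difficulty should lie.
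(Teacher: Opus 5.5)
There is a genuine gap exactly where you suspect it, and it is a wrong mechanism rather than a missing detail. The blocking certificate does not follow from a merges-only process. The F\"urer--Raghavachari witness needs distinct components of $F-S$ ($F$ the current forest, $S$ the saturated nodes) to have no $G$-edges between them outside $S$. In their algorithm this comes from the absence of improving \emph{swaps}, whereas merge-local-optimality constrains only edges \emph{between} fragments. Concretely, take two halves, each a path $a_1\cdots a_m$ plus a hub adjacent to every $a_i$, joined by a single edge at one hub. If each half is a fragment forming a star around its hub with $m=\Theta(d)$, every crossing edge touches a saturated node, yet $G-S$ has only two components and $\Delta^\ast=2$.

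Your progress claim also fails on its own. If each node gains $O(1)$ tree edges per phase, the center of a star graph forces $\Omega(n)$ phases. If instead heads may accept arbitrarily many tails, degrees are uncontrolled: the hub of a wheel, where $\Delta^\ast=2$, gains $\Theta(n)$ edges in one phase.

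The missing idea is to cap only the \emph{new} edges per iteration, via a $d$-almost-maximal $(1,2d)$-component--node matching between randomly colored red clusters and blue nodes, and to use an optimal tree itself as the witness. Take one of its inter-cluster edges per cluster and keep it when the colors are right; this is a $(1,\Delta^\ast)$-component--node matching of expected size $|\mathcal{V}|/4$, and almost-maximality loses only a factor $2$ (\Cref{maxsize,progress}). Hence for $d\ge\Delta^\ast$ the clusters shrink geometrically, the tree degree is $O(d\log n)$ after $O(\log n)$ iterations, and $d$ is doubled on a failed $\ltest$ rather than on any certificate. Computing that capacitated matching with $O(d\polylog n)$ energy (not $O(\Delta\polylog n)$) and only $O(\polylog n)$ intra-cluster synchronizations is the paper's main technical work. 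It uses batched big-rounds, an unmatching phase, a large/small cluster second-moment analysis, and a $\Theta(d/\log n)$ sampling boost for large $d$. Your decay-based neighbor detection does not address any of this.

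The schedule extraction is also unsupported. A heavy-light decomposition needs subtree sizes, i.e.\ a convergecast over a tree of possibly linear depth, which is the task itself. Rake-and-compress creates virtual edges over long paths that must themselves be routed. Moreover, ``a node wakes only for slots assigned to its incident tree edges'' ignores radio interference: a listener may have $\Delta\gg\Delta^\ast$ non-tree neighbors that must stay silent in its slots, and you give no low-energy way to compute such a conflict-free assignment. Using these schedules for intra-fragment up/down-casts is circular for the same reason; the paper's $\upcast$/$\downcast$ need only a good labeling and $O(\polylog n)$ energy.

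The paper never extracts a schedule from the final tree. It builds slot-based schedules recursively along the Bor\r{u}vka hierarchy: red children first, then $\amc$ across the star, then the blue child, then $\downcast$. Large children run sequentially and small ones with random delays, so the per-level overhead stays a small fraction of the parent's length. Per-slot congestion lies in the class $\mathcal{Z}$ and is therefore $O(\log n)$ with high probability (\Cref{lem:Zd_bound}), and slots are converted to rounds with polylogarithmic overhead (\Cref{lem-transform}).
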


While our definition of an aggregation schedule only requires a single node to obtain the final result, this result can be broadcast to all nodes in $O(n \polylog n)$ rounds and $O(\polylog n)$ energy with high probability~\cite{chang2018energy}.
This additional step does not affect the overall asymptotic round or energy complexities stated in \Cref{thm-main1}. In fact, our aggregation schedule behind \Cref{thm-main1} already disseminates the final result to all nodes.

As an additional outcome of the algorithm in \Cref{thm-main1}, we obtain an efficient distributed algorithm for computing an approximately minimum-degree spanning tree.

\begin{theorem}[Energy-efficient minimum-degree spanning tree]\label{thm-main}
There exists a randomized algorithm that computes a spanning tree $T$ with maximum degree
$\Delta(T) \in O(\Delta^\ast \log n)$ with high probability in $O(n \polylog n)$ rounds and using
$O(\Delta^\ast \polylog n)$ energy in the $\Radio$-$\CONGEST$ model.
\end{theorem}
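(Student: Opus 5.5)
The plan is to simulate a Borůvka-style merging process in the spirit of the sequential Fürer--Raghavachari local-improvement scheme for minimum-degree spanning trees. The key certificate is the standard witness lower bound: if removing a set $W$ of $k$ nodes from $G$ leaves at least $c$ connected components, then every spanning tree has a node of degree at least $(c-1)/k$. Hence $\Delta^\ast \ge (c-1)/k$.

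The algorithm runs $O(\log n)$ phases. It maintains a forest of fragments, and each fragment has a leader and an internal spanning tree. In each phase, every fragment tries to merge with neighbouring fragments by selecting inter-fragment edges. The selection must be load-balanced: each node accepts at most $O(\Delta^\ast)$ new incident merge edges per phase.

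To achieve this without knowing $\Delta^\ast$, we guess $\Delta' = 2^i$ and double the guess on failure. We then solve a bipartite $b$-matching-type problem between fragments and candidate boundary nodes with capacity $\Delta'$ per node, each fragment needing only one outgoing edge. Selecting a $\Delta'$-bounded set of edges that cuts the number of fragments by a constant factor adds $O(\Delta')$ to each node's degree per phase. Over $O(\log n)$ phases this gives $\Delta(T)=O(\Delta^\ast \log n)$.

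The central step, and the one I expect to be the main obstacle, is the structural claim behind these matchings. If, under capacity $\Delta'$, fewer than a constant fraction of fragments can be matched, then the saturated nodes $W$ form a witness. Removing $W$ separates the unmatched fragments into many components, so $\Delta^\ast \ge \Omega(\Delta')$. I would prove this with a Hall/augmenting-path argument on the fragment--node bipartite graph. After a bounded number of augmentation rounds, the set of nodes reachable from unmatched fragments is saturated, and its size $k$ satisfies $k\Delta' \le$ (number of fragments attached to it). This gives $c/k \ge \Delta'/O(1)$, so the guess $\Delta'$ is only increased while it is $O(\Delta^\ast)$. Since we only need constant-factor progress, truncated augmentation ($O(\log n)$ layers) suffices, in the style of Hopcroft--Karp.

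The remaining work is to implement each phase in $\Radio$-$\CONGEST$ with $O(n\polylog n)$ rounds and energy $O(\Delta^\ast \polylog n)$ per node. Within each fragment, leader information is spread and aggregates are collected by upcast and downcast on the fragment tree. Each edge of this tree is given its own time slot, so a node of tree degree $d$ spends $O(d\polylog n)$ energy, and $d=O(\Delta^\ast\log n)$ by the degree invariant.

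Communication across fragment boundaries (proposals, acceptances, and augmenting-path steps) is handled differently. A node may have very large degree in $G$, so it cannot listen to all its neighbours. Instead we use randomized decay-style local broadcast with capacity caps: a boundary node listens only for $O(\Delta'\polylog n)$ rounds and uses approximate counting to detect saturation. Contention among many proposers to one node is resolved by random backoff. The probability analysis shows that with high probability each capacity slot is filled while only $O(\polylog n)$ energy is spent per proposer.

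Summing over $O(\log n)$ phases, $O(\log n)$ guesses, and $O(\log n)$ augmentation layers gives the claimed bounds. The final tree edges are exactly the accepted merge edges, and both endpoints learn about each one at the handshake.
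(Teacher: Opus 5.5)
Your outer framework (Bor\r{u}vka-style merging, one capacity-bounded fragment--node matching per phase, a doubling guess $\Delta'$, and $O(\log n)$ phases each adding $O(\Delta')$ to a node's degree) matches the paper's. The gap is exactly where the paper does its main work: you never give an algorithm that computes the per-phase matching in $O(n\polylog n)$ rounds with $O(\Delta^\ast\polylog n)$ energy.

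The obstacle is this. A fragment may keep only one edge, but a fragment can have diameter $\Theta(n)$, so each intra-fragment coordination step costs $\Theta(n\polylog n)$ rounds.
\begin{itemize}
\item If fragments coordinate after every proposal/acceptance trial, the round count is multiplied by the number of trials. In a decay/degree-reduction process that can be $\Theta(\Delta\log n)$.
\item If they do not coordinate, several nodes of one fragment get accepted by different targets. Cancelling the extras frees capacity at those targets, so neither ``each capacity slot is filled'' nor maximality follows from decay with random backoff.
\end{itemize}
The paper resolves this in three steps. It batches many uncoordinated trials and then runs a single unmatching step per red cluster. It then proves, via the small/large red-cluster split and a second-moment bound, that within $O(\log n)$ executions every active red cluster becomes small or inactive. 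After that, a high-degree blue node gains an edge that survives unmatching with constant probability per execution. For large $d$ it also boosts the red sampling rate to $d'=\Theta(d/\log n)$ and uses the slack between $d$ and $2d$, so blue nodes never need their own repair step. Your sentence about ``the probability analysis'' stands in for all of this. Adding $O(\log n)$ Hopcroft--Karp augmentation layers only makes it harder: every augmenting path passes through fragments, so each step needs intra-fragment communication, and path disjointness must be enforced. Neither is described.

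The witness and augmentation machinery is also unnecessary. If $\Delta'\ge\Delta^\ast$, restrict an optimal tree to inter-fragment edges and let each fragment pick one incident edge. This gives a matching that covers every fragment and loads each node at most $\Delta^\ast$ times. Any maximal matching (or the paper's $d$-almost-maximal $(1,2d)$-matching) is within a factor $2$ of it by a direct charging argument, so no augmenting paths are needed. Separately, your claim that after \emph{truncated} augmentation all nodes reachable from unmatched fragments are saturated holds only when no augmenting path of any length remains.

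More seriously, acyclicity is not ensured. If every fragment chooses one outgoing edge, the chosen edges can close cycles among fragments, so the accepted merge edges need not form a tree. You need symmetry breaking such as the paper's random red/blue coloring of clusters: only red clusters take an edge, and only to nodes of blue clusters, so every merge component is a star.

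Two smaller loose ends remain. You do not say how fragment-tree edges get globally distinct time slots. You also do not say how a failed guess is detected without global aggregation; the paper simply runs $O(\log n)$ iterations and then a loneliness test.
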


An important feature of the algorithm in \Cref{thm-main} is that, for every tree edge $e=\{u,v\}\in T$, \emph{both} endpoints learn that $e$ belongs to $T$, meaning that the knowledge of each tree edge is \emph{bidirectional}.
In particular, regardless of which spanning tree is produced, this implies that some node must learn the identifiers of at least $\Delta^\ast$ distinct neighbors.
Informally and intuitively, unless time-encoding techniques are used, this information-theoretically requires $\Omega(\Delta^\ast)$ awake rounds in the $\Radio$-$\CONGEST$ model.

For both \Cref{thm-main1} and \Cref{thm-main}, the round complexity $O(n \polylog n)$ is also \emph{existentially optimal} up to polylogarithmic factors.
Indeed, even in a star graph, accomplishing either task requires the center node to successfully communicate with each of its $n-1$ neighbors, which necessarily takes $\Omega(n)$ rounds, even in the $\Radio[\infty]$ model.

\subsection{Related Work}

The study of energy-aware distributed algorithms dates back to 2000~\cite{nakano2000randomized}. Early theoretical work~\cite{BenderKPY16,ChangKPWZ17,JurdzinskiKZ02podc,lavault2007quasi,nakano2000randomized} primarily focused on single-hop radio networks, where all devices communicate over a shared channel, i.e., the communication graph forms a clique. Beginning around 2018, this line of research expanded to multi-hop radio networks with arbitrary topologies, addressing fundamental tasks such as broadcasting~\cite{chang2018energy}, breadth-first search~\cite{Chang20bfs,dani2022wake}, and maximal matching~\cite{dani2021wake}. More recently, the scope has broadened further to encompass the $\LOCAL$ and $\CONGEST$ models~\cite{augustine2022distributed,barenboim2021,ChatterjeeGP20,dani2022wake,dufoulon2023distributed,ghaffari2022average,ghaffari2023distributed,ghaffari2024near}, covering a wide range of distributed graph problems, including maximal independent set, maximal matching, shortest paths, and minimum spanning tree.

The MDST problem has also been extensively studied, initially in the sequential setting, where it is known to be NP-hard. Consequently, much of the literature focuses on approximation algorithms. A landmark result by Fürer and Raghavachari~\cite{mdst-seq} shows that an additive-one approximation can be achieved in polynomial time, producing a spanning tree of maximum degree $\Delta^\ast + 1$. The running time of this classical algorithm has been improved very recently~\cite{BhattacharyaFW26}. In the distributed domain, Blin and Butelle~\cite{mdst-first} adapted the Fürer--Raghavachari algorithm to an asynchronous distributed model. Dinitz, Halld{\'o}rsson, Izumi, and Newport~\cite{mdst} developed an $O(\log n)$-approximation algorithm for MDST in the $\CONGEST$ model running in $\tilde{O}(D+\sqrt{n})$ rounds. In the same paper, they also showed an improved deterministic algorithm achieving maximum degree $O(\Delta^\ast + \log n)$ within the same round complexity, along with a matching $\tilde{\Omega}(\sqrt{n})$ lower bound.

To the best of our knowledge, there is no prior work in algorithms and theory on the MDST problem in the $\Radio$ model. While some previous works study other forms of spanning tree construction, such as BFS trees~\cite{Chang20bfs,ghaffari2016near} or arbitrary spanning trees for broadcasting messages to the entire network~\cite{chang2018energy,czumaj2021exploiting,haeupler2016faster}, these results are not directly comparable to ours because they address different objectives.

BFS in the $\Radio$ model is well studied. It is known~\cite{ghaffari2016near} that BFS can be constructed in $O(D \polylog n)$ rounds, where $D$ is the diameter of the network. More recent works~\cite{Chang20bfs,dani2022wake} also show that this can be achieved with $O(\polylog n)$ energy. However, these algorithms are not suitable for our purposes: they do not minimize the maximum degree of the tree, and therefore do not yield energy-efficient aggregation algorithms. Another key difference is that, in these BFS algorithms, knowledge of the tree edges is one-sided: children know their parent, but not vice versa. In contrast, our MDST algorithm provides a two-sided guarantee, where each tree edge is known to both endpoints.

The notion of universal optimality, introduced in~\cite{Garay1998}, has attracted significant attention in recent years within distributed computing. In the $\CONGEST$ model, the low-congestion shortcut framework has enabled algorithms with almost universally optimal round complexity for several fundamental problems, including minimum spanning tree, $(1+\epsilon)$-approximate single-source shortest paths, and $(1+\epsilon)$-approximate minimum cut. Recent works achieves a round complexity of $\tilde{O}(\OPT)$ when the network topology is known in advance~\cite{universally_optimal_stoc2021}, or $\poly(\OPT)$ without this assumption~\cite[Theorem~1.4]{haeupler2022hop}.

Data aggregation is a fundamental primitive in wireless sensor networks, underpinning applications such as environmental monitoring, and has therefore been widely studied in applied research~\cite{beaver2004location,intanagonwiwat2002impact,madden2002tag,silberstein2006constraint}. It also plays a central role in theoretical distributed computing, particularly in the design of universally optimal algorithms; see, e.g.,~\cite{chang2024universally,DBLP:conf/stoc/RozhonGHZL22}.

\subsection{Roadmap}
In \Cref{sec:techoverview}, we present an overview of our proofs.
In \Cref{sec:matching}, we introduce component--node matchings and prove the key structural properties.
In \Cref{sec:algo}, we discuss how we maintain a clustering and introduce the basic subroutines
related to clustering.
In \Cref{sect:main-algo}, we combine these tools to obtain our main results:  energy-efficient minimum-degree spanning tree and  aggregation schedule.  
In \Cref{sect:matching-s}, we show a slower matching algorithm for component--node matchings with near-optimal energy but suboptimal round complexity.  
In \Cref{sect:matching-fast}, we show a faster matching algorithm for large $d$.  
\Cref{app:cluster-subroutines} provides the detailed algorithms and analyses for the clustering subroutines.

\section{Technical Overview}\label{sec:techoverview}

In this section, we provide an overview of our proofs.
In \Cref{sec:primitives}, we begin by reviewing a basic communication primitive for radio networks.
In \Cref{sec-overview-prior}, we review the two main ingredients from prior work.
In \Cref{sec-overview-idea}, we present the main new ideas underlying our $O(\log n)$-approximation algorithm for MDST in \Cref{thm-main}.
Finally, in \Cref{sec-overview-aggregate}, we explain how these ideas are extended to obtain the energy-efficient aggregation schedule of \Cref{thm-main1}.

\subsection{Communication Primitive}\label{sec:primitives}

We review a basic randomized communication primitive for radio networks that allows us to run many subroutines in parallel while keeping
congestion low.

\paragraph{Local broadcast.}
Consider a graph $G=(V,E)$ with maximum degree $\Delta$ and two disjoint node sets
$\S,\R\subseteq V$, where each sender $u\in \S$ holds an $O(\log n)$-bit message $m_u$.
The goal of \emph{local broadcast}, denoted by $\sr(\S,\R)$, is to ensure that every
receiver $v\in \R$ with at least one sender neighbor ($N(v)\cap \S\neq \emptyset$)
receives a message $m_u$ from \emph{at least one} neighbor $u\in N(v)\cap \S$.
The primitive does not specify \emph{which} sender succeeds when several senders are adjacent to the
same receiver; it only guarantees delivery of at least one message to each eligible receiver.

\begin{lemma}[Local broadcast]\label{lem:sr}
There exists an $O(\log \Delta \cdot \log n)$-round algorithm that accomplishes
$\sr(\S,\R)$ with high probability in the $\Radio$-$\CONGEST$ model.
\end{lemma}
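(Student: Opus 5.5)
The plan is to use the standard Decay procedure of Bar-Yehuda, Goldreich, and Itai. The algorithm consists of $C\log n$ independent \emph{epochs} for a sufficiently large constant $C$, and each epoch has $\lceil \log \Delta\rceil + 1$ rounds, indexed $i = 0, 1, \ldots, \lceil \log \Delta \rceil$. In round $i$ of an epoch, each sender $u \in \S$ independently transmits $m_u$ with probability $2^{-i}$; otherwise it stays silent. Every receiver $v \in \R$ listens in all rounds. The total round count is $O(\log\Delta\cdot\log n)$, and only $O(\log n)$-bit messages are sent, so the model constraints of $\Radio$-$\CONGEST$ are respected. No collision detection is needed, because the analysis only uses the event that exactly one neighbor transmits.

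The key step is a constant lower bound on the per-epoch success probability. Fix a receiver $v \in \R$ with $k = |N(v)\cap \S| \ge 1$. Note that $k \le \Delta$. Choose $i$ with $2^{i-1} < k \le 2^{i}$; this means $i \le \lceil\log\Delta\rceil$, so round $i$ exists. In that round, with $p = 2^{-i}$, the probability that exactly one sender neighbor of $v$ transmits is
\[
kp(1-p)^{k-1} \ge \tfrac12 (1-p)^{1/p} \ge \tfrac12\cdot\tfrac14 = \tfrac18,
\]
where the case $p=1$ (i.e., $k=1$) is handled directly, since then success occurs with probability $1$. Because $v$ listens and is in $\R$, which is disjoint from $\S$, $v$ is not itself transmitting. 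So in this event $v$ receives the message of that unique transmitting neighbor, which is some $m_u$ with $u\in N(v)\cap\S$. This is exactly what $\sr(\S,\R)$ requires.

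To finish, note that the epochs use independent randomness, so the probability that $v$ fails in all $C\log n$ epochs is at most $(7/8)^{C\log n} = n^{-\Omega(C)}$. A union bound over at most $n$ receivers then gives overall success with probability $1-1/\poly(n)$, and the exponent can be made arbitrarily large by increasing $C$.

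The only delicate point is the requirement that each node know $\Delta$, or an upper bound on it, in order to set the epoch length. I would assume $\Delta$ (or a polynomial upper bound, as with $n$) is known. If only $n$ is known, using $\lceil\log n\rceil$ rounds per epoch gives $O(\log^2 n)$ rounds instead. I expect the main obstacle to be nothing beyond this bookkeeping; the probability estimate $(1-p)^{1/p}\ge 1/4$ for $p\le 1/2$ is routine.
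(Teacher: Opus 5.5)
Your proposal is correct and follows essentially the same argument as the paper: the Decay procedure of Bar-Yehuda, Goldreich, and Itai run for $\Theta(\log n)$ phases of $O(\log \Delta)$ rounds each, a constant per-phase success probability in the round whose transmission probability matches $k=|N(v)\cap\mathcal{S}|$, and a union bound over receivers. Your indexing from $i=0$ (so $k=1$ succeeds with probability $1$) and your explicit constant $1/8$ are a slightly cleaner version of the same estimate. They also avoid the degenerate case $\Delta=1$, where the paper's choice $L=\lceil\log_2\Delta\rceil$ gives zero rounds per phase.
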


\begin{proof}
We use the standard \emph{decay} procedure~\cite{bar1992time}.
Let $L=\lceil \log_2 \Delta \rceil$.
One \emph{phase} consists of $L$ rounds indexed by $i=1,2,\ldots,L$.
In round $i$, each sender $u\in \S$ transmits its message $m_u$ independently with probability $2^{-i}$
(and otherwise sleeps), while each receiver in $\R$ listens.

Fix any receiver $v\in \R$ with $k = |N(v)\cap \S| \ge 1$.
Choose $i^\star$ such that $2^{i^\star-1} \le k \le 2^{i^\star}$.
In round $i^\star$, the probability that exactly one of the $k$ sender neighbors transmits is
\[
k\cdot 2^{-i^\star}\cdot (1-2^{-i^\star})^{k-1} \in \Omega(1).
\]
Hence, in one phase, $v$ receives some message with constant probability.
By repeating the phase $c\log n$ times (for a sufficiently large constant $c$), the failure probability
for this fixed $v$ becomes at most $n^{-\Theta(c)}$.
Applying a union bound over all $v\in \R$ yields success with high probability.
The total number of rounds is $O(\log \Delta \cdot \log n)$.
\end{proof}

\subsection{Ingredients From Prior Work}\label{sec-overview-prior}
Our algorithm combines two ingredients from prior work. The first is the
distributed approximation algorithm for the minimum-degree spanning tree
(MDST) problem in the $\CONGEST$ model due to Dinitz, Halld{\'o}rsson,  Izumi, and
Newport~\cite{mdst}. The second is an energy-efficient maximal matching
algorithm for radio networks due to Dani, Gupta, Hayes, and
Pettie~\cite{dani2021wake}. 

\paragraph{Ingredient 1: Minimum-degree spanning trees.}
The algorithm of~\cite{mdst} computes an $O(\log n)$-approximation for MDST in
$\tilde{O}(\sqrt{n}+D)$ rounds in the $\CONGEST$ model, where the $\tilde{O}(\cdot)$ notation suppresses any $O(\polylog n)$ factor. As the optimum value $\Delta^\ast$ is
not known in advance, it guesses a parameter $d$ by exponential search; once
$d \ge \Delta^\ast$, the construction succeeds with high probability. The
main task is therefore the following: assuming $d$ is an upper bound of $\Delta^\ast$,
construct a spanning tree whose maximum degree is only $O(d\log n)$.

To achieve this,~\cite{mdst} uses a Bor\r{u}vka-style cluster-merging
process. Initially, every node forms a singleton cluster. In each iteration,
many clusters are merged in parallel using inter-cluster edges, while
ensuring that no node gains too many new incident edges. If each iteration
reduces the number of clusters by a constant factor, then after $O(\log n)$
iterations only one cluster remains, and the union of all chosen edges forms
a spanning tree. Thus, the key issue in each iteration is to obtain
\emph{enough} merges while keeping the degree increase at every node under
control.

For our purposes, it is convenient to formulate the merging step using a
random red--blue coloring of the current clusters. Each cluster
independently becomes red or blue with probability $1/2$. We then seek a
set of inter-cluster edges such that each red cluster is incident to at
most one selected edge, and each blue node is incident to at most $d$
selected edges; that is, a $1$-to-$d$ matching between red clusters and
blue nodes. While~\cite{mdst} does not rely on this red--blue
formulation, it is convenient for us, as it allows us to focus on matchings over a \emph{bipartite} graph. 

Assuming $d \ge \Delta^\ast$, to see the existence of such a $1$-to-$d$ matching that can reduce the number of clusters by a constant factor, just fix any spanning tree of maximum degree at most $d$ and restrict this tree
to inter-cluster edges. Restricting further to edges whose endpoints
lie in opposite colors loses only a constant factor in expectation. Hence,
after the red--blue coloring, there is still a large feasible matching.
Therefore, if we compute a maximal $1$-to-$d$ matching, then by a standard
charging argument we obtain a constant-factor approximation to the largest
such matching, and so one iteration still merges a constant fraction of the
clusters. Repeating this for $O(\log n)$ iterations yields a spanning tree,
and since each node gains at most $d$ new incident edges per iteration, the
final maximum degree is $O(d\log n)$.

In our algorithm, we relax this matching primitive slightly. Instead of computing an
exact maximal $1$-to-$d$ matching, we compute a
$d$-almost-maximal $(1,2d)$-component--node matching. Here, each red cluster
is incident to at most one selected edge, and each blue node is incident to
at most $2d$ selected edges. In addition, every unmatched red cluster sees
only blue neighbors that already have at least $d$ selected edges, and every
blue node with fewer than $d$ selected edges sees only matched red clusters.
This slack between $d$ and $2d$ makes the matching problem significantly
easier in the radio network model, while still being sufficient for the same
$O(\log n)$-approximation framework.

Thus, once such component--node matchings can be computed efficiently, the
outer Bor\r{u}vka-style construction goes through. The main challenge is
therefore not the clustering framework itself, but the design of an
energy-efficient matching routine in the radio network model.

\paragraph{Ingredient 2: Maximal matching.} We now turn to the second ingredient, namely the energy-efficient maximal
matching algorithm of Dani, Gupta, Hayes, and Pettie~\cite{dani2021wake}. At a
high level, their result shows that even in the radio network model, one can
compute a maximal matching without paying linear energy in the maximum
degree. 

The basic idea is a randomized degree-reduction process. In each round,
every node independently decides whether to participate. If both endpoints
of an edge $e=\{u,v\}$ participate, while no other neighbor of $u$ or $v$
participates, then $e$ can be safely added to the matching. The challenge is
to choose the participation probability so that such \emph{isolated} edges
appear often enough, while keeping the amount of contention low.

To balance these two requirements,~\cite{dani2021wake} starts with a very small
participation probability, on the order of $1/\Delta$, and gradually
increases it over time until it reaches a constant. Intuitively, when the
participation probability is too small, very few edges are sampled; when it
is too large, collisions dominate. By sweeping through this range, the
algorithm guarantees that every node eventually experiences a regime in
which its local degree is well matched to the current sampling probability,
and hence it has a good chance to get matched. This yields a maximal
matching in $O(\Delta\log n)$ rounds using only
$O(\log \Delta \log n)$ energy.

We consider a variant of this procedure in a more
discrete form. Instead of changing the sampling probability continuously, we
proceed in stages with parameters
$\Delta,\Delta/2,\Delta/4,\ldots$. At the stage with parameter $d$, every
node participates with probability $\Theta(1/d)$ for $O(d\log n)$ rounds.
The guarantee of this stage is degree reduction: assuming the
current maximum degree is at most $d$, after the stage, the remaining maximum
degree drops to at most $d/2$ with high probability. Indeed, as long as a
node still has degree larger than $d/2$, each round gives it a
$\Omega(1/d)$ probability that one of its incident edges is safely added to
the matching, so $O(d\log n)$ rounds suffice to eliminate such nodes with
high probability.

\subsection{New Ideas}\label{sec-overview-idea}
Substantial new difficulties arise when we try to implement the matching algorithm of~\cite{dani2021wake} over a cluster graph, and overcoming
them is the main technical work in our matching algorithm. There are two main obstacles. 

\paragraph{Obstacle 1: Intra-cluster communication.} First, even simulating one
round of the bipartite matching process is expensive in our setting: some
nodes of the graph are now clusters, and determining whether a red
cluster should participate, whether it has a unique sampled blue neighbor,
or which edge should be kept, requires communication inside the cluster.
Such intra-cluster communication already costs $O(n\polylog n)$ rounds in
the radio network model. Thus, a direct simulation of the
$O(\Delta\log n)$-round process from~\cite{dani2021wake} would introduce an
additional multiplicative overhead of up to  $O(n\polylog n)$, which is far too
expensive for our target complexity.

\paragraph{Obstacle 2: Higher node capacity.} Second, our goal is not merely a maximal matching, but a
$d$-almost-maximal $(1,2d)$-component--node matching. A naive way to obtain
this from a maximal matching routine is to repeat the process $O(d)$ times,
allowing blue nodes with remaining capacity to continue participating. While
this is acceptable when $d$ is very small, in general it would introduce an
unacceptable extra factor of $O(d)$ in both the round and energy complexities.

\paragraph{Resolving the first obstacle: Big-rounds and unmatching phases.} We resolve the first obstacle by grouping many ordinary matching rounds into
a single \emph{big-round}. During a big-round, nodes communicate only across
inter-cluster edges, without any intra-cluster communication. Intuitively,
this lets the matching process run for many local trials before paying the
high cost of synchronizing within each cluster. As a result, the total
number of expensive intra-cluster coordination steps drops from
$O(\Delta\log n)$ to only $O(\polylog n)$.

This batching creates a new issue: within one big-round, multiple nodes of
the same red cluster may independently match to different blue nodes, since
they are no longer coordinating in real time. To restore the required
constraint that each red cluster contributes at most one edge, we add an
\emph{unmatching phase}. After the batched inter-cluster trials finish, each
red cluster locally selects one incident matched edge
to keep, and cancels the rest. Informally, we first allow the cluster to
``over-match'' in order to save rounds, and then repair this over-matching
using one carefully placed coordination step.

The main challenge is then the analysis, since the unmatching phase creates
dependencies that are absent in an ordinary matching process. To handle this, we classify red clusters as \emph{small} or \emph{large} based on the expected number of incident matched edges they create during a big-round.
We then split the analysis into two parts. First, we show that after
$O(\log n)$ big-rounds, every active red cluster becomes inactive or small. The intuition is that any large red cluster is likely to create some matched edge and hence become inactive.
Second, conditioned on all remaining active red clusters are small, we show that every high-degree blue node gets matched
within another $O(\log n)$ big-rounds. Here the key point is that when the
cluster is large, a newly created matched edge is unlikely to be
canceled in the subsequent unmatching phase. Making this argument work
requires careful probabilistic estimates, including second-moment bounds, to
cope with the fact that the relevant random events are not fully
independent.

\paragraph{Resolving the second obstacle: Adjusting the sampling rate.} We resolve the second obstacle, i.e., extending maximal matching to a
$d$-almost-maximal $(1,2d)$-component--node matching, by using different
strategies in two regimes. When $d$ is only polylogarithmic, we can afford
the simple approach: we repeat the big-round procedure $O(d)$ times, always
allowing blue nodes that still have remaining capacity to participate. This
already gives the desired guarantee at the right asymptotic cost in this
regime.

The more interesting case is when $d$ is large. At a high level, since a
blue node is allowed to accept many incident edges, one would like to
increase the sampling rate on the red side by a factor of $\Theta(d)$,
thereby creating many candidate matches in parallel without increasing the
number of big-rounds. The difficulty is that this makes the acceptance rule
at blue nodes much more complicated: unlike in ordinary matching, a blue
node may now need to accept many incident edges simultaneously, and an
overly aggressive sampling rate could force us to add yet another repair
mechanism on the blue side.

To avoid this, we use a more conservative boost. Instead of increasing the
red-side sampling probability by a factor of $\Theta(d)$, we increase it by
only $\Theta(d/\log n)$. This is still large enough to exploit the higher
capacity of blue nodes, but small enough that a Chernoff bound guarantees
that, in one big-round, each blue node receives at most $d$ incident
selected edges with high probability. In particular, we never need a second
unmatching mechanism for blue nodes. This is exactly where the slack in our
target, i.e., the gap between the saturation threshold $d$ and the capacity bound
$2d$, becomes essential: it gives enough room to absorb random fluctuations
while keeping the algorithm simple.

These ideas yields a matching routine with the right guarantees
for the outer Bor\r{u}vka-style framework: it computes a
$d$-almost-maximal $(1,2d)$-component--node matching in
$O(n\polylog n)$ rounds using $O(d\polylog n)$ energy. This allows us to prove \Cref{thm-main}.

\subsection{Aggregation Schedule}\label{sec-overview-aggregate}
It remains to explain how we turn the MDST construction into an
energy-efficient aggregation schedule to prove \Cref{thm-main1}. A natural first idea is to use the
spanning tree directly and aggregate along it. However, the two most obvious
approaches both run into difficulties in the radio network model. On the one
hand, the earlier proof of the existence of an efficient aggregation schedule uses a sequential
post-order traversal of the tree, but it is not clear how to compute such a
global ordering efficiently in a radio network. On the other
hand, a naive layer-by-layer aggregation is also problematic: the tree may
have linear depth, and even within a single layer the  contention can
be large, so the resulting schedule can easily be too slow. Thus, although a
low-degree spanning tree guarantees the existence of an efficient
aggregation schedule, extracting such a schedule algorithmically is itself a
challenging task.

Our solution is to exploit the \emph{hierarchical decomposition} already
produced by the Bor\r{u}vka-style MDST construction. Recall that the MDST
algorithm does not build the final tree all at once; instead, it repeatedly
merges clusters, thereby creating a hierarchy $\mathcal{H}$ in which each
cluster is formed from a small star of clusters from the previous level. We
use this hierarchy, rather than the final spanning tree alone, as the
backbone of the aggregation schedule. Intuitively, the hierarchy gives us a
 decomposition of the global aggregation task into smaller
subtasks, each attached to one merge step of the MDST construction.

\paragraph{High-level ideas.}
The high-level plan is recursive. For each cluster $C$ in the hierarchy, we
construct an aggregation schedule that computes the aggregate over all nodes
of $C$ and then disseminates the result to all nodes in $C$. To realize this approach, we first have each cluster compute a coarse
approximation to its size. We then prove, by induction over the hierarchy,
that every cluster $C$ of size $s$ admits an aggregation schedule of length
$O(s\polylog n)$ while using low energy per node. Since the root cluster has
size $n$, this immediately yields a global schedule of length
$O(n\polylog n)$.

Every cluster $C$ at level $\ell$
arises from a star of clusters at level $\ell-1$: one ``center'' cluster and
several ``leaf'' clusters attached to it by the matching edges chosen in the
corresponding Bor\r{u}vka step. This suggests a natural four-step recursive
construction. First, we execute the aggregation schedules of the leaf
clusters, so that each leaf computes its own aggregate. Second, we use the
matching edges of the star to transmit these partial aggregates to the
center cluster. Third, we execute the aggregation schedule of the center
cluster, now incorporating the values received from the leaves. Fourth, we
disseminate the resulting aggregate of $C$ to all nodes of $C$. Repeating
this bottom-up through the hierarchy ensures that the root cluster
eventually computes the global aggregate.

\paragraph{Technical challenges.}
The main technical difficulty is to make the above first step  efficient. In principle,
the schedules of the leaf clusters could be run either sequentially or in
parallel, but neither extreme works well. A fully sequential execution is
too slow and, more importantly, difficult to coordinate globally in a radio
network. A naive parallel execution is also problematic, because the
resulting radio contention can be excessive: if one adds random
delays independently at every level to spread out this contention, then the
overhead can accumulate across the hierarchy and lead to an extra
polylogarithmic factor per level. 

To handle this, we introduce a slightly more abstract notion of an
aggregation schedule in which time is partitioned into \emph{slots}, and
congestion is measured per slot rather than per round. This slot-based view
lets us separate two concerns. The recursive construction determines which
communication events belong to the same slot, while a later transformation
converts each slot into actual rounds. As a result, we can analyze
random delays cleanly at the level of slots, without having to commit
immediately to a detailed round-by-round execution. The eventual conversion
from slot-based schedules to standard round-based schedules incurs only a
polylogarithmic overhead.

Even in the slot-based setting, however, random delays must be used
carefully. Delaying a child schedule adds overhead proportional to the
maximum child schedule length. If we applied random delays to all
children at every level, then each level could incur a constant-factor
inflation, and over $O(\log n)$ levels this would make the total schedule
too long. We avoid this by splitting the children of each cluster into
\emph{large} and \emph{small} ones. The schedules of the large children are
run sequentially, without random delays. Random delays are used only for the
small children, whose schedules are guaranteed to be much shorter than that
of their parent. This ensures that the extra overhead introduced by delaying
the small children is only a small fraction of the parent schedule length.
In this way, the recursion remains efficient across all levels of the
hierarchy.

The remaining issue is congestion. After introducing random delays, many
child schedules may overlap, so we need a way to show that no slot becomes
too crowded. For this purpose, we analyze the congestion in each slot using
a recursively defined family of random variables that captures exactly how
overlaps arise across the hierarchy. We show that every such random variable
has expectation at most $1$ and satisfies a strong concentration bound:
with high probability, every
slot in the resulting randomized schedule has congestion at most
$O(\log n)$ with high probability.

Putting everything together, we obtain an aggregation schedule whose
construction and execution both take $O(n\polylog n)$ rounds and use
$O(\Delta^\ast\polylog n)$ energy with high probability. Thus, the same
hierarchical structure that lets us build a low-degree spanning tree also
provides the right scaffold for turning that tree into a nearly
energy-optimal aggregation schedule.

\section{Component--Node Matchings}\label{sec:matching}

Let $F \subseteq E$ be a subset of edges of a graph $G=(V,E)$. For any node $v \in V$, we write $\deg_F(v) = |\{e \in F \, | \, v \in e\}|$ to denote the number of edges in $F$ incident to $v$. For any node subset $S \subseteq V$, we write $\deg_F(S) = |\{e \in F \, | \, S \cap e \neq \emptyset\}|$ to denote the number of edges in $F$ incident to $S$.

A \emph{clustering} $\mathcal{V}=\{V_1, V_2, \ldots, V_s\}$ of a graph $G=(V,E)$ is a partitioning of the node set into non-empty parts $V = V_1 \cup V_2 \cup \cdots \cup V_s$. Throughout the paper, every cluster $C$ in a clustering $\mathcal{V}$ is required to be connected.

\paragraph{Component--node matching.}
Let $\mathcal{V}$ be a clustering of a graph $G=(V,E)$, and suppose that the clusters are
2-colored red and blue.
A node is called \emph{red} (respectively, \emph{blue}) if it belongs to a red (respectively,
blue) cluster.
A \emph{$(1,k)$-component--node matching} is a set of edges
$M \subseteq E$ between red and blue nodes satisfying the following conditions:
\begin{itemize}
    \item For every red cluster $C$, we have $\deg_M(C) \leq 1$.
    \item For every blue node $v$, we have $\deg_M(v) \leq k$.
\end{itemize}

\paragraph{Almost-maximal component--node matching.}
Let $s$ be an integer such that $1 \leq s \leq k$. A $(1,k)$-component--node matching $M$ is said to be a $s$-\emph{almost-maximal} if it additionally
satisfies the following conditions:
\begin{itemize}
    \item If $\deg_M(C) = 0$ for a red cluster $C$, then every blue node $u$ adjacent to $C$
    satisfies $\deg_M(u) \geq s$.
    \item If $\deg_M(v) < s$ for a blue node $v$, then every red cluster $C$ adjacent to $v$
    satisfies $\deg_M(C) = 1$.
\end{itemize}

The proof of the following lemma is essentially identical to the proof of
Lemma~4 in the \href{https://arxiv.org/abs/1806.03365v1}{arXiv:1806.03365v1} version
of~\cite{mdst}, except that we replace maximal $(1,d)$-component--node matching with an $d$-almost-maximal $(1,2d)$-component--node
matching.
For completeness, we reproduce the proof here.

\begin{lemma}[Approximation ratio] \label{maxsize}
Let $M$ be an $d$-almost-maximal $(1,2d)$-component--node matching, and let $M^*$ be any
$(1,d)$-component--node matching.
Then $|M| \geq |M^*|/2$.
\end{lemma}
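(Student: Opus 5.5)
We prove the bound with a charging argument: every edge of $M^*$ is assigned to an edge of $M$, and each edge of $M$ receives at most two charges. Every $e=\{u,v\}\in M^*$, with $u$ in a red cluster $C_e$ and $v$ blue, falls into one of two cases by the almost-maximality conditions. If $\deg_M(C_e)=1$, then $e$ is \emph{charged to the red side}. Otherwise $\deg_M(C_e)=0$, and since $v$ is adjacent to $C_e$, the first almost-maximality condition gives $\deg_M(v)\ge d$; then $e$ is \emph{charged to the blue side}. So every edge of $M^*$ is charged to one side. The second condition is not even needed here, since it is the contrapositive-type companion of the first.

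\textbf{Red side.} Let $M_1^*$ be the edges charged to the red side. Since $M^*$ is a $(1,d)$-component--node matching, each red cluster $C$ is incident to at most one edge of $M^*$. Hence $|M_1^*|$ is at most the number of red clusters $C$ with $\deg_M(C)=1$. That number equals $|M|$, because each edge of $M$ has exactly one red endpoint and each such cluster is incident to exactly one edge of $M$. Thus $|M_1^*|\le |M|$.

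\textbf{Blue side.} Let $M_2^*$ be the edges charged to the blue side, and let $B=\{v \text{ blue} : \deg_M(v)\ge d\}$. Each blue node has at most $d$ edges of $M^*$, so $|M_2^*|\le d|B|$. On the other hand, each edge of $M$ has exactly one blue endpoint, so $|M|=\sum_{v \text{ blue}}\deg_M(v)\ge d|B|$. Hence $|M_2^*|\le |M|$.

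Combining, $|M^*|=|M_1^*|+|M_2^*|\le 2|M|$. The only point needing care is the double counting of $|M|$, both via red clusters and via blue degrees. This relies on $M$ containing only red--blue edges and on $\deg_M(C)\le 1$ for red clusters, so that the red-cluster count equals $|M|$ exactly. The capacity $2d$ of $M$ plays no role in the inequality; only the threshold $d$ from almost-maximality matters.
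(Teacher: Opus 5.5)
Your proof is correct and uses the same charging rule as the paper. An edge of $M^*$ goes to its red cluster when $\deg_M(C)=1$, and otherwise to its blue endpoint, which then has $\deg_M(v)\ge d$ by almost-maximality. The only difference is bookkeeping: you charge all of $M^*$ and bound each class globally by $|M|$ (counting $M$ once via red clusters and once via blue degrees), whereas the paper charges only $M^*\setminus M$ and compares each red cluster or blue node locally against its degree in $M\setminus M^*$, which gives the marginally stronger $|M^*\setminus M|\le 2|M\setminus M^*|$.
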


\begin{proof}
Consider any edge $e \in M^* \setminus M$.
Suppose $e$ connects a blue node $v$ with a red cluster $C$.
Since $M$ is an $d$-almost-maximal $(1,2d)$-component--node matching, at least one of the following conditions must hold:
\begin{itemize}
    \item $\deg_M(C) = 1$.
    \item $\deg_M(v) \geq d$.
\end{itemize}
If $\deg_M(C) = 1$, we assign $e$ to $C$; otherwise we assign $e$ to $v$.

We claim that each red cluster or blue node is assigned at most as many edges from
$M^* \setminus M$ as its degree in $M \setminus M^*$.
This implies
\[
|M^* \setminus M| \;\leq\; 2\,|M \setminus M^*|,
\]
and therefore
\[
|M^*|
= |M^* \cap M| + |M^* \setminus M|
\leq |M| + |M \setminus M^*|
\leq 2|M|,
\]
which proves the lemma.

It remains to justify the claim.

\paragraph{Red clusters.}
Each red cluster has degree at most one in both $M$ and $M^*$.
Thus, a red cluster $C$ can be assigned at most one edge from $M^* \setminus M$.
Moreover, if $C$ is assigned such an edge, then $C$
has exactly one incident edge in $M \setminus M^*$ and exactly one incident edge in $M^* \setminus M$.
Hence, the number of edges assigned to $C$ is at most $\deg_{M \setminus M^*}(C)$.

\paragraph{Blue nodes.}
Suppose a blue node $v$ is assigned $k$ edges from $M^* \setminus M$.
By definition of a $(1,d)$-component--node matching, we have
\[
\deg_{M^*}(v) \leq d.
\]
Thus,
\[
k \leq \deg_{M^* \setminus M}(v)
\leq d - \deg_{M \cap M^*}(v).
\]
Since $v$ is assigned only when $\deg_M(v) \geq d$, it follows that
\[
k \leq \deg_M(v) - \deg_{M \cap M^*}(v)
= \deg_{M \setminus M^*}(v),
\]
as required.
\end{proof}

\begin{lemma}[Matching size]\label{progress}
Let $\mathcal{V}$ be any clustering with $|\mathcal{V}| > 1$. Color each cluster with red or blue with probability $1/2$ independently. Let $M$ be any $d$-almost-maximal $(1,2d)$-component--node matching. If $\Delta^\ast \leq d$, then \[\Pr \left[|M| \geq \frac{|\mathcal{V}|}{16}\right] \geq \frac{1}{7}.\]
\end{lemma}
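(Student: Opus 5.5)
The plan is to exhibit, for every coloring, an explicit $(1,d)$-component--node matching $M^*$ built from a minimum-degree spanning tree. Then show that $|M^*|$ is large with constant probability, and transfer the bound to $M$ via \Cref{maxsize}, which gives $|M| \ge |M^*|/2$.

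\textbf{Construction of $M^*$.} Fix a spanning tree $T$ of $G$ with $\Delta(T) = \Delta^\ast \le d$. Contract every cluster of $\mathcal{V}$ to a single vertex. Since $T$ is connected, its inter-cluster edges connect the contracted graph, so we can select a set $T'$ of exactly $m := |\mathcal{V}|-1$ inter-cluster edges of $T$ that forms a spanning tree on the clusters. Root $T'$ at an arbitrary cluster. Each non-root cluster $C$ then has a unique parent edge $e_C \in T'$, whose endpoints are a node of $C$ and a node of the parent cluster. Define
\[
M^* = \{ e_C : C \text{ is red and the parent cluster of } C \text{ is blue} \}.
\]
This is a $(1,d)$-component--node matching, for two reasons.
\begin{itemize}
\item A red cluster $C$ is incident to at most one edge of $M^*$. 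Every other $T'$-edge at $C$ is the parent edge of a child of $C$, and such an edge is included only if its parent cluster, namely $C$, is blue.
\item A blue node $v$ has $\deg_{M^*}(v) \le \deg_T(v) \le d$, because $M^* \subseteq T$.
\end{itemize}

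\textbf{Expectation and tail.} For each non-root cluster $C$, the edge $e_C$ lies in $M^*$ exactly when $C$ is red and its parent is blue, which happens with probability $1/4$ since the two clusters are distinct and colored independently. Let $X = |M^*|$. Then $\E[X] = m/4$ and $0 \le X \le m$ always.

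We need $\Pr[X \ge |\mathcal{V}|/8] \ge 1/7$, since then \Cref{maxsize} yields $|M| \ge X/2 \ge |\mathcal{V}|/16$. The natural tool is a reverse Markov argument: apply Markov's inequality to $Y = m - X \ge 0$, which has $\E[Y] = 3m/4$.

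\textbf{The main obstacle is the constant.} With $|\mathcal{V}| = m+1$, the crude computation gives
\[
\Pr[X \ge (m+1)/8] \;\ge\; \frac{m-1}{7m-1},
\]
which is only slightly below $1/7$. To close this gap I would first try the following refinements.
\begin{itemize}
\item Use integrality. The event $X < (m+1)/8$ means $X \le \lceil (m+1)/8 \rceil - 1$. This sharpens the Markov threshold for $Y$ to $m - \lceil (m+1)/8 \rceil + 1$, which should suffice for all but small $m$.
\item Handle small $m$ directly. For example, if $m = 1$, then $X = 1 \ge 2/8$ with probability exactly $1/4$. Each small case has an explicit distribution over a bounded number of clusters, so it can be checked by hand.
\end{itemize}
If this is still insufficient, a fallback is to exploit the freedom in choosing the root, or to rebalance the threshold between $M^*$ and $M$ in the final step.
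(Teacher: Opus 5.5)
Your approach is the paper's: take a spanning tree $T$ with $\Delta(T)\le d$, keep one inter-cluster $T$-edge per cluster exactly when that cluster is red and the cluster at the other end is blue, apply a reverse Markov bound, and finish with \Cref{maxsize}. Your $M^*$ is a valid $(1,d)$-component--node matching.

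The constant trouble comes only from rooting, which leaves one cluster without an edge. The paper instead picks an edge $e_C$ for \emph{every} cluster $C$ and allows two clusters to pick the same edge. This is harmless for two reasons. A red cluster can only be incident to its own chosen edge, since any other chosen edge touching it would require it to be blue. And if $C$ and $C'$ pick the same edge, their two events are disjoint. So $|M^*|$ is exactly the number of clusters whose event occurs, $\E[|M^*|]=|\mathcal{V}|/4$, and plain Markov on $|\mathcal{V}|-|M^*|$ at threshold $7|\mathcal{V}|/8$ gives exactly $6/7$.

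Your integrality refinement also closes the gap, for every $m\ge 1$, with no small cases to check. The largest integer $a$ with $a<(m+1)/8$ satisfies $a\le m/8$, since otherwise $m<8a<m+1$. Hence $\Pr[X\le a]\le \frac{3m/4}{m-a}\le\frac{3m/4}{7m/8}=\frac{6}{7}$, so $\Pr[X\ge |\mathcal{V}|/8]\ge 1/7$.
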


\begin{proof}
Let $T$ be an arbitrary spanning tree of $G$ with maximum degree $\Delta^\ast \leq d$, and let $F$ be the set of inter-cluster edges of $T$. Since $G$ is connected and there is more than one cluster, $\deg_F(C) \geq 1$ for each cluster $C \in \mathcal{V}$. 

For each cluster $C \in \mathcal{V}$, select one edge $e_C \in F$ whose one endpoint is in $C$. Let $C'$ be the cluster of the other endpoint of $e_C$. Let $\mathcal{E}_C$ be the event that $C$ is red and $C'$ is blue. We have $\Pr[\mathcal{E}_C] = 1/4$. Let $M^\ast$ be the set of edges $e_C$ such that  $\mathcal{E}_C$ happens, over all clusters $C \in \mathcal{V}$. By construction, $M^\ast$ is a $(1,d)$-component--node matching.

Let $X = |\mathcal{V}|-|M^\ast|$. By linearity of expectations, $\E{X} = 3|\mathcal{V}|/4$. By Markov inequality, we have \[ \Pr \left[X\geq\frac{7|\mathcal{V}|}{8}\right] = \Pr \left[X\geq\frac{7 \E{X}}{6} \right]\leq 6/7.\]
Therefore, we have
\[\Pr\left[|M| \geq \frac{|\mathcal{V}|}{16}\right] \geq  \Pr\left[|M^\ast| \geq \frac{|\mathcal{V}|}{8}\right] \geq \Pr\left[X \leq \frac{7|\mathcal{V}|}{8}\right] \geq \frac{1}{7},\]
where the first inequality follows from the fact that  $|M| \geq |M^\ast|/2$, which is due to \cref{maxsize}.
\end{proof}

\section{Clustering}\label{sec:algo}

In this section, we discuss how we maintain a clustering $\mathcal{V}$ and introduce the basic subroutines related to clustering.

\subsection{Metadata}
In our algorithm, we maintain a clustering $\mathcal{V}$ of $G=(V,E)$ together with auxiliary
information that supports efficient communication inside clusters and controlled parallelism
across clusters.

\paragraph{Good labeling.}
A labeling $\mathcal{L}:V\to\{0,1,\ldots,n-1\}$ \emph{respects} a clustering $\mathcal{V}$ if
it satisfies the following properties:
\begin{itemize}
  \item For every cluster $C\in\mathcal{V}$, there is exactly one node $c(C)\in C$ with
  $\mathcal{L}(c(C))=0$. We call $c(C)$ the \emph{center} of $C$.
  \item For every cluster $C\in\mathcal{V}$ and every node $v\in C$ with $\mathcal{L}(v)>0$,
  there exists a neighbor $p(v)\in N(v)\cap C$ with $\mathcal{L}(p(v))=\mathcal{L}(v)-1$.
\end{itemize}
Following~\cite{chang2018energy}, we call such a labeling \emph{good}. Intuitively, the parent pointers $p(\cdot)$ define a rooted
spanning tree $T(C)$ for each cluster $C$, rooted at $c(C)$, of depth at most
$\max_{v\in C}\mathcal{L}(v)$.

\paragraph{Cluster identifier, coloring, and shared randomness.}
For each cluster $C\in\mathcal{V}$, its identifier is $\ID(C)=\ID(c(C))$.
Each cluster is colored independently and uniformly at random; we write $\col(C)\in\{\textsf{red},\textsf{blue}\}$,
and require $\col(C)$ to be known to all nodes in $C$.
Finally, all nodes in $C$ share a common random string $r(C)$ of $\polylog n$ bits.
For each node $v\in C$, define its \emph{basic information} as
\[
\mathcal{I}(v) \;=\; \big(\mathcal{L}(v),\ \ID(C),\ \col(C),\ r(C)\big).
\]

\paragraph{Global parameters.}
We write $\mathcal{D} \leq n-1$ for a known upper bound on the maximum label:
\[\mathcal{D}\ge \max_{v\in V}\mathcal{L}(v).\]

\subsection{Subroutines}
We use the following subroutines throughout the paper, each implemented using the communication primitive $\sr$ (\Cref{lem:sr}) as a building block. As these subroutines are folklore or rely on standard techniques, we defer their implementations and proofs to \Cref{app:cluster-subroutines}. In particular, the procedures $\downcast$, $\upcast$, and $\merge$ are from~\cite{chang2018energy}. For each subroutine, the set of participating clusters may be any subset of~$\mathcal{V}$.

\paragraph{(1) Down-cast ($\downcast$).}
\begin{itemize}
  \item \emph{Input:} For each cluster $C$, the center $c(C)$ holds an $O(\log n)$-bit message $M(C)$.
  \item \emph{Task:} Deliver $M(C)$ to all nodes of $C$.
  \item \emph{Complexity:} The cost is $O((\mathcal{D}+\Delta) \polylog n)$ rounds and $O(\polylog n)$ energy, with high probability.
  \item \emph{Single-cluster complexity:} If $\downcast$ is run on a single cluster $C$ only, then the round complexity can be improved to $O(\mathcal{D} \polylog n)$.
\end{itemize}

\paragraph{(2) Up-cast ($\upcast$).}
\begin{itemize}
  \item \emph{Input:} Some nodes in each cluster $C$ may hold an $O(\log n)$-bit message.
  \item \emph{Goal:} The center $c(C)$ receives one arbitrary message originating from $C$ if any exists;
  otherwise, $c(C)$ learns that no message exists in $C$.
  \item \emph{Complexity:} The cost is $O((\mathcal{D}+\Delta) \polylog n)$ rounds and $O(\polylog n)$ energy, with high probability.
  \item \emph{Single-cluster complexity:} If $\upcast$ is run on a single cluster $C$ only, then the round complexity can be improved to $O(\mathcal{D} \polylog n)$.
\end{itemize}

\paragraph{(3) Across-matching communication ($\amc$).}
\begin{itemize}
  \item \emph{Input:} A $(1,k)$-component--node matching $M$ is given. For each blue node $v$, the node $v$ itself and all red nodes $u$ with $\{u,v\}\in M$ are provided with the following information:
  \begin{itemize}
     \item $\ID(v)$.
      \item A shared random string $r(v)$ of length $O(\polylog n)$.
  \end{itemize}
  In addition, all nodes know $k$ and an upper bound $\mathcal{M} \ge |M|$.

  \item \emph{Goal:} For each edge $e=\{u,v\}\in M$, the endpoints $u$ and $v$ exchange $O(\polylog n)$ bits of information in both directions. 

  \item \emph{Complexity:}
 The cost is $O(\mathcal{M} \polylog n)$ rounds and $O(k \polylog n)$ energy, with high probability.
\end{itemize}

\paragraph{(4) Merge ($\merge$).}
\begin{itemize}
  \item \emph{Input:} A $(1,k)$-component--node matching $M$ is given. For each blue node $v$, the node $v$ itself and all red nodes $u$ with $\{u,v\}\in M$ are provided with the following information:
  \begin{itemize}
     \item $\ID(v)$.
      \item A shared random string $r(v)$ of length $O(\polylog n)$.
  \end{itemize}
 In addition, all nodes know $k$.
 
  \item \emph{Goal:} Construct a new clustering $\mathcal{V}'$ obtained by merging clusters \emph{according to the edges of $M$} as follows.
  For each edge $\{u,v\}\in M$, let $C(u)\in\mathcal{V}$ and $C(v)\in\mathcal{V}$ denote the current clusters
  containing $u$ and $v$, respectively.
  Define a graph $H$ with node set $\mathcal{V}$ and edge set
  \[
    E(H) \;=\; \big\{\, \{C(u),C(v)\} \;:\; \{u,v\}\in M \,\big\}.
  \]
  Then $\mathcal{V}'$ is obtained by contracting each connected component of $H$ into a single cluster, i.e.,
  each new cluster is the union of all old clusters in one connected component of $H$, and clusters not incident
  to any edge of $M$ remain unchanged.
  In addition, compute the required metadata for $\mathcal{V}'$ (good labeling, cluster identifiers, coloring,
  and shared randomness).

  \item \emph{Complexity:} The cost is $O(n \polylog n)$ rounds and $O(k \polylog n)$ energy, with high probability.
\end{itemize}

\paragraph{(5) Approximate counting ($\acount$).}
\begin{itemize}
  \item \emph{Input:} Each node $v$ holds a nonnegative integer $x_v$ such that $\sum_{v \in C} x_v \in n^{O(1)}$.
\item \emph{Goal:} Each node in a cluster $C$ learns an  estimate $\widehat{X}(C)$ of
$X(C)=\sum_{v\in C} x_v$ such that
\[
X(C)\ \le\ \widehat{X}(C)\ \le\ (1+\epsilon)\,X(C),
\]
i.e., the estimate never underestimates and overestimates by at most a factor $(1+\epsilon)$.
  \item \emph{Complexity:} 
   The cost is $O(n \epsilon^{-3}\polylog n)$ rounds and $O(\epsilon^{-3}\polylog n)$ energy, with high probability. 
\end{itemize}

\paragraph{(6) Loneliness testing ($\ltest$).}
\begin{itemize}
  \item \emph{Goal:} Determine whether $\mathcal{V}$ consists of exactly one cluster $C=V$,
  and ensure that \emph{all nodes} learn the outcome.
  \item \emph{Complexity:} The cost is $O(n \polylog n)$ rounds and  $O(\polylog n)$ energy, with high probability.
\end{itemize}

\section{Main Algorithms}\label{sect:main-algo}

In this section, we prove \Cref{thm-main1,thm-main}.
We need the following lemma, whose proof is deferred to \Cref{sect:matching-fast}.

\begin{lemma}[Matching algorithm]\label{lem-matching}
Given a clustering $\mathcal{V}$, one can construct a $d$-almost-maximal
$(1,2d)$-component--node matching $M$ in $O(n \polylog n)$ rounds and
$O(d \polylog n)$ energy with high probability.
Moreover, for every edge $e=\{u,v\}\in M$, the endpoints $u$ and $v$
exchange $O(\polylog n)$ bits of information.
\end{lemma}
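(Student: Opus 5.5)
We would prove \Cref{lem-matching} by simulating the degree-reduction maximal-matching process of~\cite{dani2021wake} on the bipartite graph between red clusters and blue nodes, organized into \emph{big-rounds}. The plan splits into two regimes according to the size of $d$, as the overview suggests.

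\textbf{One big-round.} Each red cluster that is still unmatched (active) uses its shared randomness $r(C)$ to decide whether to participate, with probability $p$. In the small-$d$ regime we take $p=\Theta(1/\Delta')$, where $\Delta'$ is the current degree parameter; in the large-$d$ regime we boost this to $\Theta(d/(\Delta'\log n))$. During the big-round, every red node of a participating cluster and every blue node with residual capacity run $O(\Delta'\log n)$ inter-cluster trials using only $\sr$, without any intra-cluster communication. A trial proceeds as follows. A red node transmits with small probability. A blue node that hears a unique sender records a tentative edge, provided it still has room, and acknowledges it. In the large-$d$ regime, a Chernoff bound ensures that each blue node accumulates at most $d$ tentative edges per big-round with high probability, so its load stays below $2d$ without any blue-side repair.

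\textbf{Unmatching phase.} We then perform one $\upcast$/$\downcast$ per cluster. Each red cluster picks one tentative edge, say the one with the minimum identifier, and cancels the others. The cancellations and confirmations are delivered via $\amc$, which costs $O(n\polylog n)$ rounds and $O(d\polylog n)$ energy. Blue nodes then recount their degrees.

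\textbf{Stages and energy.} Stages halve the parameter $\Delta'$ from $n$ down to $1$. Each stage consists of $O(\log n)$ big-rounds in the large-$d$ regime, or $O(d\log n)$ big-rounds with capacity restarted in the small-$d$ regime ($d\le\polylog n$). Each big-round costs $O(n\polylog n)$ rounds, which is dominated by the cluster subroutines of \Cref{sec:algo}. Energy is $O(\polylog n)$ per big-round for red nodes, since a node wakes only with probability $\approx 1/\Delta'$ over $\Delta'\log n$ trials. Blue nodes pay $O(d\polylog n)$ for recording and acknowledging their accepted edges.

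\textbf{Main obstacle: proving the stage guarantee.} We must show that after a stage, every blue node with fewer than $d$ edges has at most $\Delta'/2$ active red neighbors, and every active red cluster sees no unsaturated blue neighbor once $\Delta'=1$. The difficulty is that the unmatching phase couples the events. We would classify red clusters as \emph{large} or \emph{small} according to the expected number of tentative edges they create in one big-round.

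First, we show that after $O(\log n)$ big-rounds every large active cluster becomes matched, because it creates at least one tentative edge with constant probability and any tentative edge yields a match.

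Second, conditioned on all active clusters being small, a tentative edge at a blue node $v$ survives unmatching with constant probability. We would prove this with a second-moment (Paley--Zygmund) bound on the number of $v$'s competing tentative edges within the same red cluster. Hence a high-degree unsaturated $v$ gains a surviving edge with constant probability per big-round, and $O(\log n)$ big-rounds suffice with high probability. I expect this dependency analysis, and keeping the boosted rate compatible with it, to be the hardest part.

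The final step is the exchange of $O(\polylog n)$ bits across each matched edge, which is one more $\amc$ call.
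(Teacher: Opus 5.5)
Your outline reproduces the paper's architecture: two regimes, stages that halve the degree bound, big-rounds consisting of a batched matching phase plus one unmatching step, and a large/small cluster analysis. However, the large-$d$ regime fails because the boost sits on the wrong side of a trial. In your trial, red nodes transmit, and a blue node records an edge only if it hears a unique sender. A radio listener decodes at most one message per round, so there are two cases. If the expected number of transmitting red neighbors of a high-degree blue node $v$ is $\omega(1)$, then $v$ hears only collisions. This is exactly what a red-side boost by $d/\log n$ produces, since $v$ has $\Theta(\Delta')$ active red neighbors. If that number is $O(1)$, then $v$ gains at most one edge per trial in which it listens, and the boost buys nothing. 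Either way, your design does not give a high-degree blue node the $\Omega(d/\log n)$ edges per big-round that the large-$d$ regime needs.

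The paper's $\matching(d',k)$, with $d'=\Theta(d/\log n)$, reverses the roles. Sampled blue nodes transmit with probability $1/k$, sampled red nodes listen with probability $d'/k$, and a red node takes the edge iff it hears exactly one blue neighbor. One blue transmission is then received by $\Theta(d')$ red listeners at once, which matches the capacity asymmetry of a $(1,2d)$-component--node matching. The red endpoint learns its edge immediately. The blue endpoint learns its edges only after unmatching, via $\amc$, which is where the $O(d\polylog n)$ energy is spent. Finally, \Cref{clm:selection} caps the gain per phase at $0.01d$ using a Chernoff bound on the number of sampled red neighbors.

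The progress accounting has the same problem. For large $d$ you use $O(\log n)$ big-rounds per stage and argue that a high-degree unsaturated $v$ gains one surviving edge with constant probability per big-round. That yields only $O(\log n)$ edges. The stage guarantee, however, requires $v$ to reach $d$ edges or to halve its active degree. If, say, $v$'s red neighbors are singleton clusters adjacent only to $v$, this needs $\min\{d,\Delta'/2\}$ matched clusters. The paper's \Cref{lem_part22} instead shows $\Omega(d')$ surviving edges in expectation per execution. The $0.01d$ cap turns this into probability $\Omega(1/\log n)$ of gaining $\Omega(d/\log n)$ edges, hence $O(\log^2 n)$ executions per stage.

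There are further omissions. First, your boosted probability $\Theta(d/(\Delta'\log n))$ exceeds $1$ once $\Delta'<d/\log n$. The paper stops the stages at that point and finishes with \Cref{lem:base_case}: repeated $\sr$ from blue to red, one unmatching step, and one $\amc$. Second, your energy bound ignores idle listening. A blue node listening in all $\Theta(\Delta'\log n)$ trials of a big-round pays $\Theta(\Delta'\log n)$, so blue nodes must be sampled too, as in both of the paper's subroutines. Third, in Part~1 a large expected number of tentative edges does not by itself give constant probability of at least one. This is where the paper needs its second-moment bound, applied to a bounded-weight edge set $F$ or node prefix $U$. Survival in Part~2 only needs a first-moment union bound over the other iterations, not Paley--Zygmund.
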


\subsection{Minimum-Degree Spanning Tree}
Our main theorems are proved using a hierarchical clustering procedure,
denoted $\mathsf{Decomposition}(d)$, which is defined as follows.
We initialize the clustering $\mathcal{V}^{(1)}$ to be the singleton clustering
(i.e., each node forms its own cluster).
For $\ell=1,2,\ldots$, we perform the following steps:
\begin{enumerate} 
  \item Compute a $d$-almost-maximal $(1,2d)$-component--node matching $M^{(\ell)}$
  with respect to the current clustering $\mathcal{V}^{(\ell)}$ (and its random red/blue coloring).
  \item Apply $\merge$ to $M^{(\ell)}$ to obtain the next clustering $\mathcal{V}^{(\ell+1)}$.
  \item If $\mathcal{V}^{(\ell+1)}$ consists of a single cluster, set $t^\ast = \ell+1$
  and terminate the process.
\end{enumerate}

\begin{lemma}[Number of levels]\label{lem:levels}
If $d \ge \Delta^\ast$, then $t^\ast \in O(\log n)$ with high probability.
\end{lemma}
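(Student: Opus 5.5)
The plan is to track the number of clusters $|\mathcal{V}^{(\ell)}|$ and show that it shrinks by a constant factor in each level with constant probability. Then a Chernoff-type argument over $O(\log n)$ levels gives $t^\ast \in O(\log n)$ with high probability.

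\textbf{Step 1: size drop from a matching.} Fix a level $\ell$ with $|\mathcal{V}^{(\ell)}|>1$. Consider the graph $H$ from the definition of $\merge$, whose edges come from $M^{(\ell)}$. Every edge of $M^{(\ell)}$ joins a red cluster to a blue cluster, and each red cluster has $\deg_M(C)\le 1$. So every red cluster is incident to at most one edge of $M$, and distinct edges of $M$ have distinct red endpoints. Hence the edges of $H$ (counting multiplicity) correspond injectively to red clusters, and $H$ is a forest of stars centered at blue clusters. In particular $H$ has no cycles, since each red cluster has degree at most $1$ in $H$. The number of matched red clusters is therefore exactly $|M|$. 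Contracting components of a forest reduces the vertex count by the number of edges, so
\[
|\mathcal{V}^{(\ell+1)}| = |\mathcal{V}^{(\ell)}| - |M^{(\ell)}|.
\]
By \Cref{progress}, which applies since the coloring of $\mathcal{V}^{(\ell)}$ is fresh and independent and $d\ge\Delta^\ast$, we get $|\mathcal{V}^{(\ell+1)}|\le \tfrac{15}{16}|\mathcal{V}^{(\ell)}|$ with probability at least $1/7$. In all cases $|\mathcal{V}^{(\ell+1)}|\le|\mathcal{V}^{(\ell)}|$.

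\textbf{Step 2: conditional independence.} The key point is that \Cref{progress} holds for \emph{any} clustering and any almost-maximal matching. This includes the worst-case matching computed by the algorithm and arbitrary past history. So conditioned on everything that happened before level $\ell$, the level is ``successful'' (a $15/16$ shrink, or already a single cluster) with probability at least $1/7$. Call level $\ell$ successful if either $|\mathcal{V}^{(\ell)}|=1$ or the shrink occurs. The indicators of success then stochastically dominate independent Bernoulli$(1/7)$ variables, via the standard martingale/coupling argument.

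\textbf{Step 3: counting.} Since $|\mathcal{V}^{(1)}|=n$, at most $\log_{16/15} n = O(\log n)$ successful shrinks can occur before a single cluster remains. Take $L = c\log n$ levels for a large constant $c$. By a Chernoff bound on the dominated Bernoulli sum, the number of successes is at least $L/14 \ge \log_{16/15} n$ with probability $1-n^{-\Omega(c)}$. Hence $t^\ast\le L+1$ with high probability.

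One subtlety is that \Cref{lem-matching} itself succeeds only with high probability, and the almost-maximality guarantee may fail. I would union-bound over the $O(\log n)$ invocations of \Cref{lem-matching} and $\merge$ to absorb this. The main obstacle, though mild, is Step 2: justifying the domination by independent trials despite adaptively chosen matchings. It resolves because the $1/7$ bound in \Cref{progress} is uniform over clusterings and matchings, and depends only on the fresh coloring at that level.
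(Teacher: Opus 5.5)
Your proposal is correct and follows the same route as the paper's proof: at each level, \Cref{progress} gives a $15/16$ shrink in the cluster count with probability at least $1/7$, and a Chernoff bound over $\Theta(\log n)$ levels yields enough good levels with high probability. Your additions fill in details the paper leaves implicit: the star-forest argument for $|\mathcal{V}^{(\ell+1)}| = |\mathcal{V}^{(\ell)}| - |M^{(\ell)}|$, the domination by independent Bernoulli trials that justifies the Chernoff bound under adaptively chosen matchings, and the union bound over the high-probability guarantee of \Cref{lem-matching}.
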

\begin{proof}
Let $Z_\ell = |\mathcal{V}^{(\ell)}|$ denote the number of clusters at the beginning of iteration~$\ell$.

Conditioned on $Z_\ell > 1$, \Cref{progress} (which requires $d \ge \Delta^\ast$) implies that,
with probability at least $1/7$, the matching $M^{(\ell)}$ has size at least $Z_\ell/16$.
Whenever this event occurs, the subsequent call to $\merge$ reduces the number of clusters by
$|M^{(\ell)}|$, since each edge of $M^{(\ell)}$ connects two distinct clusters and merges them into one.
Therefore, with probability at least $1/7$, we have
\[
Z_{\ell+1}
= Z_\ell - |M^{(\ell)}|
\le Z_\ell - \frac{Z_\ell}{16}
= \frac{15}{16} Z_\ell .
\]
In other words, in each iteration, with constant probability the number of clusters decreases by a constant factor.

We call iteration~$\ell$ \emph{good} if $Z_{\ell+1} \le \frac{15}{16} Z_\ell$.
By the above discussion, conditioned on $Z_\ell > 1$, we have
$\Pr[\text{iteration $\ell$ is good}] \ge 1/7$.
By a Chernoff bound over $t \in \Theta(\log n)$ iterations, with high probability the number of good
iterations is at least $\Theta(\log n)$.
Since each good iteration multiplies the cluster count by at most $15/16$,
after $\Theta(\log n)$ good iterations we obtain $Z_t = 1$.
Hence, $t^\ast \in O(\log n)$ with high probability.
\end{proof}

\begin{lemma}[Spanning tree]\label{lem:spanning}
If $t^\ast \in O(\log n)$, then the edge set
\[
F = \bigcup_{\ell=1}^{t^\ast-1} M^{(\ell)}
\]
induces a spanning tree of maximum degree $O(d \log n)$.
\end{lemma}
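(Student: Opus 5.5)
We prove the two properties of $F$ separately: first that $F$ is a spanning tree, then that its maximum degree is $O(d\log n)$. Both rest on one invariant, established by induction on $\ell$: for every cluster $C\in\mathcal{V}^{(\ell)}$, the edges of $F_\ell=\bigcup_{j<\ell} M^{(j)}$ with both endpoints in $C$ form a spanning tree of $C$, and no edge of $F_\ell$ joins two different clusters.

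\textbf{Base case.} For $\ell=1$ the clusters are singletons and $F_1=\emptyset$, so the invariant holds trivially.

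\textbf{Inductive step.} Each $M^{(\ell)}$ consists of red--blue edges, so every edge joins two distinct clusters. Consider the graph $H$ from the definition of $\merge$, whose nodes are the clusters and whose edges come from $M^{(\ell)}$. The key observation is that $H$ is a forest, with no parallel edges.
\begin{itemize}
\item Each edge of $H$ joins a red cluster to a blue cluster, so $H$ is bipartite.
\item Each red cluster has $H$-degree at most $1$, because $\deg_{M^{(\ell)}}(C)\le 1$.
\item In a bipartite graph where every vertex on one side has degree at most $1$, each connected component is a star centered at a blue cluster (or an isolated vertex). In particular there are no cycles.
\item There are no multi-edges: two parallel edges between the same red and blue cluster would give that red cluster degree $2$.
\end{itemize}
Now fix a component of $H$ with $m$ clusters, hence $m-1$ matching edges. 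Gluing the $m$ spanning trees of these clusters along the $m-1$ connecting edges yields a connected graph with $|C'|-1$ edges on the merged cluster $C'$. It is therefore a spanning tree of $C'$.

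Each matching edge lies inside its new cluster, so no $F_{\ell+1}$-edge crosses clusters, completing the induction. At $\ell=t^\ast$ there is a single cluster $V$, so $F=F_{t^\ast}$ is a spanning tree of $G$. As a cross-check, $|F|=\sum_\ell |M^{(\ell)}|=n-1$, since each iteration reduces the cluster count by exactly $|M^{(\ell)}|$; this identity was already used in \Cref{lem:levels}.

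\textbf{Degree bound.} Fix a node $v$. In iteration $\ell$, $v$ is either a red node or a blue node.
\begin{itemize}
\item If $v$ is red, it is incident to at most $\deg_{M^{(\ell)}}(C)\le 1$ matching edges, where $C$ is its cluster.
\item If $v$ is blue, it is incident to at most $2d$ matching edges, by the definition of a $(1,2d)$-component--node matching.
\end{itemize}
In either case $\deg_{M^{(\ell)}}(v)\le 2d$. Summing over the $t^\ast-1\in O(\log n)$ iterations gives $\deg_F(v)\le 2d(t^\ast-1)\in O(d\log n)$.

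The only step requiring care is the acyclicity argument: it depends on the ``component'' side of the matching being the red clusters (not merely red nodes) with capacity $1$, which is what rules out cycles and parallel edges in $H$. The remaining steps are routine bookkeeping.
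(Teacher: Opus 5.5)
Your proof is correct and follows essentially the same route as the paper: a Bor\r{u}vka-style invariant that each cluster's accumulated edges form a spanning tree, preserved because the cluster graph $H^{(\ell)}$ is a disjoint union of stars, followed by the per-iteration degree bound of $2d$ summed over $t^\ast-1 \in O(\log n)$ iterations. Your explicit handling of parallel edges and the edge count $|C'|-1$ makes precise what the paper leaves implicit.
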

\begin{proof}
We first show that $F$ induces a spanning tree.
Throughout the process, we maintain the invariant that, for each cluster $C$,
the edges added so far within $C$ induce a \emph{spanning tree} of~$C$.

Fix an iteration~$\ell$.
Every edge in $M^{(\ell)}$ connects two distinct clusters of $\mathcal{V}^{(\ell)}$.
Let $H^{(\ell)}$ be the auxiliary graph whose nodes correspond to the clusters in
$\mathcal{V}^{(\ell)}$ and whose edges correspond to the edges of $M^{(\ell)}$.
Since $M^{(\ell)}$ is a $(1,2d)$-component--node matching, the graph $H^{(\ell)}$
is a disjoint union of stars.

Consequently, for each connected component of $H^{(\ell)}$ containing $r$ clusters,
the corresponding $r-1$ edges of $M^{(\ell)}$ merge the spanning trees of these $r$ clusters
into a single spanning tree and cannot create a cycle.
Thus, adding all edges of $M^{(\ell)}$ preserves the invariant, exactly as in Borůvka’s algorithm.

Starting from $n$ singleton clusters and terminating when a single cluster remains, the union
\[
F = \bigcup_{\ell=1}^{t^\ast-1} M^{(\ell)}
\]
is therefore a spanning tree of~$G$.

We now bound the maximum degree of~$F$.
Fix any iteration~$\ell$ and any node $v \in V$.
By definition of a $(1,2d)$-component--node matching, in $M^{(\ell)}$ each blue node has degree
at most $2d$, and each red node has degree at most~$1$.
Thus, in any iteration, a node gains at most $2d$ incident edges.
Since $t^\ast \in O(\log n)$, we have
\[
\deg_F(v)
=
\sum_{\ell=1}^{t^\ast-1} \deg_{M^{(\ell)}}(v)
\le
2d\,(t^\ast-1)
\in
O(d \log n),
\]
which completes the proof.
\end{proof}

We are now ready to prove \Cref{thm-main}.

\begin{proof}[Proof of \Cref{thm-main}]
We do not know $\Delta^\ast$ a priori, so we construct $\mathsf{Decomposition}(d)$
for geometrically increasing values $d \in \{1,2,4,8,\ldots\}$.
However, if $d < \Delta^\ast$, there is no guarantee that $t^\ast \in O(\log n)$.
Therefore, during the construction of $\mathsf{Decomposition}(d)$, we invoke $\ltest$
to check whether the clustering $\mathcal{V}^{(t^\ast)}$ consists of a single cluster
within $O(\log n)$ iterations.
If so, we output the edge set
$F = \bigcup_{\ell=1}^{t^\ast-1} M^{(\ell)}$ and terminate.
By \Cref{lem:levels}, any $d \ge \Delta^\ast$ is guaranteed to pass this test, and hence
the smallest value of $d$ that passes satisfies $d < 2\Delta^\ast$.
By \Cref{lem:spanning}, the output is a spanning tree whose maximum degree is
$O(\Delta^\ast \log n)$.

It remains to analyze the complexity.
The construction of $\mathsf{Decomposition}(d)$ consists of $O(\log n)$ iterations.
Each iteration invokes the matching algorithm of \Cref{lem-matching} once and the
subroutine $\merge$ once, and the construction concludes with a call to $\ltest$.
By \Cref{lem-matching}, the matching algorithm requires
$O(n \polylog n)$ rounds and $O(d \polylog n) \subseteq O(\Delta^\ast \polylog n)$ energy.
Both $\merge$ and $\ltest$ require $O(n \polylog n)$ rounds and $O(\polylog n)$ energy.
Therefore, the algorithm computes a spanning tree $T$ with
$\Delta(T) \in O(\Delta^\ast \log n)$ in $O(n \polylog n)$ rounds using
$O(\Delta^\ast \polylog n)$ energy with high probability.
\end{proof}

\subsection{Aggregation Schedule}

The proof of \Cref{thm-main} can be extended to yield an aggregation schedule and thereby prove \Cref{thm-main1}. In this subsection we focus on this extension.

We assume that a hierarchical clustering
\[
\mathcal{H} = \mathsf{Decomposition}(d)
\]
has already been constructed for some $d \le 2\Delta^\ast$, with $t^\ast \in O(\log n)$ levels. From the proof of \Cref{thm-main}, we know that such a hierarchy can be computed in $O(n \polylog n)$ rounds and $O(\Delta^\ast \polylog n)$ energy with high probability.

\paragraph{Hierarchical view of the decomposition.}
We regard $\mathcal{H}$ as a rooted tree whose nodes are clusters. The unique cluster in $\mathcal{V}^{(t^\ast)}$ is the root, and the clusters in $\mathcal{V}^{(1)}$ are the leaves. For each level $\ell \in \{2,\dots,t^\ast\}$ and each cluster $C \in \mathcal{V}^{(\ell)}$ that is formed by merging clusters from $\mathcal{V}^{(\ell-1)}$, we say that $C$ is the parent of those lower-level clusters. If the same cluster $C$ appears in multiple levels, we still treat it as a single node in the tree~$\mathcal{H}$.

For each $\ell>1$, recall that $\mathcal{V}^{(\ell)}$ is obtained by merging clusters in $\mathcal{V}^{(\ell-1)}$ along the matching $M^{(\ell-1)}$. In particular, if $C \in \mathcal{V}^{(\ell)}$ is formed by merging children $C_0, C_1, \ldots, C_s \in \mathcal{V}^{(\ell-1)}$, then the edges of $M^{(\ell-1)}$ between these children form a star: exactly one child $C_0$ is blue, the remaining children are red, and the matching edges connect each red child to $C_0$.

Our aggregation schedule uses coarse size estimates that are both locally known and monotone along the hierarchy. We obtain these estimates using the approximate counting primitive $\acount$.

\begin{lemma}[Size approximation]\label{lem:size_est}
We can compute a size estimate $\widehat{\operatorname{size}}(C)$ for every cluster $C$ in $\mathcal{H}$ in $O(n \polylog n)$ rounds and $O(\polylog n)$ energy with high probability such that the following properties hold.
\begin{itemize}
  \item For every cluster $C$, the estimate satisfies $|C| \le \widehat{\operatorname{size}}(C) \le 2|C|$.
  \item For every cluster $C$, the estimate $\widehat{\operatorname{size}}(C)$ is at least the sum of $\widehat{\operatorname{size}}(C')$ over all children $C'$ of $C$ in $\mathcal{H}$.
  \item For every cluster $C$, all nodes in $C$ know the value $\widehat{\operatorname{size}}(C)$.
\end{itemize}
\end{lemma}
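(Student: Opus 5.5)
We run $\acount$ once per level of $\mathcal{H}$ and then correct the raw estimates bottom-up so that they become monotone. For each level $\ell \in \{1,\dots,t^\ast\}$ we invoke $\acount$ on the clustering $\mathcal{V}^{(\ell)}$ with $x_v = 1$ for all $v$ and a constant $\epsilon$, say $\epsilon = 1/(2t^\ast)$ or simply $\epsilon \in \Theta(1/\log n)$. Every node in each $C \in \mathcal{V}^{(\ell)}$ then learns an estimate $\widehat{X}(C)$ with $|C| \le \widehat{X}(C) \le (1+\epsilon)|C|$. Each call costs $O(n\epsilon^{-3}\polylog n) = O(n \polylog n)$ rounds and $O(\polylog n)$ energy. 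Since $t^\ast \in O(\log n)$, the totals remain $O(n\polylog n)$ rounds and $O(\polylog n)$ energy, and a union bound over the levels keeps the high-probability guarantee.

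The raw estimates need not satisfy the monotonicity requirement, because the children's overestimates may add up to more than the parent's estimate. We fix this by inflating with the level:
\[
\widehat{\operatorname{size}}(C) \;=\; (1+\epsilon)^{2\ell(C)}\,\widehat{X}(C),
\]
where $\ell(C)$ is the level of $C$. Every node can compute this locally, because it knows $\ell$ and the value $\widehat{X}(C)$ for each level.

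If a cluster appears at several levels, we take its highest level as $\ell(C)$. We then have to check that any child $C'$ occurs at a level at most $\ell(C)-1$. This holds because a child of $C$ in $\mathcal{H}$ lives at level $\ell-1$ when $C$ is formed at level $\ell$. A cluster that persists unchanged we treat consistently using its topmost appearance, or we note that the value is recomputed at each level.

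Monotonicity follows by a direct calculation:
\[
\sum_{C'} \widehat{\operatorname{size}}(C') \le (1+\epsilon)^{2(\ell-1)} (1+\epsilon)\sum_{C'}|C'| = (1+\epsilon)^{2\ell-1}|C| \le (1+\epsilon)^{2\ell}\widehat{X}(C).
\]
The upper bound follows from
\[
\widehat{\operatorname{size}}(C) \le (1+\epsilon)^{2t^\ast+1}|C| \le 2|C|,
\]
which holds once $\epsilon \le c/t^\ast$ for a small constant $c$; such an $\epsilon$ is $\Theta(1/\log n)$ and is fine for the polylog bounds. The lower bound $|C| \le \widehat{\operatorname{size}}(C)$ is immediate.

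The main obstacle is choosing $\epsilon$ so that the accumulated inflation over $O(\log n)$ levels stays below a factor of $2$. The level index must also be known globally, which holds because the levels are executed synchronously and $t^\ast$ is bounded by a known $O(\log n)$. We use that bound on $t^\ast$, not $t^\ast$ itself, when fixing $\epsilon$.
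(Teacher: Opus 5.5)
Your argument is correct, but it gets monotonicity by a different mechanism than the paper.

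The paper never counts nodes directly. At level $\ell$ it runs $\acount$ with input $x_v=\widehat{\operatorname{size}}(C')$ at the center of each level-$(\ell-1)$ cluster $C'$, and $0$ elsewhere. So $X(C)$ is exactly the sum of the children's estimates, and monotonicity follows at once from the one-sided guarantee $\widehat{X}(C)\ge X(C)$. The error compounds to $|C|\le\widehat{\operatorname{size}}(C)\le(1+\epsilon)^{\ell-1}|C|$.

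You instead count each level independently with $x_v=1$ and restore monotonicity afterwards by the deterministic rescaling $(1+\epsilon)^{2\ell}$. This works because the rescaled children's estimates sum to at most $(1+\epsilon)^{2\ell-1}|C|$; in fact $(1+\epsilon)^{\ell}$ would already suffice.

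What each route buys:
\begin{itemize}
\item Your route decouples the levels. The $\acount$ calls need not be chained bottom-up, and they only see integer inputs. The paper feeds possibly non-integer estimates back into $\acount$.
\item The paper's route needs no post-processing and accumulates a slightly smaller factor.
\item Both need $\epsilon\in\Theta(1/\log n)$, not a constant as your first sentence says, and both have the same cost.
\end{itemize}

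For clusters that persist across several levels, you do not need to identify the topmost appearance. Your per-level values $(1+\epsilon)^{2\ell}\widehat{X}^{(\ell)}(C)$ are nondecreasing in $\ell$. Each also dominates the sum of the level-$(\ell-1)$ values of the clusters it contains. That per-level form of monotonicity is what the later schedule construction actually uses.
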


\begin{proof}
We prove a slightly stronger invariant. There exists a fixed $\epsilon \in \Theta(1/\log n)$ such that for every level $\ell \in \{1,\dots,t^\ast\}$ and every cluster $C \in \mathcal{V}^{(\ell)}$ we have
\begin{equation}\label{eq:size-invariant}
|C| \;\le\; \widehat{\operatorname{size}}(C) \;\le\; (1+\epsilon)^{\ell-1} |C|.
\end{equation}
Since $t^\ast \in O(\log n)$, we can choose $\epsilon$ small enough so that $(1+\epsilon)^{t^\ast-1} \le 2$, which implies the desired $2$-approximation for all clusters.

\paragraph{Base level.}
At level $\ell=1$, every cluster $C \in \mathcal{V}^{(1)}$ consists of a single node. Each such cluster sets $\widehat{\operatorname{size}}(C)=1$. Then \eqref{eq:size-invariant} holds with equality.

\paragraph{Inductive step.}
Assume that \eqref{eq:size-invariant} holds for all clusters in $\mathcal{V}^{(\ell-1)}$ for some $\ell \ge 2$. We now define the estimates for clusters in $\mathcal{V}^{(\ell)}$.

We run a single invocation of $\acount$ with parameter $\epsilon$ on the clustering $\mathcal{V}^{(\ell)}$ as follows. Each node $v$ chooses an integer input
\[
x_v \;=\;
\begin{cases}
\widehat{\operatorname{size}}(C') & \text{if $v$ is the center of some cluster $C' \in \mathcal{V}^{(\ell-1)}$,} \\[2pt]
0 & \text{otherwise.}
\end{cases}
\]
For every cluster $C \in \mathcal{V}^{(\ell)}$, define
\[
X(C) \;=\; \sum_{v \in C} x_v.
\]
If $C$ is formed at level $\ell$ by merging children $C'_1,\dots,C'_q \in \mathcal{V}^{(\ell-1)}$, then $X(C)$ equals $\sum_{j=1}^q \widehat{\operatorname{size}}(C'_j)$.

By the guarantee of $\acount$, each node in $C$ learns an estimate $\widehat{X}(C)$ satisfying
\[
X(C) \;\le\; \widehat{X}(C) \;\le\; (1+\epsilon)\,X(C).
\]
If a cluster $C \in \mathcal{V}^{(\ell)}$ already appears in $\mathcal{V}^{(\ell-1)}$, we keep its existing estimate $\widehat{\operatorname{size}}(C)$. Otherwise, $C$ is newly formed at level $\ell$, and we set
\[
\widehat{\operatorname{size}}(C) \;=\; \widehat{X}(C).
\]

\paragraph{Monotonicity along children.}
Consider a cluster $C$ that is newly formed at level $\ell$ by merging children $C'_1,\dots,C'_q$. By definition,
\[
X(C) \;=\; \sum_{j=1}^q \widehat{\operatorname{size}}(C'_j),
\]
and hence
\[
\widehat{\operatorname{size}}(C)
= \widehat{X}(C)
\ge X(C)
= \sum_{j=1}^q \widehat{\operatorname{size}}(C'_j).
\]
This proves the second item of the lemma.

\paragraph{Lower bound.}
Using the induction hypothesis, we have $|C'_j| \le \widehat{\operatorname{size}}(C'_j)$ for each child $C'_j$, and therefore
\[
|C|
= \sum_{j=1}^q |C'_j|
\;\le\;
\sum_{j=1}^q \widehat{\operatorname{size}}(C'_j)
= X(C)
\le \widehat{X}(C)
= \widehat{\operatorname{size}}(C).
\]

\paragraph{Upper bound.}
Again by the induction hypothesis, we have
\[
\widehat{\operatorname{size}}(C'_j)
\;\le\;
(1+\epsilon)^{\ell-2}\,|C'_j|
\qquad\text{for all } j.
\]
Thus
\[
X(C)
= \sum_{j=1}^q \widehat{\operatorname{size}}(C'_j)
\;\le\;
(1+\epsilon)^{\ell-2} \sum_{j=1}^q |C'_j|
=
(1+\epsilon)^{\ell-2} |C|.
\]
Consequently,
\[
\widehat{\operatorname{size}}(C)
= \widehat{X}(C)
\le (1+\epsilon)\,X(C)
\le (1+\epsilon)^{\ell-1} |C|,
\]
which establishes \eqref{eq:size-invariant} at level~$\ell$.

\paragraph{Complexity.}
We invoke $\acount$ once per level, for $O(\log n)$ levels, with parameter $\epsilon \in \Theta(1/\log n)$. By the complexity of $\acount$, each invocation takes $O(n \poly(1/\epsilon)\polylog n) = O(n \polylog n)$ rounds and $O(\poly(1/\epsilon)\polylog n) = O(\polylog n)$ energy per node, with high probability. Summing over all levels gives a total of $O(n \polylog n)$ rounds and $O(\polylog n)$ energy with high probability.
\end{proof}

\paragraph{Recursive aggregation along the hierarchy.}
We now describe the conceptual structure of the aggregation procedure on $\mathcal{H}$. Fix a level $\ell > 1$ and a cluster $C \in \mathcal{V}^{(\ell)}$ formed by merging children $C_0, C_1, \ldots, C_s \in \mathcal{V}^{(\ell-1)}$, where $C_0$ is the unique blue child and $C_1,\ldots,C_s$ are red. As noted earlier, the edges of $M^{(\ell-1)}$ between these children form a star centered at $C_0$.

For each such cluster $C$, we perform the following four steps.
\begin{enumerate}
  \item \label{step1} We run the recursive aggregation schedules inside the red children $C_1,\ldots,C_s$ so that each $C_i$ computes the aggregate value $f(C_i)$ of all inputs in $C_i$.
  \item We use $\amc$ on the matching edges of $M^{(\ell-1)}$ between $\{C_1,\dots,C_s\}$ and $C_0$ to send each value $f(C_i)$ to some node in $C_0$.
  \item We run the recursive aggregation schedule for $C_0$, combining the received values $\{f(C_i)\}_{i=1}^s$ with the local inputs of nodes in $C_0$ using the operator $\diamond$, and obtain $f(C)$.
  \item We run a $\downcast$ within $C$ so that all nodes in $C$ learn the value $f(C)$.
\end{enumerate}
Executing this construction bottom-up along $\mathcal{H}$ ensures that the root cluster eventually obtains $f(V)$ and that the final aggregate is disseminated to all nodes.

\paragraph{Random delays and congestion.}
In Step~\ref{step1} above, the aggregation schedules inside the red children $C_1,\ldots,C_s$ can be carried out either in parallel or sequentially. However, a naive parallel execution would cause excessive contention in the radio network; in particular, a naive use of random delays at every level could introduce an additional polylogarithmic blow-up per level. On the other hand, a fully sequential ordering of the schedules of all clusters is also not a viable option: in a radio network, computing such a global ordering in a distributed fashion becomes non-trivial when there are many clusters. To control both the contention and the total length of the schedule, we therefore introduce a slightly more abstract notion of aggregation schedules in which time is partitioned into \emph{slots} and congestion is measured per slot. This slot-based view allows us to analyze random delays uniformly across all levels of~$\mathcal{H}$ and later convert the resulting randomized slot-based schedule into a standard round-based schedule with only a polylogarithmic overhead.

\paragraph{Slot-based aggregation schedules.}
To reason about random delays without committing to a specific round-by-round execution, we use a slightly abstracted notion of aggregation schedules. A \emph{slot-based aggregation schedule} is a finite sequence of slots
\[
S = (S_1, S_2, \ldots, S_t),
\]
where $t = |S|$ is the \emph{length} of the schedule. Each slot $S_i$ consists of a finite collection of \emph{message transmission events}. In a given message transmission event, each node chooses one of three actions: listening, transmitting, or staying idle. We assume that each message transmission event has a unique $O(\log n)$-bit identifier that is known to all nodes that participate in that event as listeners or transmitters.

The semantics of the schedule are as follows. Consider any total ordering of all message transmission events that respects the slot order, in the sense that all events in $S_i$ occur before all events in $S_j$ whenever $i < j$. The requirement is that, for every such ordering, if the radio network executes the events in that order, then by the end of the process the correct global aggregate value $f(V)$ is obtained (and, in our final construction, known to all nodes). The \emph{congestion} of the schedule $S$ is defined as
\[
\max_{i \in \{1,\dots,t\}} |S_i|,
\]
i.e., the maximum number of message transmission events contained in any slot.

\paragraph{Randomized slot-based schedules.}
We also consider a randomized variant of the above notion. A \emph{randomized slot-based aggregation schedule} still has a fixed length $t$, but the assignment of message transmission events to slots is performed randomly, using shared randomness available in the network. More precisely, the set of all message transmission events is determined in advance, and the schedule is specified by a random mapping that sends each event to one of the slots $S_1,\dots,S_t$.

The correctness requirement is worst-case over the randomness: for every realization of the random mapping, and for every ordering of the message transmission events that respects the resulting slot order, executing the events in that order must produce the correct aggregate. The randomness only affects the congestion profile of the schedule. In our construction we will design randomized slot-based schedules in such a way that, for each slot, the congestion is distributed like a random variable from a recursively defined class with good concentration properties.

\paragraph{The class $\mathcal{Z}_d$ of random variables.}
For each integer $d \ge 0$, we define a class $\mathcal{Z}_d$ of nonnegative integer-valued random variables. Intuitively, variables in $\mathcal{Z}_d$ are built in at most $d$ layers from simpler pieces and will model the congestion contributed to a single slot after combining randomness across several levels of the hierarchy.

\paragraph{Base layer ($d=0$).}
A random variable $Z$ belongs to $\mathcal{Z}_0$ if
\[
Z \in \{0,1\}
\]
and $Z$ is deterministic.

\paragraph{Recursive layer ($d \ge 1$).}
A random variable $Z$ belongs to $\mathcal{Z}_d$ if it can be obtained
from independent random variables $X_1,\ldots,X_k \in \mathcal{Z}_{d-1}$,
for some $k \ge 1$, using one of the following two constructions, and satisfies $\E[Z] \le 1$.
\begin{enumerate}
  \item Parallel sum:
  \[
  Z = \sum_{i=1}^k X_i.
  \]

  \item Random selection:
  \[
  Z = X_J,
  \]
  where $J$ is a random index in $\{1,\ldots,k\}$, and $J$ is independent of
  $(X_1,\ldots,X_k)$.
\end{enumerate}
In the random selection construction, the index $J$ does not need to be uniformly distributed over $\{1,\ldots,k\}$. The two constructions are chosen to reflect how congestion builds up in our randomized slot-based schedules.

We write
\[
\mathcal{Z} \;=\; \bigcup_{d \in \{0,1,2,\ldots\}} \mathcal{Z}_d.
\]
By definition, every $Z \in \mathcal{Z}$ satisfies $\E[Z] \le 1$.  

\begin{lemma}[Concentration bound for $\mathcal{Z}_d$]\label{lem:Zd_bound}
For any $Z \in \mathcal{Z}$, we have $Z \in O(\log n)$ with high probability.
\end{lemma}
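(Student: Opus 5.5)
The plan is to prove, by induction on $d$, a uniform moment generating function bound: for every $Z \in \mathcal{Z}_d$ (with no dependence on $d$),
\[
\E\bigl[e^{Z}\bigr] \;\le\; 1 + (e-1)\,\E[Z] \;\le\; \exp\bigl((e-1)\E[Z]\bigr) \;\le\; e^{e-1}.
\]
Once this holds, Markov's inequality gives $\Pr[Z \ge c\log n] \le e^{e-1} e^{-c\log n} = O(n^{-c})$ for any constant $c$. So $Z \in O(\log n)$ with high probability, and the exponent can be made arbitrary. The key point is that the bound does not degrade with the depth $d$, so nothing depends on the number of hierarchy levels.

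The invariant to carry through the induction is $\E[e^{Z}] \le \exp\bigl((e-1)\E[Z]\bigr)$. This is the MGF bound of a Poisson variable with mean $\E[Z]$; the intermediate form $1+(e-1)\E[Z]$ is the Bernoulli bound. The base case $Z\in\{0,1\}$ deterministic is immediate, since $e^{z} = 1+(e-1)z$ for $z\in\{0,1\}$.

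The two inductive cases go as follows.

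\textbf{Parallel sum.} By independence, $\E[e^{\sum X_i}] = \prod_i \E[e^{X_i}] \le \prod_i \exp\bigl((e-1)\E[X_i]\bigr) = \exp\bigl((e-1)\E[Z]\bigr)$. The invariant is preserved and no use of $\E[X_i]\le 1$ is needed.

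\textbf{Random selection.} Since $J$ is independent of the $X_i$,
\[
\E[e^{X_J}] = \sum_j \Pr[J=j]\,\E[e^{X_j}] \le \sum_j \Pr[J=j]\,\exp\bigl((e-1)\E[X_j]\bigr).
\]
This is where the main obstacle lies: a mixture of Poisson-type MGFs is not Poisson-type in general, because by convexity the average of $\exp(a\mu_j)$ can exceed $\exp(a\cdot\text{average }\mu_j)$. Here the constraint $\E[X_j]\le 1$ for every $X_j\in\mathcal{Z}_{d-1}$ is essential. On $[0,1]$ the convex function $\mu\mapsto \exp((e-1)\mu)$ lies below its chord $1+(e^{e-1}-1)\mu$. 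Hence the mixture is at most $1+(e^{e-1}-1)\E[Z]$, which is a linear bound with a worse constant.

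To close the induction, I would therefore weaken the invariant to one that is stable under both operations, namely $\E[e^{\lambda Z}] \le \exp\bigl(K\,\E[Z]\bigr)$ for a suitable small $\lambda$ and constant $K$.

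First consider the sum step. The bound $\exp(K\mu)$ is multiplicative under independent sums, so the sum step is fine for any $K$.

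Next consider the selection step. The mixture is at most $1+(e^{K}-1)\mu$, using $\mu_j\le 1$ and the chord. Requiring this to be at most $\exp(K'\mu)$ forces $K' \ge e^{K}-1$, which is larger than $K$. So the invariant cannot literally be closed with a fixed $K$.

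To fix this, I would use a cleaner formulation that keeps a linear invariant throughout: $\E[e^{\lambda Z}] \le 1 + K\,\E[Z]$ with $K = e^{\lambda}-1$ chosen so that the invariant closes. Selection preserves a linear-in-mean bound exactly, since the bound is affine in $\mu$ and affine bounds pass through mixtures. For the sum, $\prod_i (1+K\mu_i) \le e^{K\sum_i \mu_i} = e^{K\mu}$, and then $\mu\le 1$ together with the chord bound gives $e^{K\mu}\le 1+(e^{K}-1)\mu$. So the sum step needs $e^{K}-1\le K'$, and we again need a fixed point. Taking the bound in the form $1+K\mu$, the sum step yields $1+(e^{K}-1)\mu$.

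The resolution I would try is to track the constant with the depth as $K_d$, with recursion $K_d = e^{K_{d-1}}-1$ and $K_0 = e^{\lambda}-1$. Since depth is $O(\log n)$ in our application, choosing $\lambda$ small keeps $K_d = O(1)$ over that many steps. Near $0$ the map $x\mapsto e^{x}-1$ behaves like $x + x^2/2$, so starting at $\lambda \approx 1/\log n$ keeps $K_d\le 2\lambda$ for $O(\log n)$ steps. With $\lambda = \Theta(1/\log n)$ this gives $\E[e^{\lambda Z}] \le e^{O(1)}$. Markov's inequality then yields $\Pr[Z \ge c\log^2 n]\le n^{-\Omega(c)}$, that is, only $O(\log^2 n)$.

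To reach the stated $O(\log n)$, the key step I expect to need is a sharper argument that the fixed-point issue is fictitious. Heuristically, a sum of independent variables with total mean at most $1$ is dominated by a Poisson, and mixing Poisson-dominated variables with mean at most $1$ remains dominated by a mixture of Poissons of mean at most $1$, which is dominated by $\mathrm{Poisson}(1)$ in the convex order. The cleanest route is therefore to prove by induction that every $Z\in\mathcal{Z}$ satisfies $\E[\phi(Z)]\le \E[\phi(P_{\E Z})]$, where $P_\mu$ denotes a $\mathrm{Poisson}(\mu)$ variable, for all convex increasing $\phi$ of the form $\phi(z)=e^{\lambda z}$. The mean bound $\E Z\le 1$ and the tail of $\mathrm{Poisson}(1)$ then give $\Pr[Z\ge c\log n]\le n^{-c}$, since $\Pr[P_1 \ge k] \le 1/k!$.
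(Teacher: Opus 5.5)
The gap is your final step, and it cannot be closed. The invariant $\E[\phi(Z)]\le\E[\phi(P_{\E Z})]$ is not preserved by random selection. The map $\mu\mapsto\E[e^{\lambda P_\mu}]=\exp(\mu(e^{\lambda}-1))$ is convex, so a mixture $P_{\mu_J}$ sits \emph{above} $P_{\E\mu_J}$ in the convex order. The domination by $P_1$ that you can get instead forgets the mean, and the next parallel sum needs that mean. This is the same obstruction you hit twice, and it is not fictitious.

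Alternate the selection ``$X$ with probability $1/k$, else $0$'' with a parallel sum of $k\ge 2$ independent copies. Every intermediate variable has mean at most $1$, so all of them lie in $\mathcal{Z}$. After $g$ alternations (depth $2g$) you get the generation-$g$ size of a critical Galton--Watson process with $\mathrm{Bin}(k,1/k)$ offspring. By the classical theorems of Kolmogorov and Yaglom, this process survives with probability $\Theta(1/g)$ and, given survival, is $\Omega(g)$ with constant probability. Taking $g=\Theta(\log n)$ gives $\Pr[Z\ge c\log n]=\Omega(1/\log n)$ for every constant $c$.

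So the lemma is false as stated, even when restricted to depth $O(\log n)$ as in the application. In particular, your remark that the bound need not degrade with $d$ is wrong. The paper's own proof fails at the same point. It writes $Z$ as a sum of leaf indicators and asserts they are negatively associated, but leaves inside one selected branch share the factor $\mathbf{1}[J=j]$ and are positively correlated. For example, take $X_1=B_1+B_2$ with independent $B_i\sim\mathrm{Ber}(1/2)$, $X_2=0$, and $\Pr[J=1]=1/2$. The two leaf indicators then have joint mean $1/8>1/16$, so the Chernoff bound for negatively associated variables does not apply.

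What you did prove rigorously is the right statement, and it is tight up to constants in the exponent. The linear invariant $\E[e^{\lambda Z}]\le 1+K_d\,\E[Z]$ with $K_d=e^{K_{d-1}}-1$ closes under both operations. For depth $D$ and $\lambda=\Theta(1/D)$ it gives $\E[e^{\lambda Z}]\le 2$, hence $\Pr[Z\ge t]\le 2e^{-\Omega(t/D)}$, matching the Galton--Watson example. Your selection step uses only that $J$ is independent of $(X_1,\ldots,X_k)$, not that the $X_j$ are mutually independent. This matters, because in the schedule construction different slots of one child schedule depend on the same random delays.

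In that construction each hierarchy level contributes one selection and one sum, so $D\le 2t^\ast=O(\log n)$ and slot congestion is $O(\log^2 n)$ with high probability. Running the slot-to-round transformation with $\kappa=O(\log^2 n)$ costs one extra logarithmic factor in rounds and nothing in energy, so the main theorems survive. You should state and prove this depth-dependent $O(\log^2 n)$ version rather than look for an $O(\log n)$ argument.
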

\begin{proof}
We can represent $Z$ as a sum $Z = \sum_j Y_j$ of $\{0,1\}$-valued
leaf indicators arising from the underlying sum/selection tree.
The family $\{Y_j\}$ is negatively
associated,
and $\mu \coloneqq \E[Z] = \E[\sum_j Y_j] \le 1$ by assumption.
By a Chernoff bound for negatively associated Bernoulli variables~\cite{PanconesiSrinivasan97,DubhashiRanjan98},
for all $\delta > 0$,
\[
\Pr[Z \ge (1+\delta)\mu]
\;\le\;
e^{-\frac{\delta^2 \mu}{2+\delta}},
\]
which matches the independent case.
Thus $\Pr[Z \ge t] \le e^{-\Omega(t)}$ for all $t \ge 0$, and in
particular $\Pr[Z \ge c \log n] \in  n^{-\Omega(c)}$. Hence
$Z \in O(\log n)$ with high probability.
\end{proof}

By \Cref{lem:Zd_bound}, if the congestion in each slot of a randomized slot-based aggregation schedule of polynomial length is distributed as a random variable in $\mathcal{Z}$, then with high probability the congestion of every slot is $O(\log n)$. This will allow us later to convert such schedules into standard round-based schedules while incurring only a polylogarithmic overhead.

\begin{lemma}[Slot-based schedules $\rightarrow$ standard schedules]\label{lem-transform}
Let $S = (S_1, S_2, \ldots, S_t)$ be any slot-based aggregation schedule with $t \in n^{O(1)}$ such that the congestion is at most $O(\log n)$, and each node participates in at most $\mathcal{E}$ message transmission events. There is an $O(n \polylog n)$-round and $O(\polylog n)$-energy procedure that transforms $S$ into a standard aggregation schedule $S^\star$ that uses $O(t \log^2 n)$ rounds and $O(\mathcal{E} \log n)$ energy with high probability.
\end{lemma}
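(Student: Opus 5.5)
The plan is to execute each slot as a block of $O(\log^2 n)$ rounds, inside which the at most $O(\log n)$ message transmission events of that slot are resolved.

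\textbf{Inside one slot.} Let $K = c\log n$ bound the congestion. Each event $e$ picks a sub-slot index $\sigma(e)\in\{1,\dots,K\}$ by hashing its unique $O(\log n)$-bit identifier with a shared random seed. All participants of $e$ know the identifier, so they agree on $\sigma(e)$ without communicating. Two events of the same slot land in the same sub-slot with probability $1/K$.

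To turn a constant success probability into a high-probability guarantee, I would instead repeat each event. The slot becomes $\Theta(\log n)$ independent trials, and each trial consists of $K$ rounds. In every trial, each event is placed in an independently hashed round, so a slot costs $O(K\log n)=O(\log^2 n)$ rounds in total.

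In an event's chosen round, its transmitter transmits and its listeners listen; nodes not in any event sleep. A listener can fail only if some other event scheduled in the same round has a transmitter adjacent to it.

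\textbf{Reception probability.} The main obstacle is that the events of a slot need not be non-interfering; a slot may contain up to $O(\log n)$ events. The fix is to use $K = C\log n$ rounds per trial with $C$ large relative to the congestion constant. Then, in one trial, the probability that any of the $\le c\log n$ other events of the slot collides with $e$'s round is at most $c/C\le 1/2$. This bounds the failure probability for every listener of $e$ simultaneously.

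Over $\Theta(\log n)$ independent trials, $e$ therefore succeeds at least once with high probability. A union bound over all $t\in n^{O(1)}$ slots and all events completes this step.

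\textbf{Correctness.} Events within a slot are allowed to be executed in any order. Hence, once every event of slot $S_i$ has succeeded by the end of its block, the execution is consistent with some order respecting the slot order, and the aggregate is correct. I would assume, as is implicit in the slot semantics, that events within a slot do not depend on each other's outputs. A node that receives the same message repeatedly keeps one copy, and transmitters always send the same (already formed) message in every trial.

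\textbf{Costs.} The energy per event is $O(\log n)$, one round per trial, giving $O(\mathcal{E}\log n)$ energy in total. The schedule length is $O(t\log^2 n)$ rounds.

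\textbf{Preprocessing.} The transformation itself only requires disseminating the shared random seed (hash function) to all nodes. This is done by broadcasting $\polylog n$ bits from one node, which takes $O(n\polylog n)$ rounds and $O(\polylog n)$ energy, using the broadcast of~\cite{chang2018energy} or $\downcast$ on the root cluster of $\mathcal{H}$. Alternatively, the shared randomness $r(C)$ of the root cluster can be reused directly.
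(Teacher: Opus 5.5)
Your proposal is correct and follows essentially the same route as the paper. Each slot is expanded into a block of $\Theta(\log^2 n)$ rounds, events are mapped to rounds by a shared hash function whose seed is sent from the root by a $\downcast$, and an event succeeds once it is the only event of its slot in some round. The one small difference is in how events are placed: the paper lets each event join each of the $\Theta(\kappa\log n)$ rounds independently with probability $\Theta(1/\kappa)$ and then uses a Chernoff bound for the energy, whereas your $\Theta(\log n)$ trials with exactly one hashed round per event per trial make the $O(\log n)$ per-event energy deterministic.
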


\begin{proof}
Let $\kappa \in \Theta(\log n)$ be an upper bound on the congestion of $S$, i.e., every slot $S_i$ contains at most $\kappa$ message transmission events. For each slot $S_i$, we create
\[
R \in \Theta(\kappa \log n)
\]
consecutive rounds, which we view as a block associated with $S_i$.

For each message transmission event $e$ in $S_i$, and for each of the $R$ rounds in the block of $S_i$, we independently decide that $e$ \emph{joins} that round with probability
\[
p \in \Theta(1/\kappa)
\]
and otherwise $e$ is inactive in that round. In any round, all events that join that round are executed simultaneously. If a round ends up containing exactly one event $e$, then the transmission pattern prescribed by $e$ is realized without interference from other events in the same slot. The resulting collection of rounds, over all slots, defines a standard round-based schedule $S^\star$.

\paragraph{Analysis.} We now analyze the properties of $S^\star$.

Fix a slot $S_i$ and an event $e \in S_i$. The number of rounds in which $e$ joins is distributed as $\mathrm{Binomial}(R,p)$, with expectation
\[
\E[\#\text{joins of $e$}]
= R p
\in\Theta(\kappa \log n) \cdot \Theta(1/\kappa)
= \Theta(\log n).
\]
By a Chernoff bound, for a suitable choice of constants in $R$ and $p$, with probability at least $1 - n^{-\Theta(1)}$ we have that $e$ joins in $\Theta(\log n)$ rounds. Taking a union bound over all events, at most $t \kappa \in n^{O(1)}$ in total, with high probability every event joins in $O(\log n)$ rounds.

Next, we show that every event $e$ has at least one round in which it is the \emph{unique} joining event among the at most $\kappa$ events in its slot. In any fixed round of the block for $S_i$, the probability that $e$ joins while all other events in $S_i$ do not join is
\[
p (1-p)^{m-1},
\]
where $m \le \kappa$ is the number of events in $S_i$. For $p \in \Theta(1/\kappa)$ and $m \le \kappa$, this probability is $\Theta(1/\kappa)$. Therefore, the probability that $e$ does \emph{not} have a round in which it is the only joining event  among the $R \in \Theta(\kappa \log n)$ rounds in the block is at most
\[
(1 - \Theta(1/\kappa))^{\Theta(\kappa \log n)}
\in n^{-\Theta(1)}.
\]
Again, by a union bound over all events, with high probability every event $e$ has at least one round in which it is the only joining event in its slot and hence is successfully executed.

Thus, once we fix a choice of randomness for the allocation, the resulting schedule $S^\star$ is a valid standard aggregation schedule: every message transmission event from $S$ is realized at least once without intra-slot interference. The total number of rounds used by $S^\star$ is
\[
t \cdot R \in t \cdot \Theta(\kappa \log n)
= \Theta(t \log^2 n).
\]
Moreover, since each event is active in $O(\log n)$ rounds with high probability, and each node participates in at most $\mathcal{E}$ events, each node is awake in at most $O(\mathcal{E} \log n)$ rounds during the execution of $S^\star$.

\paragraph{Construction.} It remains to argue that we can implement the random allocation of events to rounds using only $O(n \polylog n)$ rounds and $O(\polylog n)$ energy. For this, it suffices to provide all nodes with a succinct description of a hash function that, given the identifier of an event $e$ and a round index within its slot block, determines whether $e$ joins that round.

We let the center of the root cluster of the hierarchical clustering $\mathcal{H}$ locally sample the description of a $\kappa$-wise independent hash function
\[
h : U \times [R] \to \{0,1\},
\]
where $U$ is the universe of message-transmission-event identifiers. The hash function is chosen so that $\Pr[h(e,j)=1] = p$ for each event $e$ and each round index $j \in [R]$. Such a hash family has a description of length $\Theta(\polylog n)$ bits. The center then disseminates this description to all nodes using $\downcast$, which costs $O(n \polylog n)$ rounds and $O(\polylog n)$ energy.

After this dissemination, every node can locally determine, for every event it participates in and every round index in the corresponding block, whether that event joins that round by evaluating $h$. This completes the construction of $S^\star$ within the claimed time and energy bounds.
\end{proof}

We briefly explain a key idea behind the proof of the following lemma.
Adding random delays to child schedules incurs an additive overhead
proportional to the \emph{maximum} child schedule length. If we did this
for all children at every level, this could cause a constant-factor
blow-up per level and make the overall schedule too long.
To control this, we split the children of each cluster into
\emph{large} and \emph{small} ones. The schedules for the large children are
run sequentially without random delays, while random delays are used
only for the small children, whose schedules are guaranteed to be much
shorter than that of the parent. This way, the per-level overhead from
random delays stays a small fraction of the parent schedule length.

\begin{lemma}[Constructing randomized slot-based schedules]\label{lem:rand_schedule}
We can compute a randomized slot-based aggregation schedule in $O(n \polylog n)$ rounds using
$O(\Delta^\ast \polylog n)$ energy with high probability such that the following conditions are met.
\begin{itemize}
    \item The length of the schedule is $O(n \polylog n)$.
    \item Each node participates in $O(\Delta^\ast \polylog n)$ message transmission events.
    \item For each slot, the congestion is a random variable in $\mathcal{Z}$.
\end{itemize}
\end{lemma}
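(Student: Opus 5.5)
We will construct the schedule $S(C)$ for every cluster $C \in \mathcal{H}$ by induction over the hierarchy, following the four-step recursion already described (leaf children, $\amc$ across the star, center child, $\downcast$). The length budget is
\[
|S(C)| \le L(C) := \alpha\,\widehat{\operatorname{size}}(C)\,(1-\beta)^{-\mathrm{lev}(C)},
\]
with $\alpha\in\polylog n$, $\beta = \Theta(1/\log n)$ and $\mathrm{lev}(C)\le t^\ast\in O(\log n)$. This makes every length $O(\widehat{\operatorname{size}}(C)\polylog n)$, hence $O(n\polylog n)$ at the root.

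The $\amc$ step and the $\downcast$ step are each realized as a block of $O(n\polylog n)$ consecutive slots in which every event is its own slot. The term $O(n\polylog n)$ should instead be tied to $|C|$. I would use single-cluster or cluster-local bounds: the $\amc$ on the star of $C$ costs $O(s\polylog n)\le O(|C|\polylog n)$ with $\mathcal{M}=\widehat{\operatorname{size}}(C)$, and $\downcast$ costs $O(\mathcal{D}_C\polylog n)\le O(|C|\polylog n)$. These blocks are parallel across different clusters, but they appear as separate events, so their contribution to congestion is handled by the random-shift analysis below.

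For Step~\ref{step1}, split the red children into \emph{large} ones, with $\widehat{\operatorname{size}}(C_i)\ge \beta\,\widehat{\operatorname{size}}(C)/\log n$, and \emph{small} ones. There are at most $O(\log n/\beta)=O(\log^2 n)$ large children. Their schedules, together with that of $C_0$, are run sequentially without delay. By the monotonicity in \Cref{lem:size_est}, their total length is at most $\sum L(C_i)$, which fits within the budget. The small children are run in parallel, each shifted by an independent uniform offset $\delta_i\in\{0,\dots,W-1\}$ drawn from the cluster's shared randomness, with
\[
W = \max_{i\text{ small}} L(C_i) \le O(\beta)\,L(C)(1-\beta)^{\mathrm{lev}(C)-1}/\log n.
\]
The occupied window then has length at most $2W$, which is a $\beta$-fraction of the parent budget. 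This is exactly what is absorbed by the $(1-\beta)^{-1}$ factor per level, and over $O(\log n)$ levels it costs only a constant.

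The main obstacle is proving that the congestion of each slot lies in $\mathcal{Z}$. Consider a fixed global slot $\tau$ and the congestion $Z_C(\tau)$ that events of $S(C)$ contribute at relative position $\tau$. Sequential pieces contribute the variable of whichever piece covers $\tau$, which is still in $\mathcal{Z}$. A delayed small child contributes $Z_{C_i}(\tau-\delta_i)$. This is a random selection over $\delta_i$, with index $J=\delta_i$ possibly pointing outside the schedule, which gives deterministic $0$, allowed in $\mathcal{Z}_0$. The sum over small children and over distinct top-level clusters at the same time is a parallel sum, and independence holds because different clusters use disjoint randomness.

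The delicate part is the constraint $\E\le 1$. One checks it inductively by requiring $\E[Z_C(\tau)]\le \mathrm{density}$. Each child's total event count, which is $O(|C_i|\polylog n)$, is spread over $W$ slots, so the expected load is at most $\sum_i \mathrm{events}(C_i)/W$. Normalizing each event as weight $1/\kappa$, or equivalently choosing $\alpha$ large relative to the per-node event rate, keeps this at most $1$. If the exact bound fails, I would rescale by inserting empty padding slots. I would also merge the red-child and $\downcast$ segments so that a single slot never mixes contributions from two independent layers at the same level.

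Finally, the energy accounting is as follows. A node participates in $O(\deg\polylog n)$ events from the $\amc$ calls, namely $2d$ per level over $O(\log n)$ levels, plus $O(\polylog n)$ events from $\downcast$. This gives $O(\Delta^\ast\polylog n)$. The construction itself only distributes random offsets via $\downcast$ per level, costing $O(n\polylog n)$ rounds and $O(\polylog n)$ energy.
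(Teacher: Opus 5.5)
There is a genuine gap in the step you flag as delicate. You draw each small child's offset from $\{0,\dots,W-1\}$ with $W=\max_{i\text{ small}}L(C_i)$, which is the \emph{largest} small-child budget. By your own estimate, a slot of the window then receives expected load about $\sum_{i\text{ small}}\mathrm{events}(C_i)/W$. This quantity grows with the \emph{number} of small children, and that number is not polylogarithmically bounded. In a spider graph, where $\Delta^\ast=\Theta(n)$, the center's cluster can acquire $\Theta(n)$ red children of size $2$. Each of their schedules contains at least one $\amc$ event and one $\downcast$ event, while $W=\Theta(\alpha)$. All these events land in your window of length $2W$. So in every realization some slot has congestion $\Omega(n/\alpha)$, and some fixed slot has expected congestion $\Omega(n/\alpha)$. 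For any $\alpha\in\polylog n$ this violates the requirement $\E\le 1$ in the definition of $\mathcal{Z}$, and it also breaks the $O(\log n)$-congestion hypothesis of \Cref{lem-transform}. Your remedies do not close the gap:
\begin{itemize}
\item Variables in $\mathcal{Z}$ are integer-valued and built from deterministic $\{0,1\}$ leaves, so an event cannot be given weight $1/\kappa$.
\item Enlarging $\alpha$ helps only once $\alpha$ is polynomial in $n$.
\item Padding the window until the expected load drops to $1$ means stretching it to roughly $\sum_{i\text{ small}}|S(C_i)|$. That contradicts your accounting, in which the small children cost only $2W$, a $\beta$-fraction of $L(C)$.
\end{itemize}

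The missing idea is to make the delay range as large as the \emph{total} length of the small children, and to pay for it out of the budget they would consume anyway if run sequentially. The paper draws $\tau_i$ uniformly from $\{0,\dots,Y-1\}$ with $Y=2(\ell-1)T(\widehat{\operatorname{size}}(C),d)-|S_{C_0}|-\sum_{i\in L}|S_{C_i}|\ge\sum_{i\in S}|S_{C_i}|$. This gives $\E[Z_{\mathrm{small}}]\le\sum_{i\in S}|S_{C_i}|/Y\le 1$. Meanwhile, the block of length $Y+L_{\mathrm{thresh}}(C,\ell)$ exceeds the purely sequential budget $2(\ell-1)T(\widehat{\operatorname{size}}(C),d)$ only by $L_{\mathrm{thresh}}(C,\ell)$, the maximum small-child length, which is at most $T(\widehat{\operatorname{size}}(C),d)$ for $\alpha$ large. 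In your notation, give the window length $\sum_{i\text{ small}}L(C_i)+W$ and make offsets uniform over its first $\sum_{i\text{ small}}L(C_i)$ positions. Since $\sum_{i=0}^{s}L(C_i)\le(1-\beta)L(C)$ by \Cref{lem:size_est}, the overhead is still only $W$ plus the $\amc$ and $\downcast$ blocks. That fits in $\beta L(C)$ for $\alpha$ a large enough polylogarithm, and the rest of your outline then goes through.

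Three smaller points:
\begin{itemize}
\item Each child should sample its own offset, since the parent's $\polylog n$-bit string $r(C)$ cannot supply $\Theta(n)$ independent offsets.
\item There is no reason to ``merge'' the red-child and $\downcast$ segments. Interleaving them would break correctness, because $f(C)$ exists only after $C_0$'s schedule has run.
\item The construction must also assign the large children their sequential intervals, and fix the internal randomness of the $\amc$ and $\downcast$ runs so that events get consistent identifiers. The paper does this with repeated $\upcast$/$\downcast$ inside each cluster and one execution via \Cref{lem-transform}.
\end{itemize}
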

\begin{proof}
As in the proof of \Cref{thm-main}, we construct a hierarchical clustering
$\mathcal{H} = \mathsf{Decomposition}(d)$ for some $d \le 2\Delta^\ast$ with
$t^\ast \in O(\log n)$ levels in $O(n \polylog n)$ rounds using
$O(\Delta^\ast \polylog n)$ energy with high probability.

We then apply the algorithm of \Cref{lem:size_est} to compute a size estimate
$\widehat{\operatorname{size}}(C)$ for every cluster $C$ in $\mathcal{H}$ in
$O(n \polylog n)$ rounds and $O(\polylog n)$ energy with high probability.

For ease of notation, write $T(n,d) \in O(n \polylog n)$ and
$E(n,d) \in O(d \polylog n)$ for the round and energy upper bounds of running both
single-cluster $\downcast$ and $\amc$ with $\mathcal{D}=n$, $\mathcal{M}=n$, and
the matching $M$ for $\amc$ is a $(1,d)$-component--node matching.
We assume that $T(\cdot,d)$ is \emph{superadditive}, that is,
\[
T(x,d) + T(y,d) \le T(x+y,d)
\]
for all $x,y$.

\paragraph{Recursive randomized schedules for clusters.}
We define, by induction on the level $\ell$, for every cluster
$C \in \mathcal{V}^{(\ell)}$ a randomized slot-based aggregation schedule $S_C$
with the following properties:
\begin{itemize}
  \item The length of $S_C$ is at most $2\ell \cdot T(\widehat{\operatorname{size}}(C),d)$.
  \item Each node participates in at most $\ell \cdot E(\widehat{\operatorname{size}}(C),d)$
        message transmission events.
  \item For each slot of $S_C$, the congestion is a random variable in $\mathcal{Z}$.
\end{itemize}
For the root cluster at level $t^\ast \in O(\log n)$, we obtain the desired schedule.

\paragraph{Base case.}
For $\ell = 1$, each cluster consists of a single node.  
We let $S_C$ be the empty schedule. All three properties hold trivially.

\paragraph{Inductive step.}
Assume $\ell > 1$, and let $C \in \mathcal{V}^{(\ell)}$ be a cluster at level $\ell$.
Let $C_0, C_1, \ldots, C_s$ denote its children, where $C_0$ is the unique blue child
and $C_1,\dots,C_s$ are red. Recall that $C$ is formed by connecting the children into a
star with center in $C_0$.

We construct $S_C$ in four steps.

\paragraph{Step 1: Aggregation in the red children.}
We first apply the schedules $S_{C_i}$ for all red children $C_1,\dots,C_s$.

Fix a sufficiently large constant $\alpha > 0$ and declare a red child $C_i$ to be
\emph{large} if
\[
\widehat{\operatorname{size}}(C_i) \;\ge\; \frac{1}{\alpha \log n} \cdot \widehat{\operatorname{size}}(C),
\]
and \emph{small} otherwise.  Since the children are disjoint and their size estimates
sum to at most $\widehat{\operatorname{size}}(C)$, there are at most $O(\log n)$ large children.

We treat large and small children differently.

\medskip
\noindent
\emph{Large red children.}
For each large child $C_i$, we run its schedule $S_{C_i}$ \emph{sequentially},
one after another, in any fixed order. The slots of $S_{C_i}$ are kept intact and
are reinterpreted as slots of $S_C$ in contiguous segments dedicated to the
large children.

By the induction hypothesis, every slot of $S_{C_i}$ has congestion in $\mathcal{Z}$,
so the same holds for these slots once they are placed into $S_C$ by a deterministic
reindexing.

\medskip
\noindent
\emph{Small red children and random delays.}
After all large children have finished, we schedule the small children in parallel
with random delays.

By the induction hypothesis,
\[
|S_{C_i}| \;\le\; 2(\ell-1) \cdot T(\widehat{\operatorname{size}}(C_i),d)
\]
for every child $C_i$.
For a small child $C_i$ we have
\[
\widehat{\operatorname{size}}(C_i)
\;\le\;
\frac{1}{\alpha \log n} \cdot \widehat{\operatorname{size}}(C),
\]
and hence, using monotonicity of $T(\cdot,d)$,
\[
|S_{C_i}|
\;\le\;
2(\ell-1) \cdot T\!\Bigl(\frac{1}{\alpha \log n} \cdot \widehat{\operatorname{size}}(C),d\Bigr).
\]

Define
\[
L_{\mathrm{thresh}}(C,\ell)
\;\coloneqq\;
2(\ell-1)\,T\!\Bigl(\frac{1}{\alpha \log n} \cdot \widehat{\operatorname{size}}(C),d\Bigr).
\]
Then $|S_{C_i}| \le L_{\mathrm{thresh}}(C,\ell)$ for every small child $C_i$.

Next, use the induction hypothesis and superadditivity of $T$ to bound the total
length of all child schedules:
\[
\sum_{i=0}^s |S_{C_i}|
\;\le\;
2(\ell-1) \sum_{i=0}^s T(\widehat{\operatorname{size}}(C_i),d)
\;\le\;
2(\ell-1)\,T(\widehat{\operatorname{size}}(C),d).
\]

Let the large red children be indexed by a set $L$ and the small children by a
set $S$.  Define
\[
Y
\;\coloneqq\;
2(\ell-1)\,T(\widehat{\operatorname{size}}(C),d)
\;-\;
|S_{C_0}|
\;-\;
\sum_{i \in L} |S_{C_i}|.
\]
Then
\[
\sum_{i \in S} |S_{C_i}|
\;\le\;
Y,
\]
and $Y \ge 0$.  If $S$ is empty we omit the random-delay block entirely.

For the small children, we create a dedicated block of slots of length
\[
B(C,\ell)
\;\coloneqq\;
Y + L_{\mathrm{thresh}}(C,\ell)
\]
in $S_C$.  For each small child $C_i \in S$, independently choose a delay
\[
\tau_i \in \{0,1,\ldots,Y-1\}
\]
uniformly at random, and shift the entire schedule $S_{C_i}$ by $\tau_i$ slots.
Since $|S_{C_i}| \le L_{\mathrm{thresh}}(C,\ell)$ for every small $C_i$, the shifted
schedule of each small child fits entirely inside this block.

The random delays $\tau_i$ are the only new source of randomness at level $\ell$.

\paragraph{Step 2: Across-matching communication.}
We transmit the aggregated results from all red children to the star center in $C_0$
by running $\amc$ with matching $M$ equal to $M^{(\ell-1)}$ restricted to edges between
$C_1, \ldots, C_s$ and $C_0$.
Clearly, $|M| \leq \widehat{\operatorname{size}}(C)$, so we realize this with
$\mathcal{M} = \widehat{\operatorname{size}}(C)$.

\paragraph{Step 3: Aggregation in the center.}
We continue the aggregation inside $C_0$ as follows.
Each node $v \in C_0$ that receives aggregated values from red children first locally
combines them with its own input $f(v)$ using the operator $\diamond$.
We then run the schedule $S_{C_0}$ inside $C_0$.  
By the induction hypothesis, $S_{C_0}$ is a slot-based schedule in which each slot has
congestion in $\mathcal{Z}$.  
The resulting value at the designated output node of $C_0$ is the aggregate $f(C)$.

\paragraph{Step 4: Disseminating the final result.}
Finally, we disseminate $f(C)$ to all nodes in $C$ by a $\downcast$ with parameters $\mathcal{D} = \widehat{\operatorname{size}}(C)$ and $\Delta = \widehat{\operatorname{size}}(C)$.

\paragraph{Length and number of events per node.}
We now bound the length of $S_C$ and the number of message transmission events per node.

For Steps 2 and 4, the $\amc$ and $\downcast$ together cost
\[
T(\widehat{\operatorname{size}}(C),d)
\]
rounds, and each node participates in at most
\[
E(\widehat{\operatorname{size}}(C),d)
\]
message transmission events. We view each of these rounds as
one slot of $S_C$.

For Steps 1 and 3, the child schedules $S_{C_0}, S_{C_1},\dots,S_{C_s}$ together contribute
\[
\sum_{i=0}^s |S_{C_i}|
\;\le\;
2(\ell-1)\,T(\widehat{\operatorname{size}}(C),d),
\]
and the small children use an additional block of length
$B(C,\ell) = Y + L_{\mathrm{thresh}}(C,\ell)$.
By the definition of $Y$,
\[
|S_{C_0}| + \sum_{i \in L} |S_{C_i}| + Y
\;=\;
2(\ell-1)\,T(\widehat{\operatorname{size}}(C),d),
\]
so Steps 1 and 3 together use at most
\[
2(\ell-1)\,T(\widehat{\operatorname{size}}(C),d) + L_{\mathrm{thresh}}(C,\ell)
\]
slots.

As $\ell \in O(\log n)$, by choosing the constant $\alpha$ sufficiently large, we can ensure that
\[
L_{\mathrm{thresh}}(C,\ell)
\;\le\;
T(\widehat{\operatorname{size}}(C),d)
\]
for all clusters $C$ and levels $\ell$.
Hence Steps 1 and 3 use at most
\[
(2\ell - 1)\,T(\widehat{\operatorname{size}}(C),d)
\]
slots, and including Steps 2 and 4 we obtain
\[
\text{length of } S_C
\;\le\;
2\ell \cdot T(\widehat{\operatorname{size}}(C),d).
\]

Since $C_0,\dots,C_s$ are disjoint, the induction hypothesis implies that, across all
child schedules, each node participates in at most
\[
(\ell-1) \cdot E(\widehat{\operatorname{size}}(C),d)
\]
events during Steps 1 and 3.
Adding the contribution from Steps 2 and 4, we get the desired bound
\[
\text{events per node}
\;\le\;
\ell \cdot E(\widehat{\operatorname{size}}(C),d).
\]

\paragraph{Congestion and membership in $\mathcal{Z}$.}
We next verify that the congestion in every slot of $S_C$ is a random variable in $\mathcal{Z}$.
Since Steps 1–4 are executed sequentially in time, each slot of $S_C$ belongs to
exactly one of these steps, and we can analyze them one by one.

\medskip
\emph{Slots from Step 2 and Step 4:}
In $\amc$ and $\downcast$, for the purpose of defining $S_C$, we fix an arbitrary
realization of their internal randomness. Then every slot arising from Step 2 or Step 4
has deterministically bounded congestion (say, congestion $0$ or $1$). Such a slot congestion
is a deterministic $\{0,1\}$-valued variable, and hence belongs to $\mathcal{Z}$.

\medskip
\emph{Slots from Step 3:}
Slots belonging to Step 3 correspond exactly to the slots of $S_{C_0}$,
reindexed in a deterministic way. By the induction hypothesis, each slot of $S_{C_0}$ has
congestion in $\mathcal{Z}$. Therefore, the congestion of
each slot of $S_C$ coming from Step 3 is also in $\mathcal{Z}$.

\medskip
\emph{Slots for large red children in Step 1:}
For each large child $C_i$, we run $S_{C_i}$ in its own contiguous segment.
Again by the induction hypothesis, every slot of $S_{C_i}$ has congestion in $\mathcal{Z}$.
Since we only reindex these slots deterministically when embedding them into $S_C$, their
congestion distributions remain in $\mathcal{Z}$.

\medskip
\emph{Slots for small red children in Step 1:}
Fix any slot $\sigma$ in the random-delay block for small children.
For a given small child $C_i$, let
\[
X_{i,1}, X_{i,2}, \dots, X_{i,|S_{C_i}|}
\]
denote the congestion random variables in the slots of $S_{C_i}$ in their original local order.
By the induction hypothesis, each $X_{i,j}$ is in $\mathcal{Z}$, and in particular
$\E[X_{i,j}] \le 1$.

After the random delay $\tau_i$ is chosen, either no slot from $S_{C_i}$ is aligned with
$\sigma$, or exactly one of the slots from $S_{C_i}$ is aligned with $\sigma$.
Thus the contribution of $C_i$ to the congestion in $\sigma$ is a random variable $Y_i$
that is either $0$ or equal to one of the $X_{i,j}$.

We can view this as follows: consider the multiset
\[
\{0, X_{i,1}, X_{i,2}, \ldots, X_{i,|S_{C_i}|}\},
\]
where $0$ is interpreted as a deterministic element of $\mathcal{Z}$.
The random delay $\tau_i$ induces a random index $J_i$ into this multiset,
and $Y_i = X_{i,J_i}$, with the convention $X_{i,0} = 0$.
By the \emph{random selection} rule in the definition of $\mathcal{Z}$, $Y_i \in \mathcal{Z}$
provided that $\E[Y_i] \le 1$.

For each local slot $j$ of $S_{C_i}$, there is at most one value of $\tau_i$
that maps this slot to $\sigma$, and $\tau_i$ is uniform in $\{0,\dots,Y-1\}$.
Therefore, each $X_{i,j}$ is mapped to $\sigma$ with probability at most $1/Y$, and
\[
\E[Y_i]
\;\le\;
\frac{1}{Y} \sum_{j=1}^{|S_{C_i}|} \E[X_{i,j}]
\;\le\;
\frac{|S_{C_i}|}{Y}.
\]
Since $|S_{C_i}| \le \sum_{h \in S}|S_{C_h}| \le Y$, we obtain $\E[Y_i] \le 1$
and hence $Y_i \in \mathcal{Z}$.

The total congestion in slot $\sigma$ coming from small children is
\[
Z_{\mathrm{small}} \coloneqq \sum_{i \in S} Y_i.
\]
The random variables $Y_i$ are independent across $i$, because they depend on the
independent delays $\tau_i$ and on independent randomness used in the lower-level schedules
$S_{C_i}$.  Each $Y_i$ belongs to $\mathcal{Z}$, and
\[
\E[Z_{\mathrm{small}}]
\;\le\;
\sum_{i \in S} \E[Y_i]
\;\le\;
\sum_{i \in S} \frac{|S_{C_i}|}{Y}
\;\le\;
\frac{\sum_{i \in S} |S_{C_i}|}{Y}
\;\le\;
1.
\]
By the \emph{parallel sum} rule in the definition of $\mathcal{Z}$, this implies
$Z_{\mathrm{small}} \in \mathcal{Z}$.

We have now covered all slots: each slot of $S_C$ belongs to exactly one of the four steps,
and in each case its congestion is a random variable in $\mathcal{Z}$, with expectation at most $1$.
This completes the inductive construction of the schedules $S_C$.

\paragraph{Construction cost and fixing randomness.}
It remains to argue that all the schedules $S_C$ can be constructed in
$O(n \polylog n)$ rounds using $O(\Delta^\ast \polylog n)$ energy.

The additional work, beyond building $\mathcal{H}$ and computing size estimates, consists of
identifying the large children of each cluster and assigning them disjoint time intervals,
and choosing and disseminating the random delays for small children.

Since each cluster $C$ has only $O(\log n)$ large children, we can identify them using
$O(\log n)$ iterations of $\upcast$ and $\downcast$ inside $C$.
The center of each child $C_i$ can locally decide whether $C_i$ is large by comparing
$\widehat{\operatorname{size}}(C_i)$ with $\widehat{\operatorname{size}}(C)$.
In each iteration, centers of large children that are not yet assigned a time interval
participate in an $\upcast$, sending $(\ID(C_i),\widehat{\operatorname{size}}(C_i))$.
The center $c(C)$ selects one such child and assigns it an interval, then
disseminates this via a $\downcast$ inside $C$. After $O(\log n)$ iterations,
all large children are assigned disjoint intervals.

Each small child $C_i$ then locally samples its delay $\tau_i$ and uses a $\downcast$ inside
$C_i$ to inform all nodes of this choice.

These procedures can be run in parallel for all clusters in the same level.
Processing the levels from $t^\ast$ down to $2$ therefore costs $O(n \polylog n)$ rounds
and $O(\polylog n)$ energy.

Finally, we must fix the randomness of all $\amc$ and $\downcast$ executions and ensure that
each message transmission event has a unique $O(\log n)$-bit identifier known to all
participating nodes. As in the proof of \Cref{lem-transform}, we can define the identifier
of an event using the round number of the underlying execution together with the cluster
identifier $\ID(C)$. The subtlety is that, without actually running the protocols, nodes
do not know in advance which rounds they will participate in.

To resolve this, we execute the slot-based schedule once, thereby fixing the randomness
of all $\amc$ and $\downcast$ executions in the construction and letting each node learn
exactly which rounds it participates in, so that the corresponding event identifiers can
be assigned consistently. This execution can be implemented using the transformation
procedure of \Cref{lem-transform}: the procedure only requires that, immediately before
each slot is run, all nodes that must transmit or listen in that slot are aware of this fact,
which is guaranteed by our construction. The additional round and energy costs incurred by
this execution are again bounded by \Cref{lem-transform}, namely $O(n \polylog n)$ rounds
and $O(\Delta^\ast \polylog n)$ energy.

Putting everything together, we obtain a randomized slot-based aggregation schedule $S$
with the claimed bounds on length, number of message transmission events per node, and slot congestion.
\end{proof}

\begin{proof}[Proof of \Cref{thm-main1}]
We first run the algorithm of \Cref{lem:rand_schedule} to compute a randomized slot-based schedule $S$ in $O(n \polylog n)$ rounds using
$O(\Delta^\ast \polylog n)$ energy with high probability.

By \Cref{lem:Zd_bound}, for every random variable $Z \in \mathcal{Z}$ we have
\[
Z \in O(\log n)
\]
with high probability. In particular, for each slot of $S$, the number of message transmission events is $O(\log n)$ with high probability.

We then apply \Cref{lem-transform} to transform the slot-based schedule $S$ into a standard round-based aggregation schedule. Since $S$ has length $O(n \polylog n)$  and congestion $O(\log n)$ per slot with high probability, \Cref{lem-transform} yields a standard aggregation schedule that uses $O(n \polylog n)$ rounds and
$O(\polylog n)$ energy with high probability.
\end{proof}

\section{Slow Matching Algorithm} \label{sect:matching-s}

In this section, we establish the following result.

\begin{lemma}[Slow matching algorithm] \label{lem-matching-weak}
Given a clustering $\mathcal{V}$, one can construct a $d$-almost-maximal
$(1,d)$-component--node matching $M$ in $O(n d \polylog n)$ rounds and
$O(d \polylog n)$ energy with high probability.
Moreover, for every edge $e=\{u,v\}\in M$, the endpoints $u$ and $v$
exchange $O(\polylog n)$ bits of information.
\end{lemma}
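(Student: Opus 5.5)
We build the matching in $d$ \emph{phases}. Each phase computes an ordinary almost-maximal $(1,1)$-component--node matching with respect to the currently eligible vertices; the phase matchings are then accumulated. A blue node $v$ stays eligible while $\deg_M(v)<d$, and a red cluster stays eligible while $\deg_M(C)=0$. After $d$ phases every blue node has degree at most $d$, so $M$ is a $(1,d)$-component--node matching.

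The final maximality condition will follow from the phase guarantee. If $C$ is unmatched and adjacent to a blue $v$ with $\deg_M(v)<d$, then $v$ was eligible in every phase. But $v$ gained an edge in fewer than $d$ phases, so in some phase both $v$ and $C$ were eligible and both stayed unmatched. Maximality of that phase's matching excludes this.

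Each phase simulates the degree-reduction process of~\cite{dani2021wake} described in \Cref{sec-overview-prior}. The stages use parameters $\Delta' = n, n/2, \ldots, 1$, and each stage runs $O(\Delta'\log n)$ trials. In a trial, each eligible red cluster is sampled with probability $\Theta(1/\Delta')$, where $\Delta'$ bounds the number of eligible blue neighbors of the cluster. The center decides sampling using the shared string $r(C)$, so no communication is needed for this decision. Each eligible blue node is sampled independently.

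A trial then proceeds in three steps. First, sampled blue nodes announce themselves to adjacent red nodes via $\sr$. Second, the red cluster uses $\upcast$ to pick one red node $u$ that heard a unique sampled blue neighbor $v$ (uniqueness is checked by a collision-free round in the decay schedule). Third, $u$ proposes to $v$, and $v$ accepts one proposal, learned again by a decay step. The accepting blue node and the winning red cluster then confirm and exchange the required $O(\polylog n)$ bits along the edge using a $\downcast$ inside $C$.

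To keep energy low, only sampled clusters and sampled blue nodes wake up in a trial. The schedule is fixed in advance by shared randomness, so unsampled vertices sleep. The expected number of trials a vertex participates in is $O(\log n)$ per stage. This gives $O(\polylog n)$ energy per phase and $O(d\polylog n)$ energy overall. The round count is dominated by intra-cluster operations costing $O(n\polylog n)$ per trial. To avoid paying $O(n\Delta')$ per stage, one can instead use random delays on the intra-cluster steps, which still gives polylogarithmic overhead. I aim for the bound $O(nd\polylog n)$ by charging each phase $O(n\polylog n)$ rounds, and would group trials so that only $O(\polylog n)$ sequential intra-cluster steps occur per phase.

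The main obstacle is the degree-reduction analysis on the cluster graph. We must show that any red cluster or blue node whose eligible degree exceeds $\Delta'/2$ gets matched with probability $\Omega(1/\Delta')$ per trial. The difficulty is that a red cluster's unique sampled neighbor is counted over all its member nodes, while contention at a blue node arises from sampled red clusters with possibly many adjacent members. I would handle this by sampling at the cluster level, so a blue node sees one potential proposer per cluster, and by defining degrees in the bipartite cluster--node graph. The argument of~\cite{dani2021wake} then transfers directly to this graph, and a Chernoff bound over $O(\Delta'\log n)$ trials, union bounded over vertices, stages and phases, finishes the proof.
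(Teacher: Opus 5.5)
Your outer reduction is sound. Running $d$ phases of a maximal $(1,1)$-matching on the eligible vertices gives a $(1,d)$-component--node matching. Your pigeonhole argument also correctly gives $d$-almost-maximality: a blue $v$ with $\deg_M(v)<d$ was unmatched in some phase in which an adjacent unmatched red cluster $C$ was also eligible.

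The gap is inside a phase: you never show how to compute that maximal matching in $O(n\polylog n)$ rounds. Each of your trials needs an $\upcast$ to choose a single proposer per sampled red cluster and a $\downcast$ to confirm. Each costs $\Theta(\mathcal{D}\polylog n)$ rounds, with $\mathcal{D}$ possibly $\Theta(n)$. The first stage alone has $\Theta(n\log n)$ trials, and they must run in sequence, because eligibility in trial $t+1$ depends on the intra-cluster outcome of trial $t$. That gives $\Theta(n^2\polylog n)$ rounds per phase. Random delays do not help here. They let independent subroutines on disjoint node sets share time at $O(\text{congestion}+\text{dilation})$ cost, but they cannot shorten a dependency chain whose length is the number of trials times the cluster depth.

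Your last remark, grouping trials so that only $O(\polylog n)$ intra-cluster steps occur per phase, points in the right direction. However, this is exactly where the real difficulty lies, and it breaks the argument you sketch. Within a group there is no $\upcast$ to pick one proposer. So several members of one red cluster, or the same cluster sampled in several trials of the group, can be matched to different blue nodes. You then need an unmatching step, and it can cancel precisely the edges your degree-reduction argument counts as progress. The analysis of~\cite{dani2021wake} therefore no longer transfers directly.

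The paper resolves this with $\matching(k)$: $k$ uncoordinated iterations with \emph{node-level} sampling at rate $1/(10k)$, then a single unmatching step per red cluster, so each execution costs $O((n+k)\polylog n)$ rounds. Its analysis splits red clusters into large and small according to $\frac{1}{100k}\sum_{u\in C}\deg_{\Gactive}(u)$.
\begin{itemize}
\item A second-moment argument shows a large cluster becomes matched with constant probability per execution (\Cref{lem_part1}).
\item Once all active clusters are small, a blue node of degree at least $k/2$ gains, with constant probability, an edge in its unique sampled iteration, where its partner is the only sampled relevant node of its cluster. This edge survives unmatching because a small cluster gains another edge in the other iterations with probability at most $10^{-4}$ (\Cref{lem_part2}).
\end{itemize}
Without this batching-plus-unmatching mechanism and its analysis, the $O(nd\polylog n)$ round bound is not established. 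The obstacle you single out, defining degrees in the cluster--node graph, is manageable but is not the main one.
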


The algorithm of \Cref{lem-matching-weak} is slow in the sense that its round complexity has an additional linear dependence on~$d$ compared to \Cref{lem-matching}. Since any $(1,d)$-component--node matching is also a $(1,2d)$-component--node matching, \Cref{lem-matching-weak} already implies \Cref{lem-matching} in the case where $d \in O(\polylog n)$.

Our goal is to construct a $d$-almost-maximal $(1,d)$-component--node
matching $M$. Initially, we set $M=\emptyset$.
A blue node is \emph{active} if its degree in $M$ is smaller than $d$.
A red cluster is \emph{active} if its degree in $M$ is zero.
A red node is active if it belongs to an active red cluster.
Let $\Gactive$ denote the bipartite subgraph of $G$ induced by all edges
between active blue nodes and active red nodes.

\begin{observation}[$d$-almost-maximality]\label{obs:almost}
If the maximum degree of $\Gactive$ is zero and $\deg_M(C)\leq 1$ for every red cluster $C$, then $M$ is $d$-almost-maximal.
\end{observation}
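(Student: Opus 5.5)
The plan is to verify the two almost-maximality conditions directly from the definitions, using $s=d$ and $k=d$ in the definition of $d$-almost-maximal $(1,d)$-component--node matching. The hypothesis $\deg_M(C)\le 1$ for every red cluster gives the first defining condition of a $(1,k)$-component--node matching. For the blue-node capacity, I will use the invariant of the algorithm: blue nodes only gain matched edges while active, i.e.\ while $\deg_M(v)<d$, so $\deg_M(v)\le d$ throughout. If needed I will state this invariant explicitly as part of the observation's context. So $M$ is a $(1,d)$-component--node matching, and only the two almost-maximality conditions remain.

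Both conditions follow by contraposition from the hypothesis that $\Gactive$ has no edges. For the first condition, take a red cluster $C$ with $\deg_M(C)=0$. By definition $C$ is active, and hence so is every node $w\in C$. Now let $u$ be any blue node adjacent to $C$ through an edge $\{w,u\}$ with $w\in C$. If $\deg_M(u)<d$, then $u$ is active, and $\{w,u\}$ is an edge between an active red node and an active blue node, so it lies in $\Gactive$. This contradicts the assumption that the maximum degree of $\Gactive$ is zero. Therefore $\deg_M(u)\ge d$.

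The second condition is symmetric. Let $v$ be a blue node with $\deg_M(v)<d$, so $v$ is active. Let $C$ be any red cluster adjacent to $v$ through an edge $\{w,v\}$ with $w\in C$. If $\deg_M(C)\ne 1$, then by the hypothesis $\deg_M(C)\le1$ we have $\deg_M(C)=0$. In that case $C$, and hence $w$, is active, and $\{w,v\}\in\Gactive$, again a contradiction. So $\deg_M(C)=1$.

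There is no real obstacle here. The only point needing care is the blue capacity bound $\deg_M(v)\le d$, which is not literally among the stated hypotheses and must come from how $M$ is grown: edges are added only at active blue nodes, and at most one at a time per blue node, or under an equivalent guarantee. I would either state this as a standing invariant or note that the observation concerns only the almost-maximality conditions, with the $(1,d)$ capacity maintained separately by the algorithm.
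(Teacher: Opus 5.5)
Your proof is correct and follows the same argument as the paper. Both proofs verify the two almost-maximality conditions directly: an unmatched red cluster, or a blue node with $\deg_M(v)<d$, is active, so any edge between them would lie in $\Gactive$, contradicting maximum degree zero. On the capacity condition, your second option (treating it as maintained separately by the algorithm) is what the paper does, and it is the one you need. The observation is reused in \Cref{sect:matching-fast} for $(1,2d)$-matchings, where a blue node can end with up to $1.01d$ matched edges, so an invariant $\deg_M(v)\le d$ would not hold there.
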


\begin{proof}
We verify the two defining conditions of a $d$-almost-maximal matching.

\begin{itemize}
    \item If $\deg_M(C)=0$ for a red cluster $C$, then every blue node $u$
    adjacent to $C$ satisfies $\deg_M(u)\ge d$.
    \item If $\deg_M(v)<d$ for a blue node $v$, then every red cluster $C$
    adjacent to $v$ satisfies $\deg_M(C)=1$.
\end{itemize}

Consider a red cluster $C$ with $\deg_M(C)=0$. Then $C$ is active, and no
node in $C$ has an active blue neighbor in $\Gactive$. Hence, every blue
node adjacent to $C$ must satisfy $\deg_M(u)\ge d$, establishing the first
condition.

Next, consider a blue node $v$ with $\deg_M(v)<d$. Then $v$ is active and
has no active red neighbor in $\Gactive$. Therefore, every red cluster
adjacent to $v$ must satisfy $\deg_M(C)=1$, establishing the second
condition.
\end{proof}

\subsection{The Matching Subroutine}

The basic building block of our algorithm is the procedure
$\matching(k)$, parameterized by an integer $k\ge 1$.
The procedure is executed on the current active subgraph $\Gactive$ and consists of a \emph{matching phase} followed
by an \emph{unmatching phase}.

\begin{description}
\item[Matching phase:]
The following steps are repeated for $k$ iterations.
\begin{itemize}
    \item Each node samples itself independently with probability
    $\frac{1}{10k}$.
    \item Add an edge $e=\{u,v\}$ to the matching if all of the following
    conditions hold:
    \begin{enumerate}
        \item $u$ is a sampled red node;
        \item $v$ is a sampled blue node;
        \item no other blue neighbor of $u$ is sampled;
        \item no other red neighbor of $v$ is sampled.
    \end{enumerate}
    \item Once a blue node is matched, it does not participate in the
    remaining iterations of the matching phase.
\end{itemize}

\item[Unmatching phase:]
For each red cluster $C$, if more than one edge incident to $C$ has been
added to the matching, then $C$ keeps exactly one such edge.
\end{description}

\begin{lemma}[Implementation of $\matching(k)$]\label{lem:impl_match}
The procedure $\matching(k)$ can be implemented in
$O((n+k)\polylog n)$ rounds and $O(\polylog n)$ energy with high
probability.
Moreover, for every edge $e=\{u,v\}$ added to the matching, the endpoints
$u$ and $v$ exchange $O(\polylog n)$ bits of information.
\end{lemma}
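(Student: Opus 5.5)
The plan is to implement the two phases separately. The matching phase uses only inter-cluster communication. The unmatching phase is one intra-cluster coordination step built from $\upcast$, $\downcast$ and $\amc$. Before the procedure starts, every active node must know that it is active. A blue node knows its own $\deg_M$. A red node learns whether its cluster is still active from the center, which broadcasts this with a single $\downcast$ at the end of the previous invocation, costing $O((\mathcal{D}+\Delta)\polylog n)\subseteq O(n\polylog n)$ rounds and $O(\polylog n)$ energy.

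\textbf{Matching phase.} Each of the $k$ iterations is simulated by a constant number of local-broadcast steps, each implemented with \Cref{lem:sr}; a node that is not sampled sleeps throughout the iteration.
\begin{enumerate}
\item Sampled blue nodes announce their IDs to sampled red nodes. Sampled blue nodes then repeat this as a \emph{loneliness} check.
\item Sampled red nodes announce their IDs to sampled blue nodes, followed by the same loneliness check.
\end{enumerate}
Without collision detection, $\sr$ alone cannot tell a receiver whether it has exactly one sampled neighbor. So I would use the standard trick: a receiver $u$ that heard some $v$ replies with $\ID(v)$, and other senders check for conflicts. Concretely, $u$ echoes $\ID(v)$ via $\sr$ with $u$ as sender. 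A sampled blue $v'$ hearing an echo of a different ID learns that $u$ has another sampled neighbor. To have $u$ itself learn non-uniqueness, a second sampled neighbor $v'\neq v$ sends a ``conflict'' notice back via another $\sr$ round.

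A cleaner alternative, which I would try first, is to run the decay rounds deterministically at each level $2^{-i}$ and let each sender record whether it was heard alone. I expect this uniqueness test to be the main obstacle.

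Once uniqueness on both sides is certified, the edge $\{u,v\}$ is added. A matched blue node stops participating in later iterations. Because each node is sampled with probability $1/(10k)$, a Chernoff bound shows that with high probability each node is sampled in only $O(\log n)$ of the $k$ iterations, so its energy is $O(\log n)\cdot O(\polylog n)$. The phase takes $O(k\polylog n)$ rounds.

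During the same $\sr$ exchanges, the endpoints of every new edge also exchange the required $O(\polylog n)$ bits, namely IDs, shared randomness $r(v)$ and cluster information. Splitting each payload into $O(\log n)$-bit pieces keeps this within the $\Radio$-$\CONGEST$ model.

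\textbf{Unmatching phase.} Within each red cluster $C$, run one $\upcast$ in which every red node matched in this invocation sends the ID of its matched edge. The center receives one arbitrary such edge and keeps it. It then announces the kept edge with a $\downcast$, and every other matched red node in $C$ cancels its edge. A red node matched in the matching phase was sampled, and in the relevant iteration it had exactly one sampled blue neighbor, so it holds at most one edge per iteration; if it holds several across iterations, it offers just one.

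Each cancelled edge must be reported to its blue endpoint so that the blue degree stays correct. The red node informs its partner $v$ using $\amc$ over the new (pre-cancellation) matching. This matching is $(1,O(\log n))$ per blue node with high probability, since a blue node is matched at most once per invocation, giving energy $O(\polylog n)$ and $O(n\polylog n)$ rounds with $\mathcal{M}=n$. Finally, each red center runs a $\downcast$ so that all nodes in its cluster learn whether the cluster is now inactive.

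Summing the costs gives $O((n+k)\polylog n)$ rounds and $O(\polylog n)$ energy with high probability. The remaining care is in the union bounds over all nodes and iterations.
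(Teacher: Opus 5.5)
\textbf{Gap: the uniqueness test in the matching phase.} You need a sampled red node to certify that it has \emph{exactly one} sampled blue neighbor, and likewise at $v$. You flag this yourself as the main obstacle and leave it unresolved, and neither of your two mechanisms works.

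The echo/conflict scheme fails because $\sr$ only guarantees that a receiver hears \emph{some} transmitting neighbor. A sampled blue node $v'$ adjacent to many echoing red nodes hears one arbitrary echo, typically not the one from $u$, so it need not learn that $u$ heard a different ID. Likewise, a conflict notice meant for $u$ can be lost among notices meant for other red nodes.

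Your decay-based alternative puts the knowledge on the wrong side: a sender never learns whether it was heard alone.

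\textbf{The missing observation.} You should not use $\sr$ for this step at all. Without collision detection, one plain round in which every node of a set transmits with probability $1$ is itself the uniqueness test: a listener receives a message if and only if exactly one of its neighbors transmitted. The paper implements each iteration with a three-round handshake.
\begin{itemize}
\item All sampled blue nodes transmit $\ID(v)$ while sampled red nodes listen. So $u$ hears something exactly when $v$ is its unique sampled blue neighbor.
\item Each red node that heard replies with $\ID(u)$ while sampled blue nodes listen. So $v$ hears $u$ exactly when $u$ is its unique responder.
\item Each blue node that heard transmits again, confirming the edge to $u$.
\end{itemize}
The transmitter sets are determined by the sampling, so repeating each round $O(\polylog n)$ times makes the same pairs succeed every time. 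This lets each matched pair exchange $O(\polylog n)$ bits, for $O(k\polylog n)$ rounds and $O(\polylog n)$ energy overall. It also removes the need to fix decay randomness to keep your multi-message payloads consistent.

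\textbf{A smaller issue in your unmatching phase.} Running $\amc$ over the pre-cancellation edges is not directly justified. Only blue nodes drop out after being matched, so a red node may hold several edges. That edge set is therefore not a component--node matching, and the per-blue-node node sets in the random-delay lemma are not disjoint. The paper instead re-runs the matching phase with the same randomness, so that each red endpoint can tell its partner whether the edge was kept.
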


\begin{proof}
Each iteration of the matching phase is implemented using a three-round
handshake.
\begin{itemize} \item In the first round, each sampled blue node $v$ transmits $\ID(v)$, while each sampled red node $u$ listens. 
\item In the second round, each sampled red node $u$ that receives a message in the first round transmits $\ID(u)$, while each sampled blue node $v$ listens. If $v$ receives $\ID(u)$, then $v$ knows that $\{u,v\}$ is added to the matching. 
\item In the third round, each sampled blue node $v$ that receives a message in the second round transmits $\ID(v)$ again, while each sampled red node $u$ that receives a message in the first round listens. If $u$ receives $\ID(v)$, then $u$ knows that $\{u,v\}$ is added to the matching. \end{itemize}

To allow the endpoints of each matched edge to exchange
$O(\polylog n)$ bits of information, each round can be replaced by
$O(\polylog n)$ rounds.
Thus, the matching phase costs $O(k\polylog n)$ rounds.

Each node is sampled in at most $O(\log n)$ iterations with high
probability by a Chernoff bound, implying $O(\polylog n)$ energy
consumption.

For the unmatching phase, we perform an $\upcast$ within each red cluster
to its cluster center, which selects one incident edge if any exist,
followed by a $\downcast$ to disseminate the decision.
This costs
$O(n\polylog n)$ rounds and $O(\polylog n)$ energy with high probability.

Finally, to notify blue nodes of unmatched edges, we repeat the matching
phase using the same randomness, allowing red endpoints to communicate the
unmatching decisions.
\end{proof}

\subsection{Degree Reduction Framework}

The algorithm for \Cref{lem-matching-weak} is based on a degree reduction
framework.

\begin{lemma}[Degree reduction]\label{lem:deg_reduce}
Suppose the maximum degree of the current $\Gactive$ is at most $k$.
Then running $\matching(k)$ for $O(d\log n)$ iterations reduces the
maximum degree of $\Gactive$ to at most $k/2$ with high probability.
\end{lemma}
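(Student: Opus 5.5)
The plan is to fix a node $x$ whose degree in $\Gactive$ exceeds $k/2$ at the start of some call of $\matching(k)$. I will show that, whatever happened earlier, this call makes ``progress'' at $x$ with probability $\Omega(1)$. Degrees in $\Gactive$ never increase, since nodes and clusters only become inactive. So the per-call bound holds conditionally on the full history, and a martingale-style Chernoff bound turns it into a high-probability bound over $O(d\log n)$ calls. A union bound over all $n$ nodes finishes the proof. Red and blue nodes need different notions of progress.

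\emph{Red nodes.} Let $u$ be a red node with $\deg_{\Gactive}(u) > k/2$. In a single iteration of the matching phase, the following happen simultaneously with probability $\Omega(1/k)$: $u$ is sampled, exactly one blue neighbor $v$ of $u$ is sampled, and no other red neighbor of $v$ is sampled. The estimate is the usual decay calculation. Sampling probabilities are $1/(10k)$ and all degrees are at most $k$, so the number of sampled blue neighbors of $u$ is $1$ with constant probability. Given that, the $\le k$ red neighbors of $v$ are all unsampled with probability at least $(1-1/(10k))^{k}\ge e^{-1/5}$. Blue nodes that get matched leave the phase, but this only lowers contention. Hence, over the $k$ iterations, some edge incident to $C(u)$ is added with constant probability. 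The unmatching phase keeps exactly one such edge, so $C(u)$ becomes inactive and $\deg_{\Gactive}(u)$ drops to $0$. Consequently, after $O(\log n)$ calls every red node has degree at most $k/2$ with high probability.

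\emph{Blue nodes.} Let $v$ be blue with degree $>k/2$. The same calculation shows that $v$ acquires a \emph{tentative} edge $\{u,v\}$ during the matching phase with constant probability. The danger is that the unmatching phase at $C(u)$ cancels this edge in favor of another. To handle this I will split red clusters, at the start of each call, into \emph{large} and \emph{small} ones. A cluster is large if the expected number of tentative edges incident to it in the call is at least a small constant $\beta$.
\begin{enumerate}
\item Every active large cluster gets at least one tentative edge with constant probability. I will prove this by a second-moment (Paley--Zygmund) bound on the number of tentative edges. Such a cluster then becomes inactive, so after $O(\log n)$ calls every remaining active red cluster is small, with high probability.
\item Assume from now on that all active clusters are small. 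Then, conditioned on $\{u,v\}$ being tentative, the expected number of other tentative edges at $C(u)$ stays $O(\beta)$. Hence $\{u,v\}$ survives unmatching with constant probability, for instance by letting the center choose uniformly among the tentative edges, or by bounding the probability that any other edge exists.
\end{enumerate}
Given these two facts, each call either permanently increases $\deg_M(v)$ with probability $\Omega(1)$ or finds $\deg_{\Gactive}(v)\le k/2$ already. A blue node leaves $\Gactive$ after $d$ permanent edges. A Chernoff bound therefore gives that within $O(d\log n)$ calls, $v$ either collects $d$ edges and becomes inactive or its degree falls to at most $k/2$, with high probability. This is where the $d$ in the iteration count comes from.

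The main obstacle is the conditional estimate in step 2. Conditioning on $\{u,v\}$ being tentative correlates it with other tentative edges of $C(u)$ through shared neighbors and through the ``matched blue nodes drop out'' rule. I would first try to bound, for each other edge $\{u',v'\}$ at $C(u)$, the joint probability of both edges being tentative by a constant times the product of their marginals. This should hold because the events concern samples in different rounds or at mostly disjoint neighborhoods, and conditioning on a few nodes being sampled or unsampled changes other probabilities by only constant factors when all degrees are at most $k$. Summing over the other edges then gives the $O(\beta)$ bound from smallness, and the same pairwise estimate supplies the second moment needed in step 1.
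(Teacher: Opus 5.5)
Your plan has the same architecture as the paper's proof. Red clusters are split into large and small by (a proxy for) the expected number of tentative edges. A second-moment bound shows that a large cluster gets an edge with constant probability (\Cref{lem_part1}). Once all clusters are small, a blue node of degree at least $k/2$ gains a surviving edge with constant probability (\Cref{lem_part2}). A Chernoff bound over $O(d\log n)$ calls finishes.

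The difference is in how survival is proved. The paper avoids correlation estimates there. It conditions on $v$ being sampled in exactly one iteration, and builds into $\mathcal{F}_C$ the requirement that $u$ is the only sampled node of $R(C)$ in that iteration. So $C$ cannot gain a competing edge in that iteration, and competing edges in other iterations are excluded by a union bound using $|E(C)|\le k/100$.

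Your pairwise bound also holds and is easier than you fear. Let $T_{e,t}$ be the event that $e$ is tentative in iteration $t$.
\begin{itemize}
\item For $t<t'$, $T_{e',t'}$ needs both endpoints of $e'$ sampled in iteration $t'$, which is independent of iterations $1,\dots,t$. Hence $\Pr[T_{e,t}\wedge T_{e',t'}]\le \Pr[T_{e,t}]\,p^2$.
\item Within one iteration, the joint probability is at most $p^4$.
\item $\Pr[T_{e,t}]\ge \frac{9}{10}e^{-1/5}p^2$, since the blue endpoint is still participating with probability at least $9/10$.
\end{itemize}
Summing over iterations, then over $e'$ at $C(u)$, then over the mutually exclusive tentative edges at $v$, gives constant survival probability.

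What your route buys is a single estimate that drives both parts: Paley--Zygmund on the whole call, with no bounded-weight subset $F$ and no per-iteration amplification. It also handles the drop-out of matched blue nodes rigorously, which the paper's Part~1 glosses over by treating iterations as independent. The paper's route gives explicit constants with less bookkeeping.

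Two corrections.
\begin{itemize}
\item Your separate red-node paragraph is unnecessary. With a small enough threshold, smallness already forces $\deg_{\Gactive}(u)<k/2$ for every $u\in C$ (cf.\ \Cref{obs:progress}). Its justification is also wrong: blue neighbors that drop out are lost partners for $u$, not merely less contention.
\item Define smallness via the monotone proxy $\frac{1}{100k}\sum_{u\in C}\deg_{\Gactive}(u)$, as the paper does. The exact expectation can grow when other red clusters become inactive, because that lowers contention at the blue endpoints. So smallness defined by the exact expectation need not persist.
\end{itemize}
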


We first show how to derive \Cref{lem-matching-weak} from
\Cref{lem:impl_match} and \Cref{lem:deg_reduce}.

\begin{proof}[Proof of \Cref{lem-matching-weak}]
We initialize $M=\emptyset$.
For $k=n,n/2,n/4,\ldots$, we run $\matching(k)$ for $O(d\log n)$ iterations,
stopping once $k<1$.

By \Cref{lem:deg_reduce}, at termination, $\Gactive$ has maximum degree
zero.
By \Cref{obs:almost}, the resulting matching $M$ is $d$-almost-maximal.

Throughout the algorithm, we maintain $\deg_M(v)\le d$ for every blue node
$v$ and $\deg_M(C)\le 1$ for every red cluster $C$, so $M$ is a
$(1,d)$-component--node matching.
Moreover, the required information exchange occurs during the execution
of the matching phase in which each edge is added.

There are $O(\log n)$ values of $k$, and for each we execute
$\matching(k)$ for $O(d\log n)$ iterations.
By \Cref{lem:impl_match}, each execution costs $O(n\polylog n)$ rounds and
$O(\polylog n)$ energy.
Thus, the overall round and energy complexities are
$O(nd\polylog n)$ and $O(d\polylog n)$, respectively.
\end{proof}

\subsection{Analysis}

We now prove \Cref{lem:deg_reduce}.
For this purpose, we consider the following quantity of a red cluster
$C$:
\[
\frac{1}{100k}\sum_{u\in C}\deg_{\Gactive}(u).
\]
Intuitively, this quantity estimates the expected number of matched edges
incident to $C$ during one execution of $\matching(k)$.
A red cluster is called \emph{small} if
$\frac{1}{100k}\sum_{u\in C}\deg_{\Gactive}(u) \leq \frac{1}{10000}$; otherwise, it is \emph{large}. We clarify that since $\Gactive$ changes across iterations, a cluster that is currently large may become small after a few iterations.

\begin{observation}[Small $\rightarrow \, \deg_{\Gactive}(u) \leq k/2$]\label{obs:progress}
If a red cluster $C$ is small, then every node $u\in C$ has degree at most
$k/2$ in $\Gactive$.
\end{observation}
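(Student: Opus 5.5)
The plan is a direct one-line estimate. Degrees in $\Gactive$ are nonnegative, so a single node's degree is bounded by the sum over its whole cluster, and the smallness threshold then forces that sum to be far below $k/2$.

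Fix a small red cluster $C$ and any node $u\in C$. Since every term $\deg_{\Gactive}(w)$ with $w\in C$ is nonnegative, we have
\[
\deg_{\Gactive}(u)\;\le\;\sum_{w\in C}\deg_{\Gactive}(w).
\]
Smallness of $C$ means
\[
\frac{1}{100k}\sum_{w\in C}\deg_{\Gactive}(w)\le \frac{1}{10000},
\]
so multiplying through by $100k$ gives
\[
\sum_{w\in C}\deg_{\Gactive}(w)\le \frac{100k}{10000}=\frac{k}{100}.
\]
Combining the two displays yields $\deg_{\Gactive}(u)\le k/100\le k/2$, which is the claim.

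There is no real obstacle here. The constants $1/(100k)$ and $1/10000$ leave a slack factor of $50$ below $k/2$, and the only fact used is nonnegativity of the degrees. In particular, the bound holds for every red cluster at any moment: the definition of small is evaluated on the current $\Gactive$, so no probabilistic or temporal considerations enter.
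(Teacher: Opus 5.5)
Your proof is correct and matches the paper's: both multiply the smallness inequality by $100k$ to get $\sum_{w\in C}\deg_{\Gactive}(w)\le k/100$, then bound each single degree by this sum. The only difference is that you state explicitly the nonnegativity step that the paper leaves implicit.
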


\begin{proof}
Since
\[
\frac{1}{100k}\sum_{u\in C}\deg_{\Gactive}(u)\le \frac{1}{10000},
\]
we have
\[
\deg_{\Gactive}(u)\le \frac{100k}{10000}=\frac{k}{100} <\frac{k}{2}
\]
for every $u\in C$.
\end{proof}
 
\paragraph{Structure of the analysis.}
Throughout, assume that the maximum degree of $\Gactive$ is at most $k$.
The analysis proceeds in two parts.
In Part~1, we show that after $O(\log n)$ executions of $\matching(k)$, all
active red clusters become small or inactive with high probability.
In Part~2, assuming all active red clusters are small, we show that further $O(d \log n)$
executions of $\matching(k)$ reduce the degrees of active blue nodes to at most $k/2$ with high probability.

\begin{lemma}[Part 1]\label{lem_part1}
Suppose the maximum degree of $\Gactive$ is at most $k$.
There exists a constant $c_1>0$ such that for every red cluster $C$
that is active and large at the beginning of an execution of $\matching(k)$,
with probability at least $c_1$, cluster $C$ becomes inactive or small by the
end of $\matching(k)$.
\end{lemma}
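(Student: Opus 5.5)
The plan is to show that a large active red cluster $C$ gets at least one edge added to the matching during the matching phase of $\matching(k)$ with constant probability. After the unmatching phase, $C$ keeps exactly one edge, so $\deg_M(C)=1$ and $C$ becomes inactive. Everything therefore reduces to lower-bounding $\Pr[\text{some edge incident to } C \text{ is added in the matching phase}]$.

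Let $W=\frac{1}{100k}\sum_{u\in C}\deg_{\Gactive}(u)$; since $C$ is large, $W>1/10000$. I will split into two cases according to whether $W$ is at most a constant or larger.

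\textbf{Case $W\le 1$.} I focus on the first iteration of the matching phase. For each active edge $e=\{u,v\}$ with $u\in C$, let $A_e$ be the event that $e$ is added in that iteration. The event $A_e$ requires that $u$ and $v$ are sampled, that none of the at most $k-1$ other blue neighbors of $u$ is sampled, and that none of the at most $k-1$ other red neighbors of $v$ is sampled. Each node is sampled with probability $\frac{1}{10k}$ and all degrees are at most $k$, so
\[
\Pr[A_e]\ \ge\ \frac{1}{100k^2}\left(1-\frac{1}{10k}\right)^{2k}\ \ge\ \frac{c}{k^2}.
\]
A single iteration gives each edge only $\Theta(1/k^2)$ probability, so I will sum over all $k$ iterations. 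Up to the first success for $C$, iterations are essentially fresh, and blue nodes dropping out only helps the conditioning. I will let $X$ be the number of pairs (iteration, edge incident to $C$) for which the edge is isolated-and-added in that iteration, ignoring blue nodes that drop out because of other clusters; I will handle that dependence by restricting to the first successful iteration. Then $\E[X]\ge c'\cdot 100 k\cdot W\cdot \frac{1}{k}=\Theta(W)$, which is $\Omega(1)$ here. To bound $\Pr[X\ge1]$ I will use the second moment inequality $\Pr[X\ge1]\ge \E[X]^2/\E[X^2]$. For the second moment, two events in different iterations are independent, and within one iteration two edges at $C$ are both added only if four specific nodes (or three, if they share a blue endpoint, which is impossible since the blue endpoint must have a unique sampled red neighbor, or if they share $u$, also impossible) are sampled. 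This gives $\E[X^2]\le \E[X]+\E[X]^2$, hence $\Pr[X\ge1]\ge \E[X]/(1+\E[X])=\Omega(1)$.

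\textbf{Case $W>1$.} I will pass to a subset $C'$ of the red nodes of $C$ whose total active degree gives $W'\in[1/10000,1]$-ish, or more simply apply the same argument to the edges of a sub-collection of nodes of $C$ whose total degree is $\Theta(k)$. This works unless a single node alone has degree exceeding that range, but every single node has degree at most $k$, so its contribution to $W$ is at most $1/100$ and a greedy prefix achieves $W'\in[1/10000,1]$. Restricting $X$ to this prefix, $\E[X]=\Theta(1)$ and the second-moment bound goes through unchanged.

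The main obstacle is making the second-moment calculation rigorous across iterations. The issue is that blue nodes matched earlier (possibly via other clusters) stop participating, which breaks the independence between iterations. I will handle this by defining $X$ only over events in which the blue endpoint is still participating and by lower-bounding the first moment using a union bound that excludes earlier removal. That removal has probability at most a constant fraction since the blue node's total sampling probability over $k$ iterations is $\le 1/10$. An alternative is to use only the first iteration and choose the prefix so that the total degree is $\Theta(k^2)$ edges' worth of weight; however, with the cap $\deg\le k$ per node, that requires many nodes and risks overlapping blue neighbors, so I will try the multi-iteration version first.
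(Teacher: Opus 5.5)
Your argument is correct in substance, but it is organized differently from the paper's. The paper fixes a set $F$ of $\Theta(k)$ edges leaving $C$ and runs the second-moment bound inside a \emph{single} iteration. This gives $\Pr[\text{$C$ gains an edge in that iteration}]\ge c'/k$, which it then amplifies by $(1-q)^k\le e^{-c'}$ over the $k$ iterations. You instead run one Paley--Zygmund computation over all (iteration, edge) pairs of the whole matching phase, with $\E[X]=\Theta(1)$ after your greedy node-prefix truncation (the analogue of the paper's $F$). The paper's version is shorter and needs no cross-iteration terms. However, its amplification step tacitly treats every iteration as if all blue endpoints of $F$ still participate. That is exactly the drop-out dependence you identify, and your global formulation is the natural place to handle it.

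To make your fix rigorous, drop two things: the claim that drop-outs ``only help'' (a blue neighbor of $C$ matched to another cluster makes its edges to $C$ unusable), and the ``first successful iteration'' device. Use domination instead.
\begin{itemize}
\item Couple by letting every node flip its coin in every iteration, with dropped blue nodes ignoring theirs. With $p=\frac{1}{10k}$, let $A_{e,t}$ be the pure sampling event for $e=\{u,v\}$ in iteration $t$: both endpoints sampled, and no other neighbor in the initial $\Gactive$ sampled.
\item Let $Y_{e,t}=A_{e,t}\cdot\mathbf{1}[v\text{ still participates in iteration }t]$. Then $Y_{e,t}=1$ forces $e$ to be added.
\item The event ``$v$ is unsampled in iterations $1,\dots,t-1$'' implies participation and is independent of iteration $t$. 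Hence $\E[Y_{e,t}]\ge(1-p)^{k-1}\E[A_{e,t}]\ge\frac{9}{10}\E[A_{e,t}]$.
\item Since $Y\le A$ pointwise, $\E[Y^2]\le\E[A^2]$. Your cross-iteration independence is genuinely valid for the $A$-events.
\end{itemize}

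One small correction in the second moment: two same-iteration events for node-disjoint edges can be positively correlated. Using $\Pr[A_e\cap A_f]\le p^4$ and $\Pr[A_e]\ge p^2(1-p)^{2k-2}$, you get $\E[A^2]\le\E[A]+e^{O(1)}\E[A]^2$, not $\E[A]+\E[A]^2$. This still yields $\Pr[Y\ge 1]\ge 0.81\,\E[A]/(1+e^{O(1)}\E[A])=\Omega(1)$. That bound is $\Omega(1)$ whenever $\E[A]=\Omega(1)$, so your split into the cases $W\le 1$ and $W>1$ is not actually needed.
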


\begin{proof}
Fix an active red cluster $C$ that is \emph{large} at the beginning of an
execution of $\matching(k)$.
Let $E_C$ be the set of edges in the current $\Gactive$ with exactly one
endpoint in $C$, so that
\[
|E_C|=\sum_{u\in C}\deg_{\Gactive}(u).
\]
Since $C$ is large, we have
\begin{equation}\label{eq:largeEc_new}
|E_C| \;>\; \frac{k}{100}.
\end{equation}

\paragraph{Restricting to a bounded-weight subset of edges.}
We will show that in a \emph{single} iteration of the matching phase, cluster
$C$ gains an incident matched edge with probability $\Omega(1/k)$, and then
amplify this over the $k$ independent iterations of $\matching(k)$.
To obtain the $\Omega(1/k)$ bound we use a second-moment argument for a sum of
indicator variables.
For this purpose, it is convenient to ensure that the relevant expectation is
bounded by a constant.
If we sum over \emph{all} edges in $E_C$, the expectation may exceed~$1$, 
as $|E_C|$ can be large.
We therefore restrict attention to a subset of edges whose size is only
$\Theta(k)$, which already suffices to witness that $C$ is large.

Formally, choose any subset $F\subseteq E_C$ with
\begin{equation}\label{eq:F_choice}
\frac{k}{100} \;<\; |F| \;\le\; \frac{k}{100}+k \;\le\; \frac{101}{100}k,
\end{equation}
which exists by \eqref{eq:largeEc_new}.

\paragraph{Random variables for one iteration.}
Consider one fixed iteration of the matching phase.
Let $p\coloneqq \frac{1}{10k}$.
For each edge $e=\{u,v\}\in F$, let $X_e$ be the indicator variable of the
event that $e$ is added to the matching in this iteration, i.e.,
\begin{enumerate}
    \item $u$ and $v$ are both sampled, and
    \item no other neighbor of $u$ or $v$ is sampled.
\end{enumerate}
Let $X\coloneqq \sum_{e\in F} X_e$.
Whenever $X\ge 1$, cluster $C$ gains an incident matched edge in this
iteration.

\paragraph{First moment.}
Fix $e=\{u,v\}\in F$.
Since $\Gactive$ has maximum degree at most $k$,
\begin{align*}
\Pr[X_e=1]
&= p^2 \cdot (1-p)^{(\deg(u)-1)+(\deg(v)-1)} \notag\\
&\ge p^2\cdot (1-p)^{2k-2}.
\end{align*}
Using $(1-\frac{1}{10k})^{2k-2}\ge e^{-1/5}$ for all $k\ge 1$, we obtain
\begin{equation}\label{eq:Xe_lower_new}
\Pr[X_e=1] \;\ge\; \frac{e^{-1/5}}{100k^2}.
\end{equation}
Let $\mu\coloneqq \E[X]$. Then
\begin{equation}\label{eq:mu_bounds_new}
\mu
= \sum_{e\in F}\Pr[X_e=1]
\;\ge\; |F|\cdot \frac{e^{-1/5}}{100k^2}
\;>\; \frac{e^{-1/5}}{10^4}\cdot \frac{1}{k},
\end{equation}
where the strict inequality uses $|F|>k/100$.
Moreover, by \eqref{eq:F_choice} and \eqref{eq:Xe_lower_new},
\begin{equation}\label{eq:mu_upper_new}
\mu
\;\le\; |F|\cdot p^2
\;\le\; \frac{101}{100}k\cdot \frac{1}{100k^2}
\;<\; \frac{2}{100}\cdot \frac{1}{k}
\;<\; 1.
\end{equation}

\paragraph{Second moment.}
We bound $\E[X^2]$.
For two distinct edges $e,f\in F$:
\begin{itemize}
    \item If $e$ and $f$ share an endpoint, then $\Pr[X_e=X_f=1]=0$.
    \item Otherwise, $e$ and $f$ are node-disjoint, and thus
    $\Pr[X_e=X_f=1]\le p^4$ (since all four endpoints must be sampled).
\end{itemize}
Hence,
\begin{align}
\E[X^2]
&= \sum_{e\in F} \E[X_e] + \sum_{\substack{e,f\in F\\ e\neq f}} \E[X_eX_f]
\;\le\; \mu + |F|^2\cdot p^4
= \mu + \frac{|F|^2}{10^4k^4}.
\label{eq:EX2_new}
\end{align}
Using \eqref{eq:mu_bounds_new}, we have
$\mu \ge |F|\cdot e^{-1/5}/(100k^2)$, i.e.,
\[
\frac{|F|}{k^2}\le 100e^{1/5}\mu.
\]
Plugging this into \eqref{eq:EX2_new} gives
\[
\E[X^2]
\le \mu + \frac{(100e^{1/5}\mu)^2}{10^4}
= \mu + e^{2/5}\mu^2
< (1+e^{2/5})\mu,
\]
where the last inequality uses $\mu < 1$ from \eqref{eq:mu_upper_new}.

\paragraph{A success in one iteration.}
By the second-moment bound $\Pr[X\ge 1]\ge \E[X]^2/\E[X^2]$,
\[
\Pr[X\ge 1]
\ge \frac{\mu^2}{(1+e^{2/5})\mu}
= \frac{\mu}{1+e^{2/5}}
\;\ge\; \frac{c'}{k},
\]
for a constant $c'>0$ by \eqref{eq:mu_bounds_new}.

\paragraph{Over $k$ iterations.}
The $k$ iterations of the matching phase use independent randomness.
Let $q\ge c'/k$ be the lower bound on the probability that $C$ gains an
incident matched edge in a fixed iteration. Then
\[
\Pr[\text{$C$ gains no incident matched edge during the matching phase}]
\le (1-q)^k
\le e^{-qk}
\le e^{-c'}.
\]
Thus, with probability at least $1-e^{-c'}$, cluster $C$ gains at least one
incident edge by the end of the matching phase.

In the subsequent unmatching phase, cluster $C$ keeps exactly one incident
edge if it has more than one. Therefore, whenever $C$ gains at least one edge
in the matching phase, it ends $\matching(k)$ with $\deg_M(C)=1$ and hence is
inactive. This establishes the lemma with $c_1 \coloneqq 1-e^{-c'}$.
\end{proof}

\begin{lemma}[Part 2]\label{lem_part2}
Suppose the maximum degree of $\Gactive$ is at most $k$, and all active red
clusters are small.
There exists a constant $c_2>0$ such that for every active blue node
$v$ with $\deg_{\Gactive}(v)\ge k/2$ at the beginning of $\matching(k)$,
with probability at least $c_2$, node $v$ gains a new incident edge in the
matching by the end of $\matching(k)$.
\end{lemma}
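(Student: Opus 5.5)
We fix an active blue node $v$ with $\deg_{\Gactive}(v)\ge k/2$ and show two things. First, in a single iteration of the matching phase, $v$ gets matched tentatively with probability $\Omega(1/k)$. Second, a tentative edge $\{u,v\}$ survives the unmatching phase with constant probability, because the red cluster $C\ni u$ is small. Amplifying over the $k$ independent iterations then gives a constant $c_2$.

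\textbf{Tentative matching in one iteration.} We consider one iteration in which $v$ has not yet been matched during the current matching phase; if it was matched earlier in the phase, the "tentative success" event has already occurred. For each red neighbor $u$ of $v$ in $\Gactive$, let $A_u$ be the event that $v$ and $u$ are sampled, no other red neighbor of $v$ is sampled, and no other blue neighbor of $u$ is sampled. These events are pairwise disjoint, since each one forces a different unique sampled red neighbor of $v$. With $p=\frac{1}{10k}$ and all degrees at most $k$, we get $\Pr[A_u]\ge p^2(1-p)^{2k-2}\ge e^{-1/5}/(100k^2)$. Summing over at least $k/2$ neighbors gives $\Pr[\bigcup_u A_u]\ge e^{-1/5}/(200k)$. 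Disjointness removes the need for the second-moment argument used in \Cref{lem_part1}. Over the $k$ iterations, the probability that $v$ is never tentatively matched is at most $(1-\Omega(1/k))^k\le e^{-\Omega(1)}$. So with constant probability $v$ obtains a first tentative edge $\{u,v\}$ in some iteration $i$.

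\textbf{Survival of the unmatching phase (main obstacle).} The edge $\{u,v\}$ is canceled only if $C=C(u)$ gains at least one other tentative edge during the phase and the unmatching rule keeps that other edge. It therefore suffices to bound the probability that $C$ gains another incident edge, conditioned on $A_u$ occurring in iteration $i$. To control the conditioning, we write the target event as a disjoint union over $u$ and over the first success iteration $i$, and bound each term separately. For each other edge $e'=\{u',w\}\in E_C$ and each iteration $j$, the event that $e'$ is added requires $u'$ and $w$ to be sampled. In iterations $j\ne i$, sampling is independent of $A_u$, so this has probability at most $p^2$. In iteration $i$ itself, either $u'=u$, which is impossible because $u$'s unique sampled blue neighbor is $v$, or $u'\ne u$, in which case conditioning on $A_u$ only forces certain nodes to be unsampled and does not raise the probability above $p^2$, since the two required samples concern other nodes. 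A union bound over at most $k$ iterations and $|E_C|$ edges gives at most $k\cdot|E_C|\cdot p^2=\frac{1}{100k}|E_C|\le 10^{-4}$ by smallness of $C$. Smallness is preserved through the phase because $\Gactive$ only shrinks while the phase runs. Hence, conditioned on $v$'s first tentative success, $C$ gains no other edge with probability at least $1-10^{-4}$, and then $\{u,v\}$ is kept.

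\textbf{Combining.} Summing over $(u,i)$ gives $\Pr[v \text{ gains a surviving edge}]\ge(1-10^{-4})(1-e^{-\Omega(1)})\eqqcolon c_2$. One subtlety remains: once $v$ is matched it stops participating, so we only need the first tentative edge to survive, which is exactly what the computation above handles. The delicate step is the conditional union bound, specifically checking that conditioning on $A_u$, which is a mix of "sampled" and "not sampled" constraints, never increases the probability that another edge $e'$ of $C$ is added. We will argue this edge by edge: on the event $A_u$, each $e'$ that could still be added needs the sampling of two nodes outside $\{u,v\}$, and those nodes are independent of the nodes that $A_u$ constrains.
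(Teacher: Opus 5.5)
Your strategy works and differs from the paper's. The paper never tracks a ``first'' tentative match. It conditions on the event $\mathcal{E}_v$ that $v$ is sampled in exactly one of the $k$ iterations (probability at least $9/100$). Inside that single $v$-iteration it uses per-cluster events $\mathcal{F}_C$ that also require $u$ to be the only sampled node of $R(C)$, so $C$ cannot pick up a second edge in that iteration. Survival then needs a union bound only over the other $k-1$ iterations.

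You instead let $v$ compete in every iteration and get $\Omega(1/k)$ per iteration from the disjoint events $A_u$. You amplify using the rule that matched blue nodes drop out, and you handle same-iteration competitors edge by edge inside the union bound. This saves you the $R(C)$ condition and \Cref{obs:progress}, since the crude bound $(1-p)^{2k-2}\ge e^{-1/5}$ suffices. What the paper's $\mathcal{E}_v$ buys is clean independence: given $\mathcal{E}_v$, $v$ has no history, and $\mathcal{F}_C$ depends only on the coins of the $v$-iteration.

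That independence is exactly what your write-up lacks. Your conditioning event is ``$v$'s first tentative edge is $\{u,v\}$, formed in iteration $i$.'' This is more than $A_u$ in iteration $i$. It also asserts that $v$ was unmatched in iterations $1,\dots,i-1$. In addition, which blue neighbours of $u$ still participate in iteration $i$ depends on earlier iterations. So for $j<i$, the claim that sampling in iteration $j$ is independent of the conditioning is false. For example, if $u'$ is also a red neighbour of $v$, then $u'$ and $w$ both being sampled in iteration $j$ forces $v$ to be unmatched in iteration $j$.

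The loss is only a constant factor:
\begin{itemize}
\item The first-success event contains the product event ``$v$ is unsampled before iteration $i$, and in iteration $i$ both $u$ and $v$ are sampled while all their other neighbours are unsampled.''
\item Lower-bound the conditioning event by this product event, and upper-bound the joint event by the corresponding coin events.
\item This gives $\Pr[u',w \text{ sampled in iteration } j \mid \cdot\,]\le p^2(1-p)^{-(2k-2)}\le e^{1/5}p^2$, and hence a failure bound of $e^{1/5}\cdot 10^{-4}$.
\end{itemize}

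A cleaner fix drops the conditioning entirely. Bound the probability that $v$ is tentatively matched to some $u$ in some iteration $i$ while another edge of $E_{C(u)}$ is added in some iteration $j$. This is at most $\sum_{u,i,e',j}p^4\le k\cdot k\cdot\frac{k}{100}\cdot k\cdot p^4=10^{-6}$. For $j=i$, the four endpoints are distinct or the term is zero. You can then subtract this from $1-e^{-\Omega(1)}$.
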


\begin{proof}
Fix an active blue node $v$ with $\deg_{\Gactive}(v)\ge k/2$ at the beginning
of $\matching(k)$, and let $p\coloneqq \frac{1}{10k}$.

\paragraph{Focusing on one active iteration for $v$.}
Let $\mathcal{E}_v$ be the event that $v$ is sampled in \emph{exactly one} of
the $k$ iterations of the matching phase.
Then
\begin{align*}
\Pr[\mathcal{E}_v]
&= k\cdot p\cdot (1-p)^{k-1}\\
&\ge \frac{1}{10}\left(1-\frac{1}{10k}\right)^{k-1}\\
&\ge \frac{1}{10}\left(1-\frac{1}{10}\right)
= \frac{9}{100}
\;\eqqcolon\; c_0,
\end{align*}
where $c_0>0$ is a constant.
Condition on $\mathcal{E}_v$.
Under this conditioning, there is a unique iteration, which we call the
\emph{$v$-iteration}, in which $v$ is sampled; in all other iterations $v$ is
not sampled and hence cannot be matched.

We show that, conditioned on $\mathcal{E}_v$, with constant probability $v$
gets matched in the $v$-iteration and the corresponding edge survives the
unmatching phase.

\paragraph{Partitioning neighbors by clusters.}
Let $\mathcal{C}_v$ be the set of active red clusters adjacent to $v$ at the
beginning of $\matching(k)$.
For each $C\in \mathcal{C}_v$, let $x_C$ be the number of neighbors of $v$ in
$C$.
Since $\deg_{\Gactive}(v)\ge k/2$,
\begin{equation}\label{eq:sum_xC}
\sum_{C\in \mathcal{C}_v} x_C \ge \frac{k}{2}.
\end{equation}

Fix a cluster $C\in \mathcal{C}_v$.
Let $R(C)\subseteq C$ be the set of \emph{relevant} nodes in $C$, defined as
\[
R(C)\coloneqq \{w\in C : \deg_{\Gactive}(w)>0\}.
\]
Observe that $|R(C)|\le |E(C)|$, where $E(C)$ is the set of edges in $\Gactive$
with exactly one endpoint in $C$.

We define the event $\mathcal{F}_C$ in the $v$-iteration as follows:
\begin{enumerate}
    \item Exactly one neighbor $u\in C$ of $v$ is sampled.
    \item No other red neighbor of $v$ is sampled.
    \item No other blue neighbor of $u$ is sampled.
    \item $u$ is the only sampled node in $R(C)$.
\end{enumerate}
If $\mathcal{F}_C$ occurs, then $\{u,v\}$ is added to the matching in the
$v$-iteration, and moreover $C$ cannot gain any \emph{additional} matched edge
in the same iteration.

\paragraph{Lower bounding $\Pr[\mathcal{F}_C \mid \mathcal{E}_v]$.}
Conditioned on $\mathcal{E}_v$, there is a unique iteration in which $v$ is
sampled. In that iteration, the sampling decisions of all nodes other than $v$
remain independent and each node is sampled with probability $p$.

Fix $C\in \mathcal{C}_v$. Recall that $\mathcal{F}_C$ is the event that in the
$v$-iteration there exists a unique neighbor $u\in C$ of $v$ that is sampled,
no other red neighbor of $v$ is sampled, no other blue neighbor of $u$ is
sampled, and $u$ is the only sampled node in $R(C)\coloneqq\{w\in C:
\deg_{\Gactive}(w)>0\}$.

There are $x_C$ choices for $u$. Fix one such $u$. The event $\mathcal{F}_C$
requires:
(i) $u$ is sampled (probability $p$);
(ii) all other red neighbors of $v$ are not sampled (at most $k-1$ nodes);
(iii) all other blue neighbors of $u$ are not sampled; since $C$ is small,
\Cref{obs:progress} implies $\deg_{\Gactive}(u)\le k/2$, so there are at most
$k/2-1$ such nodes; and
(iv) all nodes in $R(C)\setminus\{u\}$ are not sampled. Since $C$ is small, we
have $|E(C)|=\sum_{w\in C}\deg_{\Gactive}(w)\le k/100$, and hence
$|R(C)|\le |E(C)|\le k/100$, so $|R(C)\setminus\{u\}|\le k/100-1$.
By independence,
\begin{align}
\Pr[\mathcal{F}_C \mid \mathcal{E}_v]
&\ge x_C \cdot p \cdot (1-p)^{(k-1)+(\frac{k}{2}-1)+(\frac{k}{100}-1)} \notag\\
&\ge x_C \cdot p \cdot (1-p)^{\frac{3k}{2}+\frac{k}{100}}. \label{eq:FC_mid}
\end{align}
Finally, using $(1-p)^m \ge 1-mp$ for $mp\le 1$ and substituting
$p=\frac{1}{10k}$ and $m=\frac{3k}{2}+\frac{k}{100}$, we obtain
\[
(1-p)^{\frac{3k}{2}+\frac{k}{100}}
\ge 1-\left(\frac{3k}{2}+\frac{k}{100}\right)\frac{1}{10k}
= 1-\left(\frac{3}{20}+\frac{1}{1000}\right)
=\frac{849}{1000}.
\]
Combining with \eqref{eq:FC_mid} and $p=\frac{1}{10k}$ gives
\[
\Pr[\mathcal{F}_C \mid \mathcal{E}_v]
\ge x_C \cdot \frac{1}{10k}\cdot \frac{849}{1000}
\;\eqqcolon\; c_4\cdot \frac{x_C}{k},
\]
where $c_4\coloneqq \frac{849}{10000}$ is a constant.

\paragraph{Node $v$ gets matched in the $v$-iteration with constant probability.}
The events $\{\mathcal{F}_C\}_{C\in\mathcal{C}_v}$ are mutually exclusive:
each requires that all red neighbors of $v$ outside $C$ are not sampled, hence
two different clusters cannot simultaneously satisfy the requirement that a
neighbor in that cluster is sampled.
Thus,
\begin{align*}
\Pr[\text{$v$ is matched in the $v$-iteration} \mid \mathcal{E}_v]
&\ge \sum_{C\in\mathcal{C}_v} \Pr[\mathcal{F}_C \mid \mathcal{E}_v]\\
&\ge c_4\cdot \frac{1}{k}\sum_{C\in\mathcal{C}_v} x_C
\ge \frac{c_4}{2}
\;\eqqcolon\; c_3,
\end{align*}
where we used \eqref{eq:sum_xC}. Hence $c_3>0$ is a constant.

\paragraph{The matched edge survives the unmatching phase.}
Fix $C\in\mathcal{C}_v$ and condition on $\mathcal{F}_C$, so that $\{u,v\}$ is
added to the matching in the $v$-iteration with $u\in C$.
By the extra requirement in $\mathcal{F}_C$ that $u$ is the only sampled node
in $R(C)$, cluster $C$ cannot gain any additional matched edge in the same
$v$-iteration.

Thus, $\{u,v\}$ can be removed in the unmatching phase only if $C$ gains at
least one additional incident matched edge in some other iteration.
Since $|E(C)|\le k/100$, in any fixed iteration
$t\neq$ the $v$-iteration,
\[
\Pr[\text{$C$ gains an incident matched edge in iteration $t$}]
\le \frac{1}{10^4k}.
\]
By a union bound over the remaining $k-1$ iterations,
\[
\Pr[\text{$C$ gains an incident matched edge in some iteration $t\neq$ $v$-iteration}]
\le 10^{-4}.
\]
Therefore, conditioned on $\mathcal{F}_C$, with probability at least
$1-10^{-4}$, cluster $C$ has no other incident matched edges, and hence the
edge $\{u,v\}$ survives the unmatching phase.

\paragraph{Conclusion.}
Conditioned on $\mathcal{E}_v$, the probability that $v$ is matched in the
$v$-iteration and the edge survives unmatching is at least $c_3-10^{-4}$.
Unconditioning on $\mathcal{E}_v$ yields
\[
\Pr[\text{$v$ gains a new incident edge by the end of $\matching(k)$}]
\ge \Pr[\mathcal{E}_v]\cdot (c_3-10^{-4})
\ge c_0\cdot (c_3-10^{-4})
\eqqcolon c_2,
\]
where $c_2>0$ is a constant.
\end{proof}

We now prove \Cref{lem:deg_reduce} by combining \Cref{lem_part1} and  \Cref{lem_part2}.

\begin{proof}[Proof of \Cref{lem:deg_reduce}]
Assume that the maximum degree of $\Gactive$ is at most $k$.

By \Cref{lem_part1} and a Chernoff bound, after $O(\log n)$ executions of
$\matching(k)$, all active red clusters are small with high probability.
By \Cref{obs:progress}, this implies that every active red node $u$ satisfies $\deg_{\Gactive}(u)\leq k/2$.

Next, consider any active blue node $v$ with $\deg_{\Gactive}(v)\ge k/2$.
As long as $v$ remains active and has degree at least $k/2$, \Cref{lem_part2}
guarantees that each execution of $\matching(k)$ adds a new incident edge to
$v$ in the matching with constant probability.
By a Chernoff bound, over $O(\log n)$ executions, with high probability, $v$
gains a new incident matching edges unless it becomes inactive
earlier.

Since $v$ can receive at most $d$ incident edges in the matching before it
becomes inactive, it follows that after
$O(d\log n)$ executions of $\matching(k)$, with high probability, every blue
node that is still active must satisfy $\deg_{\Gactive}(v) \leq k/2$.
\end{proof}

\section{Fast Matching Algorithm}  
\label{sect:matching-fast}
In this section, we establish the following result.

\begin{lemma}[Fast matching algorithm]\label{lem-matching-fast}
Suppose $d \in \omega(\log^2 n)$.
Given a clustering $\mathcal{V}$, one can construct a $d$-almost-maximal
$(1,2d)$-component--node matching $M$ in $O(n \polylog n)$ rounds and
$O(d \polylog n)$ energy with high probability.
Moreover, for every edge $e=\{u,v\}\in M$, the endpoints $u$ and $v$
exchange $O(\polylog n)$ bits of information.
\end{lemma}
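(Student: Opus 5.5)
The plan is to reuse the degree-reduction framework of \Cref{sect:matching-s}, with the red-side sampling boosted as sketched in the technical overview. The number of executions per degree scale drops from $O(d\log n)$ to $O(\polylog n)$. Set $\beta \coloneqq d/(C\log n)$ for a large constant $C$; this is $\omega(\log n)$ because $d\in\omega(\log^2 n)$.

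\textbf{Modified subroutine.} I will define a modified subroutine $\matching'(k)$ in which red nodes are sampled with probability $\beta/(10k)$ (capped at a constant), while blue nodes keep probability $1/(10k)$. A sampled blue node $v$ now accepts every sampled red neighbor $u$ that has no other sampled blue neighbor. It no longer requires that $v$ has a unique sampled red neighbor. A blue node leaves once its degree in $M$ reaches $d$, and active blue nodes are those with $\deg_M(v)<d$.

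The handshake of \Cref{lem:impl_match} must change, because $v$ now hears many red neighbors. I will use $\sr$ rounds together with the time-slot trick. Each red node $u$ that heard a unique blue ID from $v$ (via decay) picks a random slot among $O(d)$ slots and transmits. Node $v$ listens for $O(d\polylog n)$ rounds only in the iterations where it is sampled, which happens $O(\log n)$ times with high probability. Alternatively, I can reuse $\amc$ with $\mathcal{M}$ as the capacity bound. Either way the energy is $O(d\polylog n)$ per execution.

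\textbf{Capacity.} For every $v$, the expected number of red neighbors accepted over the $k$ iterations is at most $\deg(v)\cdot\beta/(10k)\cdot$(number of iterations $v$ is sampled). This is $O(\beta\log n)=O(d/C)$. A Chernoff bound then gives at most $d$ new edges per execution with high probability. An active $v$ (with $\deg_M(v)<d$) therefore ends below $2d$, so the result is a $(1,2d)$-matching. The unmatching phase is unchanged, so every red cluster keeps at most one edge.

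\textbf{Degree reduction.} I would prove two analogues of \Cref{lem_part1,lem_part2}.

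For Part 1, I redefine small/large using $\frac{\beta}{100k}\sum_{u\in C}\deg(u)$. A large cluster gains an edge with constant probability by the same second-moment computation, with $p_{\text{red}}p_{\text{blue}}=\beta/(100k^2)$. The collision terms for $u$ are still $(1-1/(10k))^{k}$, since they only involve blue neighbors of $u$. Small now forces $\deg(u)\le k/(100\beta)$.

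For Part 2, fix a blue $v$ with degree at least $k/2$ and condition on $v$ being sampled in exactly one iteration. The number of accepted red neighbors in that iteration is then roughly Poisson with mean about $\beta/20$. Each accepted edge survives unmatching with probability at least $1-O(10^{-4})$, because small clusters rarely gain other edges. So $v$ gains $\Omega(\beta)$ edges with constant probability, and in fact with probability $1-e^{-\Omega(\beta)}$ by concentration. It follows that $O(d/\beta\cdot\log n)=O(\log^2 n)$ executions per scale $k$ saturate or reduce $v$.

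Summing over $O(\log n)$ scales, each execution costing $O(n\polylog n)$ rounds by \Cref{lem:impl_match}, gives the claimed bounds. The $d$-almost-maximality then follows from \Cref{obs:almost}, with saturation threshold $d$.

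\textbf{Main obstacle.} The hard part will be Part 2 with boosted red sampling. The survival events of different accepted edges at $v$ are correlated through shared clusters. Also, smallness must control a cluster's total expected gain, which includes edges accepted simultaneously by other blue nodes. My first attempt is to count only the edges $\{u,v\}$ where $u$ is the unique sampled node in $R(C)$, as in $\mathcal{F}_C$, and to lower-bound this count by a second-moment (Paley--Zygmund) argument. The conditions for distinct clusters are nearly independent because they involve disjoint node sets, apart from the blue neighbors they share.
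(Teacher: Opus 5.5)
Your algorithm and analysis plan are essentially the paper's. You boost red sampling by a $\Theta(d/\log n)$ factor, add an edge when a sampled red node has a unique sampled blue neighbor, and do not stop blue nodes within an execution. The slack between $d$ and $2d$ plus a Chernoff bound replaces any blue-side repair (\Cref{clm:selection}), and degree reduction uses the same small/large two-part argument with a per-cluster event like $\mathcal{F}_C$.

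\textbf{The gap: round complexity of your implementation.} This is exactly where the lemma must beat \Cref{lem-matching-weak}, and your main handshake fails there.
\begin{itemize}
\item A sampled blue node listens for $O(d\polylog n)$ rounds in its sampled iteration. The schedule is fixed in advance, so each of the $k$ iterations must reserve these rounds. One execution therefore takes $\Theta(kd\polylog n)$ rounds, which is $\Theta(nd\polylog n)$ at scale $k=n$: the slow bound again.
\item Citing \Cref{lem:impl_match} does not help, because its cost comes from a constant-round handshake.
\item Hearing a blue ID ``via decay'' does not certify uniqueness, since $\sr$ delivers some message even when several blue neighbors are sampled.
\end{itemize}
The paper avoids this by telling blue nodes nothing during the matching phase. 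Sampled blue nodes transmit $\ID(v)$ and $r(v)$ in a plain round, repeated $O(\polylog n)$ times with the same pattern. A listening sampled red node receives exactly when precisely one sampled blue neighbor transmits. After the $\upcast$/$\downcast$ unmatching step, a single call to $\amc$ per execution informs the blue endpoints. It uses $\mathcal{M}=n$ on the surviving edges, whose blue degree is $O(d)$ by your capacity bound, and costs $O(n\polylog n)$ rounds and $O(d\polylog n)$ energy. Your fallback ``reuse $\amc$'' is correct only in this once-per-execution form; invoked per iteration it is again far too slow.

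\textbf{Smaller points.} Write $p_R,p_B$ for the red and blue sampling probabilities.
\begin{itemize}
\item \emph{Small $k$.} Capping $p_R$, instead of stopping once $k\le d'$ and finishing with \Cref{lem:base_case}, works. However, your Part~2 claim that $v$ gains $\Omega(\beta)$ edges is false for $k\lesssim\beta$, since $v$ has at most $k$ active neighbors. What rescues it: with your threshold, for $k<100\beta$ smallness forces $\sum_{u\in C}\deg_{\Gactive}(u)<1$. So once Part~1 makes every active red cluster small, $\Gactive$ is empty and Part~2 is vacuous. Part~1 still holds in this range with a single-edge witness, since $p_Rp_B=\Omega(1/k)$ there. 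You should state this.
\item \emph{Part~1 second moment.} It is not literally the slow one. Two edges sharing a blue endpoint can now both be added, giving cross terms of order $p_R^2p_B$. These remain $O(\mu)$ for $|F|=\Theta(k/\beta)$.
\item \emph{Part~2.} You need neither Paley--Zygmund nor the $1-e^{-\Omega(\beta)}$ concentration, which is not justified given the correlations. Let $\mathcal{H}_C$ be the event that $C$ gains no edge in the other iterations. Its failure probability is controlled by $p_Rp_B\sum_{u\in C}\deg_{\Gactive}(u)$, which already counts edges to other blue nodes. Linearity over $C$ of $\Pr[\mathcal{F}_C\cap\mathcal{H}_C\mid\mathcal{E}_v]$ gives $\E[Z\mid\mathcal{E}_v]=\Omega(\beta)$, where $Z$ is the number of surviving edges at $v$. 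Under $\mathcal{E}_v$ all of these come from one iteration, so $Z=O(\beta)$ with high probability, and reverse Markov gives constant probability.
\end{itemize}
This Part~2 route is slightly stronger than the paper's. The paper does not condition on $\mathcal{E}_v$ and caps the gain only by $0.01d$, so a good execution has probability $\Omega(1/\log n)$ and it needs $O(\log^2 n)$ executions per scale. Your conditioning would give $O(\log n)$.
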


Compared with the algorithm of \Cref{lem-matching-weak}, the algorithm of
\Cref{lem-matching-fast} eliminates the linear dependence on~$d$ in the round
complexity.
This improvement comes at the cost of requiring $d \in \omega(\log^2 n)$.
Together with \Cref{lem-matching-weak}, this immediately implies
\Cref{lem-matching}.

\begin{proof}[Proof of \Cref{lem-matching}]
The proof follows by combining \Cref{lem-matching-weak} and
\Cref{lem-matching-fast}.
If $d \in O(\log^2 n)$, we run the algorithm of \Cref{lem-matching-weak};
otherwise, we run the algorithm of \Cref{lem-matching-fast}.
\end{proof}

Our goal is to construct a $d$-almost-maximal $(1,2d)$-component--node
matching~$M$.
As in the proof of \Cref{lem-matching-weak}, we initialize $M=\emptyset$ and
use the same definitions of active nodes and the active subgraph $\Gactive$.
In particular, \Cref{obs:almost} continues to hold.

\subsection{The Matching Subroutine}

The basic building block of our algorithm is the procedure
$\matching(d',k)$, which is parameterized by two integers $d'$ and $k$
satisfying $1 \le d' < k$ (and hence $k\ge 2$).
The procedure is executed on the current active subgraph $\Gactive$.
Similar to the subroutine $\matching(k)$ used in the algorithm of
\Cref{lem-matching-weak}, the procedure $\matching(d',k)$ consists of a
\emph{matching phase} followed by an \emph{unmatching phase}.

\begin{description}
\item[Matching phase:]
The following steps are repeated for $k$ iterations.
\begin{itemize}
    \item Each blue node $v$ samples itself independently with probability
    $\frac{1}{k}$.
    \item Each red node $u$ samples itself independently with probability
    $\frac{d'}{k}$.
    \item If a sampled red node $u$ has exactly one sampled blue neighbor $v$,
    add the edge $\{u,v\}$ to the matching.
\end{itemize}

\item[Unmatching phase:]
For each red cluster $C$, if more than one edge incident to $C$ has been
added to the matching, then $C$ keeps exactly one such edge.
\end{description}

We now discuss the selection of the parameters $k$ and $d'$ and briefly
compare $\matching(d',k)$ with the subroutine $\matching(k)$ used in
\Cref{lem-matching-weak}.
In both cases, the parameter $k$ serves as an upper bound on the maximum
degree of $\Gactive$.
The parameter $d'$ is chosen to satisfy
$d' \in \Theta(d/\log n) \subseteq \omega(\log n)$ so that the following
claim holds.

\begin{claim}[Selection of $d'$]\label{clm:selection}
Suppose the maximum degree of $\Gactive$ is at most $k$.
There exists a choice of $d' \in \Theta(d/\log n)$ such that, with high
probability, every blue node gains at most $0.01\cdot d$ matched edges during
the matching phase of $\matching(d',k)$.
\end{claim}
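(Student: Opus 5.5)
Fix a blue node $v$ and bound the total number of edges added at $v$ over the $k$ iterations of the matching phase, then take a union bound over all blue nodes. In a single iteration, an edge $\{u,v\}$ can be added only if both endpoints are sampled: $v$ with probability $1/k$ and the red endpoint $u$ with probability $d'/k$. Since each sampled red node is matched to at most one blue neighbor (its unique sampled blue neighbor), the number of edges added at $v$ in one iteration is at most $\mathbf{1}[v \text{ sampled}] \cdot |\{u \in N_{\Gactive}(v) : u \text{ red and sampled}\}|$. I will drop the extra uniqueness condition and simply upper bound by this product. The product is monotone and stochastically dominates the true count.

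Summing over iterations, the total is dominated by $W_v = \sum_{t=1}^{k} B_t \cdot R_t$. Here the $B_t \sim \mathrm{Bernoulli}(1/k)$ are independent, $R_t \sim \mathrm{Bin}(\deg_{\Gactive}(v), d'/k)$, and everything is independent across iterations. Using $\deg_{\Gactive}(v)\le k$, each $R_t$ is dominated by $\mathrm{Bin}(k, d'/k)$, which has mean at most $d'$.

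The cleanest route is to view $W_v$ as a sum of independent indicators. For each iteration $t$ and each red neighbor $u$, put $I_{t,u} = B_t \cdot \mathbf{1}[u \text{ sampled at } t]$. These indicators are not independent, since indicators in the same iteration share $B_t$. So I handle the $B_t$ separately.

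First, the number of iterations in which $v$ is sampled is $\mathrm{Bin}(k,1/k)$. By Chernoff, this is at most $c\log n$ with high probability. (Its mean is $1$, so the tail $\Pr[\ge c\log n] \le 2^{-c\log n}$ for large $c$.)

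Conditioned on the set of iterations in which $v$ is sampled, which has size $s \le c\log n$, the red samplings in those iterations are still independent. The count is then dominated by $\mathrm{Bin}(s k, d'/k)$, which has mean at most $s d' \le c d'\log n$. Because $d' \in \omega(\log n)$, this mean is $\omega(\log n)$. A Chernoff bound therefore gives that the count is at most $2c d' \log n$ with probability $1 - e^{-\Omega(d'\log n)} = 1 - n^{-\omega(1)}$. (If $s$ is small, the bound $2c d'\log n$ still holds by comparing with the $s=c\log n$ case through monotonicity.)

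Finally, I choose $d' = \lfloor 0.01\, d / (2c\log n)\rfloor$. Then $d' \in \Theta(d/\log n)$, and $d' \in \omega(\log n)$ follows from $d\in\omega(\log^2 n)$. With this choice, $2c d'\log n \le 0.01 d$. A union bound over all $n$ blue nodes completes the argument.

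The main obstacle is the correlation introduced by $v$'s own sampling. The mean number of matches at $v$ is only about $d'$, but the count is heavy-tailed because $v$ may be sampled in several iterations. This is exactly what forces the $\log n$ loss in $d'$. The two-stage conditioning above (first the number of $v$'s sampled iterations, then the red samples) handles it. I also need to check that the constant $c$ does not depend on $d'$, and that the condition $d'\ge 1$ with $d'<k$ holds, assuming $k$ is large whenever $\Gactive$ has edges.
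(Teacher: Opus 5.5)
Your proof is correct and follows essentially the same route as the paper. A first Chernoff bound shows $v$ is sampled in only $O(\log n)$ iterations, and a second Chernoff bound on the sampled red neighbors in those iterations gives $O(d'\log n)\le 0.01d$ edges for $d'\in\Theta(d/\log n)$; the only difference is that you apply the second bound once to the aggregate $\mathrm{Bin}(sk,d'/k)$ rather than per iteration (bounding each by $2d'$ and multiplying), which is a cosmetic change that also makes $d'\in\omega(\log n)$ unnecessary in that step.
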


Since $d + 0.01\cdot d < 2d$ and our goal is a
$(1,2d)$-component--node matching, unlike in $\matching(k)$, we do not need
a mechanism that prevents a matched blue node from continuing to participate
in the subsequent iterations during the matching phase of $\matching(d',k)$.

\begin{proof}
By a Chernoff bound, each blue node $v$ is sampled in at most
$s \in O(\log n)$ iterations with high probability.
In each iteration in which $v$ is sampled, among its at most $k$ red
neighbors, the expected number of sampled nodes is
$k\cdot \frac{d'}{k}=d' \in \omega(\log n)$.
Another Chernoff bound implies that this number is at most $2d'$ with high
probability.
Therefore, during the matching phase of $\matching(d',k)$, node $v$ gains at
most $s\cdot 2d' \in O(d'\log n)$ matched edges with high probability.
Choosing $d' \in \Theta(d/\log n)$ ensures that this quantity is at most
$0.01\cdot d$.
\end{proof}

We now describe the implementation of $\matching(d',k)$, assuming that $k$
is an upper bound on the maximum degree of $\Gactive$ and that
$d' \in \Theta(d/\log n)$.

\begin{lemma}[Implementation of $\matching(d',k)$]\label{lem:matching_impl2}
The procedure $\matching(d',k)$ can be implemented in
$O(n\polylog n)$ rounds and $O(d\polylog n)$ energy with high probability.
Moreover, for every edge $e=\{u,v\}$ added to the matching, the endpoints
$u$ and $v$ exchange $O(\polylog n)$ bits of information.
\end{lemma}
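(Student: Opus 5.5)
The implementation will mirror the proof of \Cref{lem:impl_match}: the matching phase is realized round by round over inter-cluster edges, the unmatching phase uses one $\upcast$/$\downcast$ per red cluster, and blue endpoints are informed by replaying the matching phase. The new difficulties are that a sampled red node now competes with many other sampled red nodes at the same blue node, and that a blue node may be matched to up to $0.01d$ red nodes. I will handle the handshake so that a red node learns whether it has exactly one sampled blue neighbor, and so that the blue node learns all red nodes that chose it, while keeping the blue node's energy at $O(d\polylog n)$.

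\textbf{Matching phase.} Each of the $k\le n$ iterations becomes a block of $O(\polylog n)$ rounds. In the first part, sampled blue nodes transmit $\ID(v)$ using a decay-style procedure (\Cref{lem:sr}) while sampled red nodes listen. Because there is no collision detection, a red node cannot directly tell ``exactly one sampled blue neighbor'' apart from ``several''. I will fix this with the standard trick of repeating $\Theta(\log n)$ trials in which each sampled blue node transmits independently with probability $1/2$. A red node then concludes it has exactly one sampled blue neighbor iff every message it hears carries the same ID and it hears at least one message. With two or more sampled blue neighbors, each trial isolates a different one with constant probability, so two distinct IDs appear with high probability. Each red node $u$ that decides it is matched to $v$ then sends $\ID(u)$ (plus the $O(\polylog n)$ payload) back to $v$.

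Here $v$ may have up to $2d'$ such red neighbors in this iteration (the Chernoff bound from \Cref{clm:selection}). I will therefore let $v$ collect them by repeated local broadcast with acknowledgements. Run $O(d'\log n)$ rounds in which each not-yet-acknowledged red matcher transmits with probability $\Theta(1/d')$. After each success, $v$ immediately transmits an acknowledgement containing the received ID, so that red node stops. This costs $O(d'\polylog n)$ energy for $v$ in that iteration and $O(\polylog n)$ for each red node. A side issue is that a red node listening for acknowledgements might hear collisions from other sampled blue neighbors. This is harmless, since by construction such a red node has only one sampled blue neighbor; other blue nodes adjacent to it are asleep.

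\textbf{Energy accounting.} A blue node is sampled in $O(\log n)$ iterations with high probability, so its energy is $O(\log n\cdot d'\polylog n)=O(d\polylog n)$ because $d'\in\Theta(d/\log n)$. A red node is sampled in $O(d'+\log n)$ iterations in expectation, and by Chernoff in $O(d')$ iterations with high probability since $d'\in\omega(\log n)$. That gives $O(d'\polylog n)\subseteq O(d\polylog n)$ energy. Round complexity is $k\cdot O(d'\polylog n)$, which I need to be $O(n\polylog n)$. I expect this to be the main obstacle, since $k d'$ could exceed $n$. I would first try to shrink the collection window per iteration to $O(\polylog n)$ rounds by charging only the expected load. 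The expected number of matchers at $v$ per iteration is at most $d'$, but summed over iterations the total number of transmissions is small, so $v$ can instead listen only during its $O(\log n)$ sampled iterations.

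\textbf{Unmatching and notification.} The unmatching phase is handled exactly as in \Cref{lem:impl_match}. Each red cluster runs an $\upcast$ to select one matched edge and a $\downcast$ to announce it, costing $O(n\polylog n)$ rounds and $O(\polylog n)$ energy. Then the matching phase is replayed with the same shared randomness so that red endpoints can tell blue endpoints which edges were cancelled. The replay has the same costs as the original matching phase.
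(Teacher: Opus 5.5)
Your proposal has a genuine gap in the round complexity, and the fix you sketch does not close it. You collect each blue node's red matchers inside every iteration of the matching phase, in a window of $O(d'\log n)$ rounds. That window cannot be shortened. A sampled blue node $v$ with $\deg_{\Gactive}(v)=\Theta(k)$ has about $d'$ sampled red neighbors, and a constant fraction of them can have $v$ as their unique sampled blue neighbor; this is exactly the situation of \Cref{lem_part22}. Since $v$ receives at most one message per round, it needs $\Omega(d')$ rounds to hear them all. The windows belong to a global schedule, so all $k$ of them have this length. The matching phase alone therefore takes $\Theta(kd'\polylog n)$ rounds, which is $\tilde{\Theta}(nd)$ when $k=n$ and far above $O(n\polylog n)$ for large $d$. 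Letting $v$ listen only in its $O(\log n)$ sampled iterations reduces $v$'s energy, not the schedule length. ``Charging the expected load'' does not help either, because the load is concentrated at single blue nodes within single iterations. Your notification step has the same defect: replaying the matching phase again requires each blue node to hear from all its partners inside each window. The replay worked in \Cref{lem:impl_match} only because there a blue node has at most one partner and stops participating once matched.

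The missing observation is that in $\matching(d',k)$ the blue endpoint plays no role in deciding an edge; the rule only asks whether $u$ has exactly one sampled blue neighbor. So in each iteration it suffices that all sampled blue nodes transmit $\ID(v)$ and a random string $r(v)$ simultaneously for $O(\polylog n)$ rounds. By the collision semantics of the model, a sampled red node receives a message iff it has exactly one sampled blue neighbor, so your decay-plus-distinct-ID test is unnecessary. This one-sided phase costs $O(k\polylog n)\subseteq O(n\polylog n)$ rounds. Blue nodes are informed only after the unmatching phase, by a single call to $\amc$ on the surviving edges; the red endpoints already hold $\ID(v)$ and $r(v)$, as $\amc$ requires. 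These edges form a $(1,0.01d)$-component--node matching by \Cref{clm:selection}, with at most one edge per red cluster. Hence $\mathcal{M}=n$ is a valid bound, and $\amc$ costs $O(n\polylog n)$ rounds and $O(d\polylog n)$ energy. The random-delay scheduling inside $\amc$ is what replaces your $k$ windows of $\Theta(d')$ rounds with one global budget proportional to the number of surviving edges. Unmatching must come first: over the whole matching phase the number of tentative edges can be $\Theta(nd')$ (e.g., dense red--blue connections with $k=\Theta(n)$), so running $\amc$ on all of them would not fit in $O(n\polylog n)$ rounds.
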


\begin{proof}
To implement a single iteration of the matching phase, each sampled blue node
$v$ transmits $\ID(v)$ and some random string $r(v)$ of $O(\polylog n)$ bits
using $O(\polylog n)$ rounds in parallel, while each sampled red node $u$
listens. The transmitted information is needed for running $\amc$ later.
If a listening red node $u$ receives a message, then it must come from its
unique sampled blue neighbor $v$.
In this case, node $u$ learns that the edge $\{u,v\}$ is added to the matching
and also obtains $\mathcal{I}(v)$.
At this point, however, the blue node $v$ is not yet aware that $\{u,v\}$ has
been added.

Over all $k$ iterations, this implementation costs
$O(k\polylog n)\subseteq O(n\polylog n)$ rounds and
$O(d\polylog n)$ energy with high probability.
The energy bound follows from the fact that, by a Chernoff bound, each blue
node is sampled in $O(\log n)$ iterations with high probability, and each red
node is sampled in $O(d')=O(d/\log n)$ iterations.

Next, to implement the unmatching phase, we follow the same approach as in
the proof of \Cref{lem:impl_match}.
Specifically, we perform an $\upcast$ followed by a $\downcast$ within each
red cluster to select one incident edge in the matching, if any exist.
This step costs $O(n\polylog n)$ rounds and $O(\polylog n)$ energy with high
probability.

To inform the blue endpoints of the matched edges and to
allow the endpoints of each such edge to exchange $O(\polylog n)$ bits of
information, we run $\amc$ over these edges with 
$\mathcal{M}=n$.
By \Cref{clm:selection}, these edges induce a
$(1,0.01\cdot d)$-component--node matching, so this step costs
$O(n\polylog n)$ rounds and $O(d\polylog n)$ energy with high probability.
\end{proof}

We emphasize that, because each matched edge exchanges $O(\polylog n)$ bits
of information, \Cref{lem:matching_impl2} implicitly guarantees that every
blue node $v$ learns its set of incident edges in $M$.
In particular, node $v$ knows $\deg_M(v)$ and can therefore determine whether
it is active or inactive.

\subsection{Degree Reduction Framework}

As in the proof of \Cref{lem-matching-weak}, the algorithm underlying
\Cref{lem-matching-fast} is based on a degree-reduction framework.

\begin{lemma}[Degree reduction]\label{lem:deg_reduce2}
Suppose the maximum degree of the current $\Gactive$ is at most $k$.
Then running $\matching(d',k)$ for $O(\log^2 n)$ iterations reduces the
maximum degree of $\Gactive$ to at most $k/2$ with high probability.
\end{lemma}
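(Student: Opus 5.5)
The plan follows the proof of \Cref{lem:deg_reduce}, split into the same two parts. Throughout, assume the maximum degree of $\Gactive$ is at most $k$, and use the same small/large classification of red clusters. The threshold on $\sum_{u\in C}\deg_{\Gactive}(u)$ is rescaled to reflect the new sampling rates: a sampled red node has probability $d'/k$ and a blue node $1/k$. So the expected number of matched edges created at $C$ in one iteration is about $\frac{d'}{k^2}\sum_{u\in C}\deg_{\Gactive}(u)$, and over $k$ iterations about $\frac{d'}{k}\sum_{u}\deg(u)$. I would call $C$ \emph{small} if $\sum_{u\in C}\deg_{\Gactive}(u)\le c\,k/d'$ for a small constant $c$, and \emph{large} otherwise. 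Since $d'\ge 1$, a small cluster still has every member of degree $\le ck<k/2$, so the analogue of \Cref{obs:progress} holds.

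\textbf{Part 1.} I would show that a large active red cluster becomes inactive with constant probability in one execution of $\matching(d',k)$. Choose $F\subseteq E_C$ with $|F|=\Theta(k/d')$. For an edge $e=\{u,v\}\in F$, the event $X_e$ requires that $u$ is sampled, $v$ is sampled, and no other blue neighbor of $u$ is sampled. Its probability is at least $\frac{d'}{k}\cdot\frac1k\cdot(1-1/k)^{k}=\Omega(d'/k^2)$, so $\mu=\Theta(1/k)$. Pairs of edges sharing $u$ are mutually exclusive. Pairs sharing only $v$ have joint probability $\le (d'/k)^2/k$, and there are at most $|F|\cdot k$ such pairs. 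Disjoint pairs have joint probability $\le (d'/k^2)^2$. Summing, $\E[X^2]=O(\mu)$, where $\mu<1$ is ensured by the choice of $|F|$; this bookkeeping is where the $d'$ factors must cancel. The second-moment bound then gives $\Omega(1/k)$ success per iteration, hence constant probability over $k$ iterations. Since unmatching keeps one edge, $C$ becomes inactive. A Chernoff bound over $O(\log n)$ executions then makes every active red cluster small or inactive w.h.p.

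\textbf{Part 2.} This is the main obstacle. Assume all active red clusters are small, and fix an active blue $v$ with $\deg_{\Gactive}(v)\ge k/2$. The goal is to show that in one execution $v$ gains $\Omega(d')$ surviving edges with constant probability, or more robustly $\Omega(d')$ in expectation with concentration. With constant probability $v$ is sampled in at least one iteration. In such an iteration, each red neighbor $u$ is sampled with probability $d'/k$. Since $\deg(u)\le k/2$, no other blue neighbor of $u$ is sampled with probability at least $(1-1/k)^{k/2}\ge e^{-1}$-ish, and in that case $\{u,v\}$ is added. So the expected number of added edges is $\Omega(d')$, and by Chernoff, since $d'=\omega(\log n)$ and independence across distinct $u$ holds given $v$'s sampling, this is $\Omega(d')$ w.h.p.

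Survival requires that $u$'s cluster $C$ gains no other matched edge in the whole phase. Because $C$ is small, the expected number of other edges at $C$ is at most $\frac{d'}{k}\cdot k\cdot\frac1k\sum\deg\le c$. I would bound this conditionally on $u$ being sampled, using a union bound over the other iterations, the other nodes of $C$, and the other samplings of $u$. So each added edge survives with probability $\ge 1-O(c)$, and by Markov a constant fraction of the $\Omega(d')$ edges survive with constant probability. Dependencies between neighbors in the same cluster are the delicate point; I would handle them by counting surviving edges per cluster, which is at most one per cluster anyway.

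Hence each execution, with constant probability, either gives $v$ $\Omega(d')=\Omega(d/\log n)$ new edges or leaves $v$ below degree $k/2$. Since $v$ becomes inactive after $d$ edges, $O(\log n)$ such successes suffice. A Chernoff bound over $O(\log^2 n)$ executions, plus a union bound over all $v$, reduces the maximum degree of $\Gactive$ to at most $k/2$ w.h.p. The $O(\log^2 n)$ total is also enough to absorb the $O(\log n)$ executions of Part 1.
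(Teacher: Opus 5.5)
Your proof is correct in outline and shares the paper's skeleton: a threshold $\Theta(k/d')$ on a cluster's total active degree, a second-moment argument in Part~1, survival of edges into small clusters in Part~2, and a Chernoff bound over executions. Both parts take a different route, though, and each has a local slip you need to repair.

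\textbf{Part 1.} The paper uses a minimal prefix $U$ of red nodes, after splitting off an ``easy case'' of a single node of degree $\ge k/(100d')$. Your edge set $F$ avoids that case split, but it introduces pairs of edges sharing a blue endpoint. For these pairs your count $|F|\cdot k$ is too weak. It gives a contribution of $\Theta(d'/k)=\Theta(d'\mu)$, so the $d'$ factors do \emph{not} cancel. Use the trivial count $|F|^2=\Theta(k^2/d'^2)$ instead, which gives $O(1/k)=O(\mu)$ as needed.

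\textbf{Part 2.} The paper proves only $\E[Z]=\Omega(d')$, using the disjoint per-cluster events $\mathcal{F}_{C,t}\cap\mathcal{H}_{C,t}$ and linearity of expectation. It then needs the w.h.p.\ cap $0.01d$ from \Cref{clm:selection} and a reverse-Markov step to get success probability $\Omega(d'/d)=\Omega(1/\log n)$ per execution. That step is exactly where the $O(\log^2 n)$ bound comes from.

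Your route instead uses $d'=\omega(\log n)$ to get concentration of the edges added in $v$'s sampled iteration, and then applies Markov to the edges killed by unmatching. This gives \emph{constant} success probability per execution without \Cref{clm:selection}, so it would show that $O(\log n)$ executions already suffice for Part~2.

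However, your claim that the events for distinct $u\in N(v)$ are independent given $v$'s sampling is false. Two red neighbors of $v$ can share another blue neighbor $w$, and sampling $w$ blocks both. Consequently ``$\Omega(d')$ w.h.p.'' fails in general. The repair is as follows:
\begin{itemize}
\item Condition on the blue samples of that iteration.
\item By smallness of $C(u)$, each $u$ is blocked with probability at most $(\deg(u)-1)/k\le c/d'$.
\item By Markov, at least half of $N(v)$ is unblocked with probability $1-O(c/d')$.
\item Given the blue samples, the red samples are independent, so Chernoff applies to the unblocked neighbors.
\end{itemize}
This yields a constant, not high, probability bound, which is all you actually use. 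You must also choose $c$ small enough that Markov applied to $\E[\#\text{killed}]\le 2cd'$ still leaves a positive probability of success.
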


A limitation of the procedure $\matching(d',k)$ is that it requires
$k \ge d'$.
Consequently, once the maximum degree of the current $\Gactive$ drops below
$d'$, we switch to a different approach.
Specifically, when this occurs, we invoke the algorithm guaranteed by the
following lemma with parameter $s=d'$.

\begin{lemma}[Small degree]\label{lem:base_case}
Suppose the maximum degree of the current $\Gactive$ is at most $s$.
Then a $(1,s)$-component--node matching $M'$ of $\Gactive$ such that
$\deg_{M'}(C)=1$ for every active red cluster $C$ with
$\deg_{\Gactive}(C)>0$ can be computed using $O(n\polylog n)$ rounds and
$O(s\polylog n)$ energy with high probability.
Moreover, for every edge $e=\{u,v\}\in M'$, the endpoints $u$ and $v$
exchange $O(\polylog n)$ bits of information.
\end{lemma}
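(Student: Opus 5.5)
The plan is to treat \Cref{lem:base_case} as an ordinary maximal-matching task on the cluster graph. Each active red cluster with an active edge needs exactly one incident edge, and the blue side has capacity $s$. Since every blue node has at most $s$ active red neighbours, the capacity constraint is automatic: if each active red cluster $C$ with $\deg_{\Gactive}(C)>0$ simply chooses one of its active edges, the resulting edge set $M'$ is a $(1,s)$-component--node matching. No blue node can receive more than $\deg_{\Gactive}(v)\le s$ edges. So the whole problem reduces to two tasks: letting each such red cluster select one incident active edge, and letting both endpoints of each selected edge learn about it and exchange $O(\polylog n)$ bits.

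The steps are as follows.

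\begin{enumerate}
\item Every active blue node must make itself known to its red neighbours. The difficulty is collisions: a red node $u$ may have up to $s$ active blue neighbours, and without collision detection it cannot hear them all at once. However, $u$ only needs to learn \emph{one} of them. So we run $\sr(\S,\R)$ from \Cref{lem:sr}, with $\S$ the active blue nodes (each transmitting $\ID(v)$ and $r(v)$) and $\R$ the active red nodes. This takes $O(\log n\cdot\log s)$ rounds and the same energy. Afterwards, every active red node $u$ with $\deg_{\Gactive}(u)>0$ knows some active blue neighbour $v_u$ together with $\mathcal{I}(v_u)$.
\item Inside each active red cluster, every node $u$ that learned a candidate $v_u$ holds the message $(\ID(u),\ID(v_u),r(v_u))$. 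We run $\upcast$ so that the center selects one such message. If $\deg_{\Gactive}(C)>0$, some node of $C$ has a candidate, so the center receives one. We then run $\downcast$ to announce the chosen edge $\{u,v_u\}$ to all of $C$, in particular to $u$. This costs $O(n\polylog n)$ rounds and $O(\polylog n)$ energy, since $\mathcal{D}\le n$ and $\Delta\le n$.
\item The selected edges form $M'$, and by the argument above $M'$ is a $(1,s)$-component--node matching. Each red endpoint already holds $\ID(v)$ and $r(v)$ for its partner $v$. We then run $\amc$ on $M'$ with $k=s$ and $\mathcal{M}=n$, which informs each blue endpoint of its incident edges and lets both endpoints exchange $O(\polylog n)$ bits. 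By the stated complexity of $\amc$, this costs $O(n\polylog n)$ rounds and $O(s\polylog n)$ energy.
\end{enumerate}

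Summing the three steps gives $O(n\polylog n)$ rounds and $O(s\polylog n)$ energy with high probability, as claimed.

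The only real subtlety is the first step: a red node facing many blue neighbours must still be guaranteed to hear one of them. Local broadcast handles this directly, since it promises delivery of at least one message to every receiver that has a sender neighbour, regardless of collisions. A second point to check is that the preconditions of $\amc$ are met, namely that the red endpoints possess the blue node's shared string $r(v)$. This is exactly why $r(v)$ is included in the local-broadcast message in the first step and carried through the $\upcast$ and $\downcast$ in the second.
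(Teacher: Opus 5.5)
Your three-step plan is the paper's proof. It uses local broadcast from active blue to active red nodes, then an $\upcast$ followed by a $\downcast$ in each red cluster to keep a single edge, then $\amc$ with $k=s$ and $\mathcal{M}=n$. Your observation that $\deg_{M'}(v)\le\deg_{\Gactive}(v)\le s$ is also how the paper gets the $(1,s)$ bound.

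The gap is in step~1. \Cref{lem:sr} delivers a single $O(\log n)$-bit message. But $r(v)$ is an $O(\polylog n)$-bit string, which in general does not fit into one message of the $\Radio$-$\CONGEST$ model. So one invocation of $\sr$ costing $O(\log n\cdot\log s)$ rounds cannot hand $(\ID(v),r(v))$ to $u$. You must split it into $O(\polylog n)$ chunks and invoke $\sr$ once per chunk. If these invocations use independent randomness, a red node $u$ with several active blue neighbours may hear $v$ in some invocations and a different neighbour $v'$ in others. It then assembles a string that is not $r(v)$ for any single $v$. Tagging each chunk with the sender's identifier does not by itself fix this, because $\sr$ gives no control over which sender is heard. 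Since $\amc$ relies on $u$ and $v$ holding the same $r(v)$ to agree on random delays, its precondition fails.

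The missing ingredient, which the paper's proof states explicitly, is to run all $O(\polylog n)$ invocations of $\sr$ with the \emph{same} random coins. The set of transmitting nodes in each round is then identical across invocations. So a listener $u$ hears the same neighbour $v$ every time (say, the sender heard in the first round in which $u$ receives anything), and reconstructs $\ID(v)$ and $r(v)$ consistently. Success of the first invocation then implies success of all of them. With this fix, step~1 costs $O(\polylog n)$ rounds and energy, and the rest of your argument goes through unchanged.

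A minor remark: you need not carry $r(v_u)$ through the $\upcast$ and $\downcast$. Node $u$ already holds it, and the center only needs $(\ID(u),\ID(v_u))$ to announce its choice.
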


\begin{proof}
We run $\sr$ for $O(\polylog n)$ iterations using the same randomness, where
every active blue node $v$ transmits and every active red node $u$ listens.
Using the same randomness guarantees that each listening node $u$ receives
messages from the same transmitting node $v$ across all
$O(\polylog n)$ iterations.
These iterations allow each transmitter $v$ to send both $\ID(v)$ and some random string $r(v)$ of $O(\polylog n)$ bits, which are needed for running $\amc$ later.
If a red node $u$ receives a message from a blue node $v$, then we add the
edge $\{u,v\}$ to the matching.
This step costs $O(\polylog n)$ rounds and energy.
By the guarantee of $\sr$, for every active red cluster $C$ with
$\deg_{\Gactive}(C)>0$, at least one edge incident to $C$ is added to the
matching.

The remainder of the proof follows the same structure as that of
\Cref{lem:matching_impl2}.
We first perform an $\upcast$ followed by a $\downcast$ within each red
cluster to select exactly one incident edge in the matching, if any exist. The remaining edge set $M'$ satisfies the requirement $\deg_{M'}(C)=1$ for every active red cluster $C$ with
$\deg_{\Gactive}(C)>0$.
This step costs $O(n\polylog n)$ rounds and $O(\polylog n)$ energy with high
probability.

Finally, to inform the blue endpoints of the matched edges and to
allow the endpoints of each such edge to exchange $O(\polylog n)$ bits of
information, we run $\amc$ over these edges.
Since the maximum degree of $\Gactive$ is at most $s$, the resulting edges
form a $(1,s)$-component--node matching.
Consequently, this step costs $O(n\polylog n)$ rounds and
$O(s\polylog n)$ energy with high probability.
\end{proof}

We now show how to derive \Cref{lem-matching-fast} from
\Cref{lem:matching_impl2,lem:deg_reduce2,lem:base_case}.

\begin{proof}[Proof of \Cref{lem-matching-fast}]
The algorithm proceeds as follows.
\begin{enumerate}
    \item Initialize $M=\emptyset$.
    \item For $k=n,n/2,n/4,\ldots$, run $\matching(d',k)$ for
    $O(\log^2 n)$ iterations, stopping once $k\le d'$.
    \item Compute a $(1,d')$-component--node matching $M'$ such that
    $\deg_{M'}(C)=1$ for every active red cluster $C$  with
$\deg_{\Gactive}(C)>0$ using 
    \Cref{lem:base_case}.
    \item Return $M\cup M'$.
\end{enumerate}

By \Cref{lem:base_case}, we have $\deg_{M'}(C)=1$ for every active red
cluster $C$ with $\deg_{\Gactive}(C)>0$.
Therefore, when treating $M\cup M'$ as the current matching, all such red
clusters become inactive, and the maximum degree of $\Gactive$ becomes zero.
By \Cref{obs:almost}, this implies that $M\cup M'$ is $d$-almost-maximal.

We next verify that $M\cup M'$ is a $(1,2d)$-component--node matching.
By the description of $\matching(d',k)$ and \Cref{clm:selection}, throughout
the execution we maintain $\deg_M(v)\le d+0.01d<2d$ for every blue node $v$
and $\deg_M(C)\le 1$ for every red cluster $C$.
Since only active nodes participate in the construction of $M'$, we further
maintain
\[
\deg_{M \cup M'}(v)\le d+\max\{0.01d,d'\}=1.01d<2d
\]
for every blue node $v$, while still ensuring $\deg_{M\cup M'}(C)\le 1$ for every red
cluster $C$.
Thus, $M\cup M'$ is a $(1,2d)$-component--node matching.
Moreover, the required information exchange occurs either during the matching
phase in which edges are added or during the construction of $M'$.

There are $O(\log n)$ values of $k$, and for each we execute
$\matching(d',k)$ for $O(\log^2 n)$ iterations.
By \Cref{lem:matching_impl2}, each execution costs $O(n\polylog n)$ rounds
and $O(d\polylog n)$ energy.
By \Cref{lem:base_case}, the construction of $M'$ also costs
$O(n\polylog n)$ rounds and $O(d\polylog n)$ energy.
Hence, the claimed round and energy complexities follow.
\end{proof}

\subsection{Analysis}

We now prove \Cref{lem:deg_reduce2}.
Similar to the proof of \Cref{lem:deg_reduce}, we consider the following quantity of a red cluster
$C$:
\[
\frac{d'}{k}\sum_{u\in C}\deg_{\Gactive}(u).
\]
Intuitively, this quantity estimates the expected number of matched edges
incident to $C$ during one execution of $\matching(k)$.
A red cluster is called \emph{small} if 
$\frac{d'}{k}\sum_{u\in C}\deg_{\Gactive}(u) \leq \frac{1}{100}$; otherwise, it is \emph{large}.

\begin{observation}[Small $\rightarrow \, \deg_{\Gactive}(u) \leq k/2$]\label{obs:progress2}
If a red cluster $C$ is small, then every node $u\in C$ has degree at most
$k/2$ in $\Gactive$.
\end{observation}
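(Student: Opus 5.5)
The claim follows from the definition of small plus the fact that every node's degree is a single term of the sum. Let $C$ be a small red cluster, so
\[
\frac{d'}{k}\sum_{w\in C}\deg_{\Gactive}(w) \le \frac{1}{100}.
\]
All terms are nonnegative. Hence for any fixed $u\in C$ we have $\deg_{\Gactive}(u)\le \sum_{w\in C}\deg_{\Gactive}(w)$. Combining the two gives
\[
\deg_{\Gactive}(u)\le \frac{k}{100\,d'}.
\]

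It remains to compare $\frac{k}{100d'}$ with $k/2$. The parameter $d'$ is a positive integer; indeed $d'\ge 1$ is required in the definition of $\matching(d',k)$, and moreover $d'\in\omega(\log n)$. Therefore
\[
\frac{k}{100\,d'}\le \frac{k}{100} < \frac{k}{2},
\]
which is the claim. The argument mirrors \Cref{obs:progress}, with the factor $d'/k$ replacing $1/(100k)$. The only point to check is $d'\ge 1$, which holds by the constraint $1\le d'<k$ on the parameters. There is no real obstacle: the bound is even stronger than required, namely $\deg_{\Gactive}(u)\le k/(100d')$.
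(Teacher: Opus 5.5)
Your proposal is correct and follows the paper's own argument: bound the single term $\deg_{\Gactive}(u)$ by the whole sum to get $\deg_{\Gactive}(u)\le k/(100d')$, then use $d'\ge 1$ to conclude $\deg_{\Gactive}(u)\le k/2$. You make the use of $d'\ge 1$ explicit by citing the constraint $1\le d'<k$, which the paper's proof leaves implicit.
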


\begin{proof}
Since
\[
\frac{d'}{k}\sum_{u\in C}\deg_{\Gactive}(u)\le \frac{1}{100},
\]
we have
\[
\deg_{\Gactive}(u)\le \frac{k}{100 d'}\le \frac{k}{2}
\]
for every $u\in C$.
\end{proof}
 
\paragraph{Structure of the analysis.}
Throughout, assume that the maximum degree of $\Gactive$ is at most $k$.
Similar to the proof of \Cref{lem:deg_reduce}, the analysis proceeds in two parts.
In Part~1, we show that after $O(\log n)$ executions of $\matching(k)$, all
active red clusters become small or inactive with high probability.
In Part~2, assuming all active red clusters are small, we show that further $O(\log^2 n)$
executions of $\matching(k)$ reduce the degrees of active blue nodes to at most $k/2$ with high probability.

\begin{lemma}[Part 1]\label{lem_part11}
Suppose the maximum degree of $\Gactive$ is at most $k$.
There exists a constant $p_0>0$ such that for every red cluster $C$
that is active and large at the beginning of an execution of $\matching(d', k)$,
with probability at least $p_0$, cluster $C$ becomes inactive or small by the
end of $\matching(k)$.
\end{lemma}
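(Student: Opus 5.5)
We mirror the proof of \Cref{lem_part1}: show that in a single iteration of the matching phase of $\matching(d',k)$, the large cluster $C$ gains an incident matched edge with probability $\Omega(1/k)$. Over the $k$ independent iterations, $C$ then gains at least one edge with constant probability. The unmatching phase keeps exactly one such edge, so $C$ ends with $\deg_M(C)=1$ and becomes inactive.

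Let $E_C$ be the $\Gactive$-edges leaving $C$. Largeness gives $|E_C| > k/(100d')$. As before, we restrict to a subset $F\subseteq E_C$ with $k/(100d') < |F| \le k/(100d') + k$. Since $d'\ge 1$ and $d'<k$, we have $|F|\le 2k$, but the relevant scale is $|F|\approx k/d'$. It is therefore cleaner to choose $F$ with $|F|=\Theta(k/d')$, which is possible when $k/(100d')\ge 1$. If $k/(100d')<1$, then a single edge makes $C$ large, and we take $|F|=1$.

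For $e=\{u,v\}\in F$ with $u\in C$ red, let $X_e$ indicate that $u$ and $v$ are sampled and no other blue neighbor of $u$ is sampled. Only blue neighbors matter in the rule of $\matching(d',k)$. Then
\[
\Pr[X_e=1]=\frac{d'}{k}\cdot\frac1k\cdot\Bigl(1-\frac1k\Bigr)^{\deg(u)-1}\ge \frac{d'}{ek^2}.
\]
Hence $\mu=\E[\sum_{e\in F}X_e]\ge |F|d'/(ek^2)=\Omega(1/k)$, and also $\mu\le |F|d'/k^2 = O(1/k)\le 1$.

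For the second moment, take distinct $e,f\in F$. If they share the red endpoint $u$, then $X_eX_f=0$, since $u$ would have two sampled blue neighbors. If they share only the blue endpoint $v$, then $\Pr[X_eX_f=1]\le (d'/k)^2(1/k)$. If they are disjoint, the bound is $(d'/k)^2(1/k)^2$. Since each blue $v$ has at most $k$ neighbors, the number of pairs sharing a blue endpoint is at most $|F|\cdot k$. This gives
\[
\E[X^2]\le \mu + |F|k\cdot\frac{d'^2}{k^3} + |F|^2\frac{d'^2}{k^4}.
\]
With $|F|=\Theta(k/d')$, the last term is $O(1/k^2)=O(\mu^2)$. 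The middle term is $O(d'/k^{2}\cdot k)\cdot(1/d')\cdot d'$; more precisely it equals $|F|d'^2/k^2=\Theta(d'/k)$. This is the main obstacle: it can exceed $\mu=\Theta(1/k)$ by a factor of $d'$.

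We fix this by counting blue endpoints instead of edges. Let $Y$ be the indicator that some edge of $F$ is added, and use the events $\mathcal{G}_v$ that $v$ is sampled and some red neighbor $u\in C$ of $v$ via $F$ is sampled with $v$ its unique sampled blue neighbor. Let $B$ be the set of blue endpoints of $F$, with $x_v$ the number of $F$-edges at $v$. Then
\[
\Pr[\mathcal{G}_v]\ge \frac1k\Bigl(1-\bigl(1-\tfrac{d'}{ek}\bigr)^{x_v}\Bigr)\ge \frac1k\cdot\min\Bigl(1,\frac{x_vd'}{ek}\Bigr)\cdot c.
\]
Two such events for distinct $v,v'$ require both to be sampled, so $\Pr[\mathcal{G}_v\cap\mathcal{G}_{v'}]\le 1/k^2$. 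Choosing $F$ so that $\sum_v \min(1,x_vd'/k)=\Theta(1)$ makes the first moment $\Theta(1/k)$ and the pairwise term $O(|B|^2/k^2)$. This needs $|B|=O(1)$, or more carefully the pairwise term $\sum \Pr[\mathcal G_v]\Pr[\mathcal G_{v'}]$-type bound, $\le \mu^2\cdot O(1)$. The Paley--Zygmund inequality $\Pr[\sum \mathbf 1_{\mathcal{G}_v}\ge 1]\ge \mu^2/\E[(\sum \mathbf{1}_{\mathcal G_v})^2]$ then yields $\Omega(1/k)$ per iteration. Amplifying over $k$ independent iterations gives a constant $p_0$, exactly as in \Cref{lem_part1}. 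I expect calibrating $F$ (or $B$) so that the second moment is $O(\mu)$ to be the delicate step.
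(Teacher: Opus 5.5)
As written, the proposal has a genuine gap: it never establishes $\E[X^2]=O(\mu)$. You abandon the edge-indexed sum and switch to blue endpoints, but that route is left unfinished (you call it ``the delicate step''). The ingredients you give do not close it, for two reasons. First, summing $\Pr[\mathcal{G}_v\cap\mathcal{G}_{v'}]\le 1/k^2$ over pairs gives $|B|^2/k^2$. This is $O(1/k)$ only when $|B|=O(\sqrt{k})$, while $|B|$ can be as large as $|F|=\Theta(k/d')$. Second, the bound $\Pr[\mathcal{G}_v]\ge\frac1k\bigl(1-(1-\frac{d'}{ek})^{x_v}\bigr)$ treats the events ``$u$ is sampled and $v$ is its unique sampled blue neighbor'' as independent over $u$. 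They are in fact positively correlated through shared blue neighbors, so this product form is not a valid lower bound. More importantly, the obstacle that made you leave the edge-based computation does not exist; it is an artifact of the loose count $|F|\cdot k$. An ordered pair $(e,f)$ with $e\neq f$ has both edges in $F$, so the number of pairs sharing a blue endpoint is $\sum_v x_v(x_v-1)\le |F|^2$.

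With the correct count, your first approach goes through directly. Every ordered pair $e\neq f$ in $F$ has $\Pr[X_eX_f=1]\le (d'/k)^2\cdot(1/k)$, and this probability is $0$ if they share the red endpoint. Hence
\[
\E[X^2]\;\le\;\mu+|F|^2\cdot\frac{d'^2}{k^3}.
\]
Take $|F|=\lceil k/(100d')\rceil$, which is at most $|E_C|$. Then $\mu\ge |F|d'/(ek^2)\ge \frac{1}{100ek}$. Either $|F|=1$ (no cross terms), or $|F|<k/(50d')$, in which case the cross term is below $\frac{1}{2500k}\le\frac{e}{25}\mu$. So $\E[X^2]\le 2\mu$ and $\Pr[X\ge 1]\ge \mu/2\ge \frac{1}{200ek}$. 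Your amplification over the $k$ iterations, plus the unmatching phase, then completes the proof.

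This corrected argument differs mildly from the paper's. The paper uses red-node indicators $Y_u$ on a minimal prefix $U$ of $C$ with $k/(100d')\le S_U<k/(50d')$. It bounds cross terms via $\Pr[Y_u=Y_w=1]\le (d'/k)^2 d_u/k$ and $|U|\le S_U$. It also needs a separate easy case for a single node of degree at least $k/(100d')$, which would overshoot the prefix threshold. Indexing by edges, which have unit weight, lets you size $F$ exactly and avoids that case split.
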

\begin{proof}
Fix an active red cluster $C$ that is \emph{large} at the beginning of an
execution of $\matching(d',k)$.
Let
\[
S \;\coloneqq\; \sum_{u\in C}\deg_{\Gactive}(u),
\]
i.e., the number of edges of $\Gactive$ with exactly one endpoint in $C$.
By the definition of largeness,
\begin{equation}\label{eq:large_cluster_S}
\frac{d'}{k}\cdot S \;>\; \frac{1}{100}
\qquad\Longrightarrow\qquad
S \;>\; \frac{k}{100\,d'}.
\end{equation}

\paragraph{Easy case.}
{
We first consider the case in which $C$ contains a node of very high
degree in $\Gactive$. Suppose there is a node $u^\star\in C$ with
}
\[
{
d_{u^\star}\coloneqq \deg_{\Gactive}(u^\star)\ge \frac{k}{100d'}.
}
\]
{
Then, in any fixed iteration of the matching phase, the probability that
$u^\star$ is sampled and has exactly one sampled blue neighbor is at least
}
\[
{
\frac{d'}{k}\cdot \frac{d_{u^\star}}{k}
\cdot \Bigl(1-\frac1k\Bigr)^{d_{u^\star}-1}
\ge
e^{-1}\cdot \frac{d'd_{u^\star}}{k^2}
\ge
\frac{e^{-1}}{100k},
}
\]
{
where we used $(1-\frac{1}{k})^{d_{u^\star}-1}\ge (1-\frac{1}{k})^{k-1}\ge e^{-1}$.}

{Hence over the $k$ independent iterations, with probability at least
$1 - \left(1 - \frac{e^{-1}}{100k}\right)^k \geq 1-e^{-e^{-1}/100}$, some edge incident to $u^\star$ is added during the
matching phase. In this case, after the unmatching phase, $C$ keeps one such
edge and becomes inactive. Thus the lemma follows in this case.
}

\paragraph{Restricting to a bounded-weight subset.}
{
It remains to consider the case in which every node $u\in C$ satisfies
}
\[
{
\deg_{\Gactive}(u)<\frac{k}{100d'}.
}
\]

We first explain the structure of the argument.
Our goal is to show that in a \emph{single} iteration of the matching phase,
cluster $C$ gains an incident matched edge with probability $\Omega(1/k)$.
The $k$ iterations of $\matching(d',k)$ then amplify this probability to a
constant.

To prove the single-iteration bound, we use a second-moment argument on the
number of matched edges incident to $C$.
For this purpose, it is convenient to work with a collection of indicator
variables whose total expectation is bounded by a constant.
If we consider all nodes of $C$, this expectation could be larger than~$1$,
which would complicate the analysis.
We therefore restrict attention to a carefully chosen subset of nodes whose
total degree already witnesses that $C$ is large, while keeping the aggregate
expectation under control.

{
Formally, order the positive-degree nodes of $C$ arbitrarily and let
$U\subseteq C$ be the \emph{minimum} prefix such that
}
\begin{equation}\label{eq:U_threshold}
{
S_U \;\coloneqq\; \sum_{u\in U}\deg_{\Gactive}(u)
\;\ge\; \frac{k}{100\,d'}.
}
\end{equation}
{
Such a set exists by \eqref{eq:large_cluster_S}.
Moreover, since every node in the present case has degree less than
$\frac{k}{100d'}$, the minimality of $U$ implies
}
\begin{equation}\label{eq:U_upper}
{
S_U
\;<\; \frac{k}{100\,d'}+\frac{k}{100\,d'}
\;=\; \frac{k}{50\,d'}.
}
\end{equation}
{
Since all nodes in $U$ have positive degree, we also have
}
\begin{equation}\label{eq:U_size}
{
|U|\le S_U.
}
\end{equation}

\paragraph{Random variables for one iteration.}
Fix one iteration $t$ of the matching phase.
In this iteration, each blue node is sampled independently with probability
$p_B\coloneqq 1/k$, and each red node is sampled independently with probability
$p_R\coloneqq d'/k$.

For each red node $u\in U$, let $Y_u$ be the indicator variable of the event
that
\begin{enumerate}
    \item $u$ is sampled, and
    \item among the $\deg_{\Gactive}(u)$ blue neighbors of $u$ in $\Gactive$,
    \emph{exactly one} is sampled in this iteration.
\end{enumerate}
If $Y_u=1$, then the unique sampled blue neighbor $v$ forms an edge $\{u,v\}$
that is added to the matching in iteration $t$.
Let
\[
X \;\coloneqq\; \sum_{u\in U} Y_u.
\]
Whenever $X\ge 1$, cluster $C$ gains an incident matched edge in iteration $t$.

\paragraph{First moment.}
Fix $u\in U$ and write $d_u\coloneqq \deg_{\Gactive}(u)\le k$.
Conditioned on $u$ being sampled, the number of sampled blue neighbors of $u$
is distributed as $\mathrm{Binomial}(d_u,p_B)$.
Therefore,
\begin{align}
\Pr[Y_u=1]
&= p_R \cdot \Pr\bigl[\mathrm{Binomial}(d_u,p_B)=1\bigr] \notag\\
&= p_R \cdot d_u \cdot p_B (1-p_B)^{d_u-1} \notag\\
&\ge \frac{d'}{k}\cdot \frac{d_u}{k}\cdot
\Bigl(1-\frac{1}{k}\Bigr)^{k-1}
\;\ge\; e^{-1}\cdot \frac{d'\,d_u}{k^2},
\label{eq:Yu_lower}
\end{align}
where we used $(1-\frac{1}{k})^{k-1}\ge e^{-1}$.

Let $\mu\coloneqq \E[X]$.
Summing \eqref{eq:Yu_lower} over $u\in U$ gives
\begin{equation}\label{eq:mu_lower}
\mu
=\sum_{u\in U}\Pr[Y_u=1]
\;\ge\; e^{-1}\cdot \frac{d'}{k^2}\cdot S_U
\;\ge\; \frac{e^{-1}}{100}\cdot \frac{1}{k},
\end{equation}
using \eqref{eq:U_threshold}.
On the other hand, by \eqref{eq:U_upper},
\begin{equation*}
\mu
\;\le\; \sum_{u\in U} p_R \cdot d_u \cdot p_B
= \frac{d'}{k^2}\cdot S_U
\;<\; \frac{1}{50k}
\; < \; 1.
\end{equation*}

\paragraph{Second moment.}
For distinct $u,w\in U$, the event $\{Y_u=Y_w=1\}$ requires that both $u$ and
$w$ are sampled and that each has exactly one sampled blue neighbor.
{
A \emph{necessary} condition for this event to occur is the following: first,
both $u$ and $w$ are sampled, which happens with probability
$\left(\frac{d'}{k}\right)^2$; second, among the blue neighbors of $u$ in
$\Gactive$, at least one is sampled, which happens with probability at most
$\frac{d_u}{k}$ by a union bound. Therefore,
}
\begin{equation*}
{
\Pr[Y_u=Y_w=1]
\;\le\;
\left(\frac{d'}{k}\right)^2
\cdot \frac{d_u}{k}.
}
\end{equation*}
Therefore,
\begin{align*}
\E[X^2]
&= \sum_{u\in U}\E[Y_u]
   + \sum_{\substack{u,w\in U\\ u\neq w}}\E[Y_uY_w] \notag\\
&{\le \mu
+ \frac{d'^2}{k^3}
\sum_{\substack{u,w\in U\\ u\neq w}}d_u} \notag\\
&{\le \mu
+ \frac{d'^2}{k^3}\,|U|\,S_U.}
\end{align*}
{
Using \eqref{eq:U_upper} and \eqref{eq:U_size}, we obtain
}
\[
{
\E[X^2]
\le
\mu+\frac{d'^2}{k^3}S_U^2
\le
\mu+\frac{1}{50^2}\cdot \frac1k.
}
\]
{
By \eqref{eq:mu_lower}, we have $\frac1k\le 100e\mu$.
Consequently,
}
\[
{
\E[X^2]
\le
\mu+\frac{1}{50^2}\cdot 100e\mu
=
\Bigl(1+\frac{e}{25}\Bigr)\mu
\le 2\mu.
}
\]

\paragraph{A success in one iteration.}
By the second-moment bound,
\[
{
\Pr[X\ge 1]
\;\ge\; \frac{\E[X]^2}{\E[X^2]}
\;\ge\; \frac{\mu}{2}
\;\ge\; \frac{e^{-1}}{200k}.
}
\]
{
Thus, setting $c\coloneqq e^{-1}/200$, we have
$\Pr[X\ge 1]\ge c/k$.
}

\paragraph{Over all $k$ iterations.}
The $k$ iterations of the matching phase are independent.
Hence,
\[
\Pr[\text{$C$ gains no incident matched edge during the matching phase}]
\le (1-\tfrac{c}{k})^k
\le e^{-c}.
\]
Thus, with probability at least $1-e^{-c}$, cluster $C$ gains at least one
incident matched edge during the matching phase.
In the unmatching phase, cluster $C$ keeps exactly one such edge whenever it
has at least one, so $C$ ends $\matching(d',k)$ with $\deg_M(C)=1$ and becomes
inactive.

Setting $p_0\coloneqq 1-e^{-c}$ completes the proof.
\end{proof}

\begin{lemma}[Part 2]\label{lem_part22}
Suppose the maximum degree of $\Gactive$ is at most $k$, and all active red
clusters are small.
For every active blue node
$v$ with $\deg_{\Gactive}(v)\ge k/2$ at the beginning of $\matching(d', k)$,
 node $v$ gains $\Omega(d')$ incident edges in the
matching in expectation by the end of $\matching(d', k)$.
\end{lemma}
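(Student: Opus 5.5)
We prove the expectation bound by linearity over the $k$ iterations of the matching phase, with an extra accounting for edges that the unmatching phase cancels. Fix an active blue node $v$ with $\deg_{\Gactive}(v)\ge k/2$. For each iteration $t$ and each red neighbor $u$ of $v$ in $\Gactive$, let $A_{u,t}$ be the event that, in iteration $t$, $v$ is sampled, $u$ is sampled, no other blue neighbor of $u$ is sampled, and no other node of $R(C(u))\coloneqq\{w\in C(u):\deg_{\Gactive}(w)>0\}$ is sampled. Here $C(u)$ denotes the cluster of $u$. If $A_{u,t}$ occurs, then $\{u,v\}$ is added in iteration $t$, and $C(u)$ gains no other matched edge in that iteration. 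Unlike in \Cref{lem_part2}, we do not require the red neighbors of $v$ to be unsampled, because $v$ may accept many edges at once. This is what produces the factor $d'$.

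\textbf{Step 1: lower bound on $\Pr[A_{u,t}]$.}
\begin{itemize}
\item The probability that $v$ is sampled is $1/k$.
\item The probability that $u$ is sampled is $d'/k$.
\item By \Cref{obs:progress2}, $u$ has at most $k/2$ blue neighbors. So the probability that none of the others is sampled is at least $(1-1/k)^{k/2}\ge e^{-1}$.
\item Since $C(u)$ is small, $|R(C(u))|\le \sum_{w\in C(u)}\deg_{\Gactive}(w)\le k/(100d')$. So the probability that no other node of $R(C(u))$ is sampled is at least $(1-d'/k)^{k/(100d')}\ge e^{-1/50}$, a constant.
\end{itemize}
All these events involve distinct nodes, so by independence $\Pr[A_{u,t}]\ge c\,d'/k^2$ for a constant $c$. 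Summing over the at least $k/2$ neighbors $u$ and the $k$ iterations, the expected number of pairs $(u,t)$ for which $A_{u,t}$ occurs is at least $c\,d'/2=\Omega(d')$.

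\textbf{Step 2: survival through the unmatching phase.} For a given pair $(u,t)$, the edge $\{u,v\}$ can be cancelled only if $C(u)$ gains a matched edge in some other iteration $t'\ne t$. Such an edge requires some node $w\in C(u)$ to be sampled together with some blue neighbor of $w$. By a union bound, this has probability at most $(d'/k)\cdot(1/k)\cdot\sum_{w\in C(u)}\deg_{\Gactive}(w)\le \frac{1}{100k}$ per iteration. A union bound over the $k-1$ other iterations gives probability at most $1/100$.

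This event depends only on the randomness of iterations $t'\neq t$, which is independent of $A_{u,t}$. Hence $\Pr[A_{u,t}\text{ and } \{u,v\}\text{ survives}]\ge \tfrac{99}{100}\Pr[A_{u,t}]$. If the unmatching rule keeps an arbitrary edge, we still get this bound: on the survival event, the edge from iteration $t$ is the only edge incident to $C(u)$, so it is kept.

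\textbf{Step 3: counting distinct edges.} Distinct surviving pairs $(u,t)$ correspond to distinct final edges of $v$. Indeed, each red cluster keeps at most one edge, and for a fixed $u$ the surviving event singles out a unique iteration. So by linearity of expectation, $v$ gains at least $\tfrac{99}{100}\cdot c\,d'/2=\Omega(d')$ incident edges in expectation.

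The main obstacle I expect is keeping the events independent and nonoverlapping, so that they can be summed without double counting. The point that needs the most care is Step 2: the survival event must be decoupled from $A_{u,t}$. This works because the survival event uses only other iterations, and the requirement that $u$ is the only sampled node of $R(C(u))$ handles same-iteration conflicts. If independence between iterations turned out to fail (for example, if $v$ became inactive mid-phase), the fallback is that, in this algorithm, blue nodes keep participating throughout the matching phase. So the iterations are genuinely independent, and the argument goes through.
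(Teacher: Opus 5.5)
Your proof is correct and is essentially the paper's argument. The paper's per-cluster event $\mathcal{F}_{C,t}$ is exactly the disjoint union over $u\in C\cap N(v)$ of your events $A_{u,t}$, and its survival event $\mathcal{H}_{C,t}$ (with the $99/100$ bound via independence across iterations) is your Step 2; the only difference is that the paper sums over clusters $C$, with each contributing at most one surviving edge, whereas you sum over pairs $(u,t)$ with an injectivity argument.
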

\begin{proof}
Fix an active blue node $v$ with $\deg_{\Gactive}(v)\ge k/2$ at the beginning
of an execution of $\matching(d',k)$.
We prove that after the \emph{entire} procedure (including the unmatching
phase), node $v$ gains $\Omega(d')$ incident edges in expectation.

Let $\mathcal{C}_v$ be the set of active red clusters adjacent to $v$ at the
beginning of this execution.  For each $C\in\mathcal{C}_v$, let
\[
x_C \;\coloneqq\; |\{u\in C : \{u,v\}\in E(\Gactive)\}|
\]
be the number of neighbors of $v$ inside $C$.  Clearly,
\begin{equation}\label{eq:sum_xC_fast}
\sum_{C\in\mathcal{C}_v} x_C \;=\; \deg_{\Gactive}(v) \;\ge\; \frac{k}{2}.
\end{equation}

For each cluster $C$, define the set of \emph{relevant} red nodes
\[
R(C)\;\coloneqq\;\{u\in C:\deg_{\Gactive}(u)>0\}.
\]
Since all active red clusters are small, we have
\[
\frac{d'}{k}\sum_{u\in C}\deg_{\Gactive}(u)\le \frac{1}{100}
\qquad\Longrightarrow\qquad
\sum_{u\in C}\deg_{\Gactive}(u)\le \frac{k}{100d'}.
\]
In particular,
\begin{equation}\label{eq:RC_size_bound}
|R(C)|\le \sum_{u\in C}\deg_{\Gactive}(u)\le \frac{k}{100d'}.
\end{equation}

\paragraph{A per-cluster success event that survives unmatching.}
Fix a cluster $C\in\mathcal{C}_v$.
For each iteration $t\in\{1,2,\dots,k\}$ of the matching phase, define the
event $\mathcal{F}_{C,t}$ that the following all hold in iteration~$t$:
\begin{enumerate}
    \item $v$ is sampled;
    \item exactly one neighbor $u\in C$ of $v$ is sampled;
    \item $u$ is the only sampled node in $R(C)$;
    \item among the blue neighbors of $u$, the only sampled one is $v$.
\end{enumerate}
If $\mathcal{F}_{C,t}$ occurs, then the edge $\{u,v\}$ is added to the matching
in iteration $t$, and moreover no other red node in $C$ can add an edge in the
same iteration (because $u$ is the only sampled node in $R(C)$).

Let $\mathcal{H}_{C,t}$ be the event that in every iteration
$t'\neq t$, cluster $C$ adds \emph{no} edge to the matching (during the
matching phase).  Then, if $\mathcal{F}_{C,t}\cap \mathcal{H}_{C,t}$ occurs,
cluster $C$ has exactly one incident matched edge in the entire matching phase,
namely $\{u,v\}$, and hence $\{u,v\}$ necessarily \emph{survives} the unmatching
phase.  In other words, under $\mathcal{F}_{C,t}\cap \mathcal{H}_{C,t}$,
cluster $C$ contributes a surviving edge incident to $v$.

\paragraph{Lower bounding $\Pr[\mathcal{F}_{C,t}]$.}
Let $p_B\coloneqq 1/k$ and $p_R\coloneqq d'/k$ be the sampling probabilities of
blue and red nodes.
There are $x_C$ choices for the unique sampled neighbor $u\in C$ of $v$.
Fix such a node $u$.
Independence implies
\begin{align}
\Pr[\mathcal{F}_{C,t}]
&\ge p_B \cdot x_C \cdot p_R \cdot
(1-p_R)^{|R(C)|-1}\cdot (1-p_B)^{\deg_{\Gactive}(u)-1}.
\label{eq:FCt_start}
\end{align}
By \Cref{obs:progress2}, $\deg_{\Gactive}(u)\le k/2$, so
\begin{equation*}
(1-p_B)^{\deg_{\Gactive}(u)-1}
\ge (1-\tfrac{1}{k})^{k/2}
\ge e^{-1},
\end{equation*}
where the last inequality holds for any $k\ge 2$.
Moreover, using \eqref{eq:RC_size_bound} we have
\[
(|R(C)|-1)\cdot p_R
\le |R(C)|\cdot \frac{d'}{k}
\le \frac{1}{100},
\]
and hence $(1-p_R)^{|R(C)|-1}\ge 1-\frac{1}{100}\ge \frac{99}{100}$.
Plugging these bounds into \eqref{eq:FCt_start} yields
\begin{equation}\label{eq:FCt_lower}
\Pr[\mathcal{F}_{C,t}]
\ge \frac{1}{k}\cdot x_C \cdot \frac{d'}{k}\cdot \frac{99}{100}\cdot e^{-1}
\;\ge\; c_0 \cdot \frac{x_C d'}{k^2},
\end{equation}
for a constant $c_0>0$.

\paragraph{Bounding the probability of additional edges from $C$.}
It remains to lower bound $\Pr[\mathcal{H}_{C,t}]$.
Fix an iteration $t'\neq t$.
Let $W_{C,t'}$ be the number of edges added to the matching in iteration $t'$
that are incident to cluster $C$.
By Markov's inequality,
\[
\Pr[W_{C,t'}\ge 1]\le \E[W_{C,t'}].
\]
We upper bound $\E[W_{C,t'}]$ by summing over all edges with a red endpoint in
$C$.
Fix such an edge $\{u,w\}$ with $u\in C$.
The edge can be added only if $u$ is sampled and $w$ is the unique sampled blue
neighbor of $u$, so
\[
\Pr[\{u,w\}\text{ is added in iteration }t']
\le p_R\cdot p_B
= \frac{d'}{k^2}.
\]
Summing over all edges with exactly one endpoint in $C$ gives
\begin{equation*}
\E[W_{C,t'}]
\le \frac{d'}{k^2}\sum_{u\in C}\deg_{\Gactive}(u)
\le \frac{d'}{k^2}\cdot \frac{k}{100d'}
= \frac{1}{100k}.
\end{equation*}
Therefore, $\Pr[W_{C,t'}\ge 1]\le \frac{1}{100k}$, and by a union bound over the
$k-1$ iterations $t'\neq t$,
\begin{equation}\label{eq:H_lower}
\Pr[\mathcal{H}_{C,t}]
= \Pr[\forall t'\neq t,\; W_{C,t'}=0]
\ge 1-\sum_{t'\neq t}\Pr[W_{C,t'}\ge 1]
\ge 1-\frac{k-1}{100k}
\ge \frac{99}{100}.
\end{equation}

\paragraph{A surviving edge from $C$ with probability $\Omega(x_C d'/k)$.}
Since $\mathcal{F}_{C,t}$ depends only on iteration $t$ and
$\mathcal{H}_{C,t}$ depends only on the other iterations, the two events are
independent. Thus, by \eqref{eq:FCt_lower} and \eqref{eq:H_lower},
\[
\Pr[\mathcal{F}_{C,t}\cap \mathcal{H}_{C,t}]
= \Pr[\mathcal{F}_{C,t}]\Pr[\mathcal{H}_{C,t}]
\ge \frac{99}{100}\cdot c_0 \cdot \frac{x_C d'}{k^2}.
\]
Let $\mathcal{G}_C$ be the event that cluster $C$ contributes a surviving edge
incident to $v$ after the unmatching phase. Since the events
$\{\mathcal{F}_{C,t}\cap \mathcal{H}_{C,t}\}_{t=1}^k$ are mutually exclusive
(because $\mathcal{H}_{C,t}$ forces $C$ to add no edges in all other
iterations), we have
\begin{equation*}
\Pr[\mathcal{G}_C]
\ge \sum_{t=1}^k \Pr[\mathcal{F}_{C,t}\cap \mathcal{H}_{C,t}]
\ge k\cdot \frac{99}{100}\cdot c_0 \cdot \frac{x_C d'}{k^2}
= c_1\cdot \frac{x_C d'}{k},
\end{equation*}
for a constant $c_1>0$.

\paragraph{Summing over clusters.}
Let $Z$ be the number of edges incident to $v$ that survive the unmatching
phase at the end of $\matching(d',k)$.
Since each red cluster keeps at most one incident edge, we can write
$Z=\sum_{C\in\mathcal{C}_v} Z_C$, where $Z_C\in\{0,1\}$ indicates whether $C$
keeps an edge incident to $v$. Hence,
\[
\E[Z]
=\sum_{C\in\mathcal{C}_v}\E[Z_C]
=\sum_{C\in\mathcal{C}_v}\Pr[\mathcal{G}_C]
\ge \sum_{C\in\mathcal{C}_v} c_1\cdot \frac{x_C d'}{k}
= c_1\cdot \frac{d'}{k}\sum_{C\in\mathcal{C}_v} x_C
\ge c_1\cdot \frac{d'}{k}\cdot \frac{k}{2}
= \Omega(d'),
\]
where we used \eqref{eq:sum_xC_fast} in the last inequality.
This proves the lemma.
\end{proof}

We now prove \Cref{lem:deg_reduce2} by combining \Cref{lem_part11} and
\Cref{lem_part22}.

\begin{proof}[Proof of \Cref{lem:deg_reduce2}]
The proof follows the same high-level structure as that of
\Cref{lem:deg_reduce}.
Assume throughout that the maximum degree of the current active subgraph
$\Gactive$ is at most $k$.

By \Cref{lem_part11} and a Chernoff bound, after $O(\log n)$ executions of
$\matching(d',k)$, all active red clusters become small or inactive with high probability.
By \Cref{obs:progress2}, this implies that every active red node $u$ satisfies
$\deg_{\Gactive}(u)\le k/2$.

Next, consider any active blue node $v$ with $\deg_{\Gactive}(v)\ge k/2$.
As long as $v$ remains active and has degree at least $k/2$,
\Cref{lem_part22} guarantees that each execution of $\matching(d',k)$ adds
$\Omega(d')=\Omega(d/\log n)$ new incident edges to $v$ in expectation.

By \Cref{clm:selection}, with high probability, the number of new edges added
to $v$ in any single execution is at most $0.01d$.
Consequently, by Markov’s inequality, in each execution of $\matching(d',k)$,
with probability $\Omega(1/\log n)$, node $v$ gains
$\Omega(d/\log n)$ new incident edges.
We call such an execution \emph{good} for $v$.

By a Chernoff bound, over $C\log^2 n$ executions of $\matching(d',k)$,
with probability at least $1-n^{-\Omega(C)}$, the number of good executions is
$\Omega(C\log n)$.
This implies that, unless $v$ becomes inactive or its degree in $\Gactive$
drops below $k/2$ earlier, node $v$ gains $\Omega(Cd)$ incident edges in total.

Since $v$ can receive at most $d$ incident edges in the matching before it
becomes inactive, choosing $C$ to be a sufficiently large constant ensures that
after $O(\log^2 n)$ executions of $\matching(d',k)$, with high probability,
every blue node that remains active satisfies $\deg_{\Gactive}(v)\le k/2$.
\end{proof}

\section{Conclusions}\label{sec:conclusion}

In this paper, we present a randomized distributed algorithm that constructs and executes an aggregation schedule in $O(n \polylog n)$ rounds and $O(\Delta^\ast \polylog n)$ energy. 
The energy complexity is nearly \emph{universally optimal}: for every graph and every aggregation schedule, there exists a node that must remain awake for at least $\Delta^\ast$ rounds. 
As a by-product, our algorithm also computes an $O(\log n)$-approximate minimum-degree spanning tree while achieving the same round and energy guarantees.

A natural direction for future work is to improve the round complexity. 
Our current $O(n \polylog n)$ bound is only nearly \emph{existentially optimal}: for some graphs, such as star graphs, we have $\Delta^\ast \in \Omega(n)$, implying that $\Omega(n)$ rounds are unavoidable, even for graphs with diameter two. 
However, for many other graph families, it may be possible to solve the aggregation problem in $o(n)$ rounds. 
This raises an intriguing question: can one achieve near \emph{universal optimality} not only in energy complexity, but also in round complexity?

Another natural direction is to study different notions of energy consumption, such as the \emph{node-average} energy complexity. 
Although $\Delta^\ast$ gives a universal lower bound on the worst-case node energy, there exist aggregation schedules with only $O(1)$ node-average energy complexity. 
Can one design a distributed algorithm whose node-average energy complexity nearly matches this bound?

\printbibliography

\appendix

\section{Algorithms and Analysis for the Subroutines}\label{app:cluster-subroutines}
In this appendix, we present algorithms and proofs for the subroutines stated in \Cref{sec:algo}. 
All constructions use the local-broadcast primitive $\sr$ (\Cref{lem:sr}) as the sole communication building block.

\paragraph{Random-delay scheduling of algorithms.}
We frequently need to execute a collection of subroutines
$\mathcal{A}_1,\mathcal{A}_2,\ldots,\mathcal{A}_t$, each of which internally invokes $\sr$. In other words, each algorithm $\mathcal{A}_i$ is simply a sequence of $\sr$-instances.
Let $V_i \subseteq V$ denote the set of nodes participating in $\mathcal{A}_i$, and assume that
$V_1,V_2,\ldots,V_t$ are pairwise disjoint.

For a single algorithm $\mathcal{A}$, let $\dilation(\mathcal{A})$ be an upper bound on the number of $\sr$ instances invoked by $\mathcal{A}$, and let $\energy(\mathcal{A})$ be an upper bound on the maximum number of times any node participates in an $\sr$ instance during $\mathcal{A}$. Furthermore, for any node $v \in V$, let $\congestion_v(\mathcal{A})$ be an upper bound on the number of $\sr$ instances in $\mathcal{A}$ where a neighbor of $v$ acts as a transmitter.

For a collection of algorithms, we define
\begin{align*}
\energy(\mathcal{A}_1,\ldots,\mathcal{A}_t) 
    &= \max_{i \in [t]} \energy(\mathcal{A}_i),\\
\congestion_v(\mathcal{A}_1,\ldots,\mathcal{A}_t) 
    &= \sum_{i \in [t]} \congestion_v(\mathcal{A}_i),\\
\congestion(\mathcal{A}_1,\ldots,\mathcal{A}_t) 
    &= \max_{v \in V} \congestion_v(\mathcal{A}_1,\ldots,\mathcal{A}_t),\\
\dilation(\mathcal{A}_1,\ldots,\mathcal{A}_t) 
    &= \max_{i \in [t]} \dilation(\mathcal{A}_i).
\end{align*}

\begin{lemma}[Random-delay scheduling]\label{lem:random_delay}
Assume that each set $V_i$ shares $O(\polylog n)$ bits of private randomness.
Then the subroutines $\mathcal{A}_1,\ldots,\mathcal{A}_t$ can be executed in parallel using
\[
O(\log^2 n)\cdot
\bigl(\congestion(\mathcal{A}_1,\ldots,\mathcal{A}_t)
+ \dilation(\mathcal{A}_1,\ldots,\mathcal{A}_t)\bigr)
\]
$\sr$ instances, while each node participates in at most
\[
O(\log n)\cdot
\energy(\mathcal{A}_1,\ldots,\mathcal{A}_t)
\]
$\sr$ instances, with high probability.
\end{lemma}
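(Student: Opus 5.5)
We use the classical random-delay technique of Leighton--Maggs--Rao, instantiated for $\sr$ instances and combined with a bundling step. Let $\mathsf{c}=\congestion(\mathcal{A}_1,\ldots,\mathcal{A}_t)$ and $\mathsf{d}=\dilation(\mathcal{A}_1,\ldots,\mathcal{A}_t)$. The nodes of each $V_i$ use their shared $O(\polylog n)$ bits to agree on a uniformly random delay $\tau_i\in\{0,1,\ldots,\mathsf{c}-1\}$. Algorithm $\mathcal{A}_i$ then executes its $j$-th $\sr$ instance in \emph{phase} $\tau_i+j$. Altogether there are $\mathsf{c}+\mathsf{d}$ phases. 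Because the randomness is shared within $V_i$, every node of $V_i$ knows in which phase each of its instances falls, so the original synchronization of $\mathcal{A}_i$ is preserved.

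Consider a fixed node $v$ and a fixed phase. The number of $\sr$ instances in that phase in which some neighbor of $v$ transmits is a sum, over $i$, of independent indicator-type contributions. Algorithm $\mathcal{A}_i$ places each of its $\congestion_v(\mathcal{A}_i)$ relevant instances into this phase with probability at most $1/\mathsf{c}$, so the expected count is at most $\sum_i\congestion_v(\mathcal{A}_i)/\mathsf{c}\le 1$. The contributions of distinct $i$ are independent. A single algorithm can place at most one of its instances in a given phase, since instance $j$ of $\mathcal{A}_i$ lands in phase $\tau_i+j$ and distinct $j$ give distinct phases. So the count is a sum of independent $\{0,1\}$ variables with mean at most $1$. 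A Chernoff bound together with a union bound over $n$ nodes and $\poly(n)$ phases shows that, with high probability, every node sees at most $\kappa=O(\log n)$ relevant instances in every phase.

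Next we implement one phase using $O(\log^2 n)$ sequential $\sr$ instances, following the argument in the proof of \Cref{lem-transform}. Each instance scheduled in the phase independently joins each of $R=\Theta(\kappa\log n)$ sub-slots with probability $\Theta(1/\kappa)$, using its algorithm's shared randomness. A node $v$ that is a receiver in instance $I$ needs a sub-slot in which $I$ joins and no other instance with a transmitting neighbor of $v$ joins. This happens with probability $\Theta(1/\kappa)$ per sub-slot, so with high probability over the $R$ sub-slots. Within such a sub-slot, $\sr$ of instance $I$ alone delivers correctly to $v$; if several sub-slots qualify, $v$ keeps the first delivered message, which respects the ``at least one sender'' semantics of $\sr$. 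This gives $O(\log^2 n)(\mathsf{c}+\mathsf{d})$ instances in total.

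For the energy bound, each instance joins $O(\log n)$ sub-slots with high probability by Chernoff. Each node therefore participates in $O(\log n)\cdot\energy(\mathcal{A}_1,\ldots,\mathcal{A}_t)$ $\sr$ instances. The main obstacle is the interaction in the sub-slot step. A node must be served when the only interfering activity comes from instances unrelated to it. Interference from instances in which $v$ is not itself involved matters only through neighbors' transmissions, and this is exactly what $\congestion_v$ counts. The care needed is to make sure that defining congestion through transmitting neighbors covers all the collisions that could block $v$.
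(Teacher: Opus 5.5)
Your proposal is correct and follows the paper's proof essentially step for step: uniform delays in $\{0,\ldots,\mathsf{c}-1\}$, a Chernoff bound showing that every node sees $O(\log n)$ interfering instances per phase, and simulation of each phase by $\Theta(\log^2 n)$ $\sr$ instances with join probability $\Theta(1/\log n)$. You justify the per-phase bound differently: each algorithm places at most one instance in any phase, so the load is a sum of independent Bernoulli variables, which is slightly cleaner than the paper's appeal to negative association. One detail is left implicit in both arguments: messages should carry an instance tag, so that a receiver $v$ discards receptions from other instances that join the same sub-slot rather than keeping ``the first delivered message.''
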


\begin{proof}
A naive parallel execution may cause collisions when multiple $\sr$ instances run simultaneously.
We avoid this by introducing \emph{random delays}.
Using the shared randomness within $V_i$, each algorithm $\mathcal{A}_i$ independently chooses a delay
\[
\tau_i \in \{0,1,\ldots,
\congestion(\mathcal{A}_1,\ldots,\mathcal{A}_t)-1\}
\]
uniformly at random.
The $k$th $\sr$ call of $\mathcal{A}_i$ is then scheduled at global time
\[
s = \tau_i + k .
\]
Consequently, the total schedule length is at most
\[
\congestion(\mathcal{A}_1,\ldots,\mathcal{A}_t)
+ \dilation(\mathcal{A}_1,\ldots,\mathcal{A}_t) - 1 .
\]

\paragraph{Interference at a node in a time step.}
Fix a node $v \in V$.
Across all algorithms, the number of $\sr$ instances in which a neighbor of $v$ acts as a transmitter while $v$ is not a listener is at most
$\congestion(\mathcal{A}_1,\ldots,\mathcal{A}_t)$.
Let $\mathcal{B}_v$ denote this set of potentially interfering instances.

Fix a global time step $s$, and let $\mathcal{I}_s$ be the set of $\sr$ calls scheduled at time $s$.
Because delays are chosen uniformly, each instance is scheduled at time $s$ with probability at most
$1/\congestion(\mathcal{A}_1,\ldots,\mathcal{A}_t)$.
For two instances, these events are either independent (if they belong to different algorithms) or negatively associated (if they belong to the same algorithm).

Let
\[
Z = |\mathcal{B}_v \cap \mathcal{I}_s|, 
\qquad \mu = \mathbb{E}[Z] \le 1 .
\]
By a Chernoff bound for negatively associated Bernoulli variables~\cite{PanconesiSrinivasan97,DubhashiRanjan98}, for any $\delta>0$,
\[
\Pr[Z \ge (1+\delta)\mu]
\le
e^{-\frac{\delta^2 \mu}{2+\delta}}.
\]
Hence $\Pr[Z \ge t] \in e^{-\Omega(t)}$, and in particular
$\Pr[Z \ge c \log n]\in n^{-\Omega(c)}$.
Therefore, with high probability,
\[
Z \in O(\log n).
\]

\paragraph{Simulating one global time step.}
Fix a time step $s$ and the set $\mathcal{I}_s$.
We simulate this step using
\[
R \in\Theta(\log^2 n)
\]
 $\sr$ instances, indexed by $j = 1,\ldots,R$.

For each $j$ and each call $A \in \mathcal{I}_s$, independently,
$A$ joins instance $j$ with probability $\Theta(1/\log n)$; otherwise all nodes of $A$ remain silent.
By a Chernoff bound, each call joins $O(\log n)$ instances with high probability.
Thus the simulation incurs an $O(\log^2 n)$ overhead in time and an $O(\log n)$ overhead in energy.
When $A$ joins instance $j$, all transmitters and listeners of $A$ participate in instance $j$ with their original roles.

Consider a fixed call $A \in \mathcal{I}_s$ and a listener $v$ of $A$.
At most $O(\log n)$ calls in $\mathcal{I}_s$ have transmitters in $N(v)$ and can therefore interfere at $v$.
In a given instance, the probability that $A$ joins while none of these potential interferers joins is $\Theta(1/\log n)$.
Over $R=\Theta(\log^2 n)$ instances, a Chernoff bound implies that, with high probability, there exists an instance where this good event occurs.
In that instance, the $\sr$ call of $A$ is simulated at $v$ exactly as if executed in isolation.

A union bound over all listeners of all calls shows that, every $\sr$ call in every schedule is simulated correctly, with high probability.

\paragraph{Complexity.}
There are at most
\[
\congestion(\mathcal{A}_1,\ldots,\mathcal{A}_t)
+ \dilation(\mathcal{A}_1,\ldots,\mathcal{A}_t) - 1
\]
global time steps, each simulated using $R \in \Theta(\log^2 n)$ $\sr$ instances.
Hence the total number of $\sr$ instances is
\[
O(\log^2 n)\cdot
\bigl(\congestion(\mathcal{A}_1,\ldots,\mathcal{A}_t)
+ \dilation(\mathcal{A}_1,\ldots,\mathcal{A}_t)\bigr).
\]

Moreover, during the simulation of a single time step, each call joins at most $O(\log n)$ instances with high probability, implying an $O(\log n)$ multiplicative overhead in per-node participation.
Therefore each node participates in at most
\[
O(\log n)\cdot
\energy(\mathcal{A}_1,\ldots,\mathcal{A}_t)
\]
$\sr$ instances, with high probability.
\end{proof}

\subsection{Down-Cast}

Recall the task $\upcast$: for each cluster $C\in\Vcal$, the center $c(C)$ holds an $O(\polylog n)$-bit message $M(C)$, and the goal is to deliver $M(C)$ to every node in $C$.
It suffices to consider the case where $M(C)$ has size $O(\log n)$ bits. The general case of an $O(\polylog n)$-bit message follows by repeating the same procedure $O(\polylog n)$ times, incurring only an additional $O(\polylog n)$ factor in both round and energy complexity.

\paragraph{Single-cluster schedule.}
We first design an algorithm $\mathcal{A}_C$ for a single cluster $C$.
For $\ell = 1,2,\ldots,\mathcal{D}$, we invoke $\sr(\mathcal{S},\mathcal{R})$ with
\[
\mathcal{S}=\{v\in C:\mathcal{L}(v)=\ell-1\},\qquad
\mathcal{R}=\{v\in C:\mathcal{L}(v)=\ell\},
\]
using $M(C)$ as the transmitted message.

Intuitively, the message propagates level by level throughout the cluster.
Before the $\ell$th iteration, all nodes at level $\ell-1$ have already received $M(C)$. By the correctness of $\sr$, the iteration ensures that all nodes at level $\ell$ receive $M(C)$ by the end of the $\ell$th iteration.

The algorithm performs $\mathcal{D}$ invocations of $\sr$. Each node participates in at most two instances, once as a listener and once as a transmitter, so the total cost  is $O(\mathcal{D}\polylog n)$ rounds and $O(\polylog n)$ energy.

\paragraph{Multiple clusters.}
We extend the procedure to all clusters simultaneously using \Cref{lem:random_delay}. For the collection $\{\mathcal{A}_C\}_{C\in\mathcal{V}}$, we may use
\[
\dilation\left(\{\mathcal{A}_C\}_{C\in\mathcal{V}}\right) = \mathcal{D}  \quad  \text{and} \quad
\congestion\left(\{\mathcal{A}_C\}_{C\in\mathcal{V}}\right) \le \Delta,
\]
since each node serves as a transmitter at most once, and
\[
\energy\left(\{\mathcal{A}_C\}_{C\in\mathcal{V}}\right) \le 2,
\]
since each node participates at most once as a transmitter and once as a listener.

Applying \Cref{lem:random_delay}, we obtain a schedule that executes $\mathcal{A}_C$ for all clusters using $O((\mathcal{D}+\Delta)\polylog n)$ rounds and $O(\polylog n)$ energy.

\subsection{Up-Cast} 

Recall the task $\upcast$: for each cluster $C\in\Vcal$, some nodes may hold an $O(\polylog n)$-bit message. The goal is for the center $c(C)$ to learn one such message if any exists, and otherwise to determine that no message is present.
It suffices to consider the case where the message size is $O(\log n)$. The general case of an $O(\polylog n)$-bit message can be handled by repeating the same procedure $O(\polylog n)$ times, which increases both the round and energy complexities by at most an $O(\polylog n)$ factor. To ensure that the message received by the cluster center originates from the same node across all repetitions, it is enough to fix the same randomness in every repetition.

\paragraph{Algorithm.} We design an algorithm $\mathcal{A}_C$ for a single cluster $C$ as follows. Let
\[
\msg(v)\in\{\bot\}\cup\{0,1\}^{O(\log n)}
\]
denote the message held by each node $v\in C$, where $\bot$ indicates that $v$ holds no message.
For $\ell = \mathcal{D}, \mathcal{D}-1, \ldots, 1$, we invoke $\sr(\mathcal{S},\mathcal{R})$ with
\[
\mathcal{S}=\{v\in C:\mathcal{L}(v)=\ell \text{ and } \msg(v)\neq\bot\},\qquad
\mathcal{R}=\{v\in C:\mathcal{L}(v)=\ell-1\},
\]
using $\msg(v)$ as the transmitted message for each sender $v\in\mathcal{S}$. Each receiver $u\in\mathcal{R}$ that successfully receives a message stores it as $\msg(u)$.

In this procedure, messages propagate level by level through the cluster, moving toward the cluster center. If at least one message is present initially, then the procedure guarantees that the cluster center receives a message by the end.

The analysis of round and energy complexities, as well as the extension to multiple clusters, are identical to those for $\downcast$.

\subsection{Across-Matching Communication}

For the task $\amc$, a $(1,k)$-component--node matching $M$ is given. For each blue node $v$, the node $v$ itself and all red nodes $u$ with $\{u,v\}\in M$ are provided with the following information:
\begin{itemize}
    \item $\ID(v)$,
    \item a shared random string $r(v)$ of length $O(\polylog n)$.
\end{itemize}

The goal of the task is that, for each edge $e=\{u,v\}\in M$, the endpoints $u$ and $v$ exchange $O(\polylog n)$ bits of information in both directions. As in $\upcast$ and $\downcast$, it suffices to consider the case where the message size is $O(\log n)$. The general case of an $O(\polylog n)$-bit message follows by repeating the same procedure $O(\polylog n)$ times, which increases both the round and energy complexities by at most an $O(\polylog n)$ factor.

\paragraph{Deciding whether $\deg_M(v) > 0$ for all blue nodes $v$.}
We first describe an $O(\mathcal{M}\polylog n)$-round and $O(\polylog n)$-energy procedure that allows all blue nodes to decide whether $\deg_M(v) > 0$. For a blue node $v$, let $\mathcal{A}_v$ be the algorithm consisting of one invocation of $\sr(\mathcal{S},\mathcal{R})$, where $\mathcal{S}$ is the set of all red nodes $u$ with $\{u,v\}\in M$ and $\mathcal{R}=\{v\}$. Each node $u\in\mathcal{S}$ transmits the message $\ID(v)$. At the end of $\mathcal{A}_v$, node $v$ can determine whether $\deg_M(v) > 0$ by checking whether it receives $\ID(v)$.

To perform this simultaneously for all blue nodes, we apply \Cref{lem:random_delay} to the collection of algorithms $\mathcal{A}_v$ over all blue nodes $v$, which is possible due to the shared randomness assumption. We may use
\[
\energy\left(\{\mathcal{A}_v\}_{v \text{ is a blue node}}\right) = 1,\qquad
\dilation\left(\{\mathcal{A}_v\}_{v \text{ is a blue node}}\right) = 1,
\]
and
\[
\congestion\left(\{\mathcal{A}_v\}_{v \text{ is a blue node}}\right) = \mathcal{M},
\]
since $\mathcal{M}$ equals the total number of transmitters across all instances of $\mathcal{A}_v$. Therefore, the overall cost is $O(\mathcal{M}\polylog n)$ rounds and $O(\polylog n)$ energy.

\paragraph{Algorithm for a single blue node.}
We now design an algorithm $\mathcal{A}_v'$ that solves the task restricted to the edges of $M$ incident to a blue node $v$ with $\deg_M(v) > 0$. Let
\[
U=\{u \text{ is red} : \{u,v\}\in M\}.
\]
Each node $u$ can locally determine whether it belongs to $U$ from the given information. Let $U^\ast$ denote the subset of nodes in $U$ that have not yet exchanged messages with $v$. Initially, $U^\ast \leftarrow U$. The algorithm $\mathcal{A}_v'$ repeatedly performs the following steps:

\begin{enumerate}
    \item Run $\sr(\mathcal{S},\mathcal{R})$ with $\mathcal{S}=U^\ast$ and $\mathcal{R}=\{v\}$, where each $u\in U^\ast$ transmits the message it wishes to send to $v$.
    \item Run $\sr(\mathcal{S},\mathcal{R})$ with $\mathcal{S}=\{v\}$ and $\mathcal{R}=U$. There are two cases for the message sent by $v$:
    \begin{itemize}
        \item If $v$ does not receive any message in the previous step, then $v$ concludes that it has already exchanged messages with all nodes in $U$, so $U^\ast=\emptyset$ and the task is complete. In this case, $v$ transmits a special symbol $\bot$ indicating termination.
        \item If $v$ receives a message $m$ from some $u\in U$ in the previous step, then $v$ transmits the pair consisting of $\ID(u)$ and the message that $v$ wishes to send to $u$. This message allows $u$ to conclude that it has successfully exchanged messages with $v$, so it can remove itself from $U^\ast$.
    \end{itemize}
\end{enumerate}

The algorithm requires $2(\deg_M(v)+1) \le 2(k+1)$ invocations of $\sr$.

\paragraph{Algorithm for all blue nodes.}
To solve $\amc$, it suffices to run $\mathcal{A}_v'$ for all blue nodes $v$ with $\deg_M(v) > 0$. We again apply \Cref{lem:random_delay} to the collection of algorithms $\mathcal{A}_v'$ over all such blue nodes. We may use
\[
\energy\left(\{\mathcal{A}_v'\}_{v \text{ is a blue node with } \deg_M(v) > 0}\right) = 2(k+1),
\]
\[
\dilation\left(\{\mathcal{A}_v'\}_{v \text{ is a blue node with } \deg_M(v) > 0}\right) = 2(k+1),
\]
and
\[
\congestion\left(\{\mathcal{A}_v'\}_{v \text{ is a blue node with } \deg_M(v) > 0}\right) = 4\mathcal{M},
\]
since the total number of $\sr$ instances across all $\mathcal{A}_v'$ is
\[
\sum_v 2(\deg_M(v)+1) \le \sum_v 4\deg_M(v) \le 4\mathcal{M},
\]
where the summation is over all blue nodes $v$ with $\deg_M(v) > 0$. Therefore, by \Cref{lem:random_delay}, the total cost is $O(\mathcal{M}\polylog n)$ rounds and $O(k\polylog n)$ energy.

\subsection{Merge} 

For $\merge$, we start with the same assumptions as in $\amc$. We first perform the following preprocessing step.

\paragraph{Preprocessing.}
The center of each blue cluster $C$ prepares the new $O(\polylog n)$-bit shared randomness $r^\ast(C)$ and the new color $\col^\ast(C)$ for the new clustering $\mathcal{V'}$, and disseminates them to all nodes in $C$ via $\downcast$. After that, we execute $\amc$ on the given matching $M$ so that, for every edge $e=\{u,v\}\in M$, the blue endpoint $v$ transmits its basic information to the red endpoint $u$. This basic information includes the existing good labeling $\mathcal{L}(v)$, the existing cluster identifier $\ID(C)$ of the cluster $C$ containing $v$, the new shared randomness $r^\ast(C)$, and the new color $\col^\ast(C)$. The good labeling and cluster identifier for the new clustering $\mathcal{V'}$ are inherited from the original clustering $\mathcal{V}$.

\paragraph{Updates within each red cluster.}
We now focus on a red cluster $C'$ that is incident to an edge in $M$. Specifically, let $e=\{u,v\}\in M$ with $u\in C'$ and $v\in C$. To complete the construction of the new clustering $\mathcal{V}'$, we must update the basic information for all nodes in $C'$. This includes updating the good labeling $\mathcal{L}$ and disseminating the values $\ID(C)$, $r^\ast(C)$, and $\col^\ast(C)$. For notational convenience, let $m^\ast$ denote an $O(\log n)$-bit message encoding $\ID(C)$, $r^\ast(C)$, and $\col^\ast(C)$.

This is accomplished using a variant of $\upcast$ followed by a variant of $\downcast$, both executed in parallel over all red clusters. We cannot treat these procedures as black boxes, since we must also update the good labeling in addition to disseminating information. To distinguish the new labeling from the old one, we denote the new labeling by $\mathcal{L}^\ast$. The original labeling $\mathcal{L}$ is still used to guide the communication: in the subsequent discussion, the level of a node $w \in C'$ is $\mathcal{L}(w)$ and not $\mathcal{L}^\ast(w)$ . Node $u$ locally sets $\mathcal{L}^\ast(u)=\mathcal{L}(v)+1$.

\paragraph{Modified up-cast.}
Initialize $U \leftarrow \{u\}$. During the process, any node that receives a message adds itself to $U$; all nodes in $U$ have already assigned their new label values. For $\ell=\mathcal{D},\mathcal{D}-1,\ldots,1$, perform the following steps:
\begin{enumerate}
    \item Run $\sr(\mathcal{S},\mathcal{R})$, where $\mathcal{S}$ is the set of nodes in level $\ell$ of $C'$ that belong to $U$, and $\mathcal{R}$ is the set of nodes in level $\ell-1$ of $C'$. Each node $x\in\mathcal{S}$ transmits a message containing $m^\ast$ and $\mathcal{L}^\ast(x)$.
    \item If a node $y$ receives a message from some $x$ during this invocation of $\sr$, then $y$ adds itself to $U$ and sets $\mathcal{L}^\ast(y)=\mathcal{L}^\ast(x)+1$.
\end{enumerate}
This process may not reach all nodes of $C'$ in general; however, it guarantees that the cluster center $c(C')$ is included in $U$.

\paragraph{Modified down-cast.}
To complete the update, we perform a modified version of $\downcast$. For $\ell=1,2,\ldots,\mathcal{D}$, perform the following steps:
\begin{enumerate}
    \item Run $\sr(\mathcal{S},\mathcal{R})$, where $\mathcal{S}$ is the set of nodes in level $\ell-1$ of $C'$, and $\mathcal{R}$ is the set of nodes in level $\ell$ of $C'$ that do not belong to $U$. Each node $x\in\mathcal{S}$ transmits a message containing $m^\ast$ and $\mathcal{L}^\ast(x)$.
    \item If a node $y$ receives a message from some $x$ during this invocation of $\sr$, then $y$ sets $\mathcal{L}^\ast(y)=\mathcal{L}^\ast(x)+1$.
\end{enumerate}

It can be shown by induction that before the $\ell$th iteration, all nodes in level $\ell-1$ of $C'$ have already obtained $m^\ast$ and their new labels $\mathcal{L}^\ast$. Hence, by the end of the process, all nodes in $C'$ are updated. The main difference from the modified $\upcast$ above is that here nodes already in $U$ are excluded from $\mathcal{R}$, since they have already received $m^\ast$ and their new labels.

\paragraph{Summary.}
Using the known bounds for $\amc$, $\upcast$, and $\downcast$, the entire procedure completes in $O(n\polylog n)$ rounds and $O(k\polylog n)$ energy.

\subsection{Approximate Counting} 
For $\acount$, each node $v$ holds a nonnegative integer $x_v$ such that $\sum_{v \in C} x_v \in n^{O(1)}$.
The goal is for every node in a cluster $C$ to learn an estimate $\widehat{X}(C)$ of
$X(C)=\sum_{v\in C} x_v$ satisfying
$X(C) \le \widehat{X}(C) \le (1+\epsilon)X(C)$.  
Let $X_{\max} \in n^{O(1)}$ be a known upper bound on the maximum possible value of $X(C)$.

It suffices to obtain the weaker guarantee
\[
\widehat{X}(C) \in (1\pm O(\epsilon))X(C),
\]
i.e., an approximation factor of $1\pm O(\epsilon)$ that allows both overestimation and underestimation. Once such an estimate is available, scaling $\widehat{X}(C)$ by a factor of $1+O(\epsilon)$ and redefining $\epsilon' \in \Theta(\epsilon)$ yields the desired one-sided bound $X(C) \le \widehat{X}(C) \le (1+\epsilon)X(C)$.

\paragraph{Algorithm.}
We focus on a single cluster $C$. Consider the geometric sequence of guesses
$X_i = (1+\epsilon)^i$ for $i=0,1,2,\ldots$, stopping once $X_i > X_{\max}$.
The number of guesses is $O(\epsilon^{-1}\log n)$.

For each guess $X_i$, we perform the following experiment independently
$m \in \Theta(\epsilon^{-2}\log n)$ times.
Execute an $\upcast$ in which each node $v\in C$ sends a message with probability
$1 - e^{-x_v/X_i}$. The probability that the cluster center receives no message in a single experiment is
\[
\prod_{v \in C} e^{-x_v / X_i} = e^{-X(C)/X_i}.
\]
If $X_i$ is close to $X(C)$, this probability is close to $1/e$. Thus, by repeating the experiment and estimating the fraction of trials with no message, we can test whether $X_i$ is a $(1\pm O(\epsilon))$ approximation of $X(C)$.

More precisely, the cluster center accepts a guess $X_i$ if the absolute difference between $1/e$ and the observed fraction of experiments with no message is at most $O(\epsilon)$. The hidden constant will be chosen large enough to ensure that at least one guess is accepted with high probability. After all experiments are completed, the cluster center sets $\widehat{X}(C)$ to any accepted $X_i$ (breaking ties arbitrarily) and performs a $\downcast$ to broadcast $\widehat{X}(C)$ to all nodes in $C$.

Overall, $\acount$ performs $O(\epsilon^{-3}\log^2 n)$ invocations of $\upcast$ and one invocation of $\downcast$. Using the round and energy costs of these primitives, the total cost is $O(\epsilon^{-3} n \polylog n)$ rounds and $O(\epsilon^{-3} \polylog n)$ energy.

\paragraph{Analysis.}
Fix a guess $X_i$. In one experiment, let $Y=1$ if the cluster center receives no message and $Y=0$ otherwise. Then $Y$ is a Bernoulli random variable with mean
\[
p_i := \Pr[Y=1] = e^{-X(C)/X_i}.
\]
After $m \in \Theta(\epsilon^{-2}\log n)$ independent repetitions, let
\[
\widehat{p}_i = \frac{1}{m}\sum_{j=1}^{m} Y_j
\]
be the empirical fraction of experiments with no message. A standard Chernoff bound yields
\[
\Pr\!\left[\,|\widehat{p}_i - p_i| > \alpha \,\right] \in 2e^{-\Omega(m\alpha^2)}.
\]
Choosing $\alpha \in \Theta(\epsilon)$ and taking the constant in $m$ sufficiently large ensures that
\[
|\widehat{p}_i - p_i| \in O(\epsilon)
\]
holds simultaneously for all guesses $i$ with probability at least $1-n^{-c}$ for any fixed constant $c$, by a union bound over the $O(\epsilon^{-1}\log n)$ indices.

Next, observe that if $X_i = (1+\delta)X(C)$ with $|\delta|\le 1/2$, then
\[
p_i = e^{-1/(1+\delta)} \in \frac{1}{e}\bigl(1+\Theta(\delta)\bigr),
\]
and therefore
\[
\bigl|p_i - 1/e\bigr| \in \Theta(|\delta|).
\]

We accept a guess $X_i$ if
\[
\bigl|\widehat{p}_i - 1/e\bigr| \le c_0 \epsilon,
\]
where $c_0>0$ is a sufficiently large constant.
Since $|\widehat{p}_i - p_i| \in O(\epsilon)$, choosing $c_0$ large enough guarantees that
(i) at least one guess is accepted, and
(ii) every accepted $X_i$ satisfies
\[
X_i \in (1\pm O(\epsilon))X(C).
\]

\subsection{Loneliness Testing} 

For $\ltest$, the goal is to determine whether $\mathcal{V}$ consists of exactly one cluster $C=V$, and to ensure that \emph{all nodes} learn the outcome.

Let $\ell \in O(\log n)$ be a known upper bound on the maximum length of a cluster identifier. We show how to solve $\ltest$ using $\ell$ pairs of $\sr$ invocations, followed by an $\upcast$ and a $\downcast$. For each $i = 1,2,\ldots,\ell$, run $\sr(\mathcal{S},\mathcal{R})$ where $\mathcal{S}$ is the union of all clusters $C$ whose $i$th bit of $\ID(C)$ equals $0$ and $\mathcal{R}=V\setminus \mathcal{S}$, and then run another $\sr$ with $\mathcal{S}$ and $\mathcal{R}$ swapped.

If $\mathcal{V}$ consists of exactly one cluster $C=V$, then no node receives any message during this procedure. Otherwise, in every cluster $C$, some node must receive a message. Indeed, there must exist an edge $\{u,v\}$ with $u\in C$ and $v\notin C$, and there exists an index $i$ such that the $i$th bit of $\ID(C)$ differs from the $i$th bit of $\ID(C')$, where $C'$ is the cluster containing $v$. During the corresponding iteration, one endpoint transmits while the other listens, causing a message to be received.

To complete the task, we perform an $\upcast$ in which each node sends a one-bit indicator of whether it received any message during the above procedure. The cluster center can then determine whether its cluster is the unique cluster. Finally, a $\downcast$ is executed to broadcast the result to all nodes.

Overall, the procedure requires $O(n\polylog n)$ rounds and $O(\polylog n)$ energy.
\end{document}